\definecolor{DarkRed}{rgb}{0.5,0.1,0.1}
\definecolor{DarkBlue}{rgb}{0.1,0.1,0.5}
\newtheorem{theorem}{Theorem}
\newtheorem{lemma}{Lemma}[section]
\newtheorem{cor}{Corollary}
\newtheorem{claim}[lemma]{Claim}
\newtheorem{conj}{Conjecture}
\newtheorem{definition}{Definition}
\newtheorem{problem}{Problem}
\newtheorem{rem}{Remark}
\newtheorem*{claim*}{Claim}
\newtheorem*{proposition*}{Proposition}
\newtheorem*{lemma*}{Lemma}
\newtheorem*{problem*}{Problem}
\newtheorem{mdresult}[theorem]{Theorem}
\newtheorem{mdinvariant}{Invariant}
\newcommand{\Erdos}{Erd\H{o}s}
\newcommand{\Renyi}{R\'{e}nyi}
\newcommand{\toShrink}{-.20cm}
\newcommand{\toShrinkEnu}{-.2cm}
\newcommand{\poly}{\mbox{\rm poly}}
\newcommand{\polylog}{\mbox{\rm  polylog}}
\DeclareMathOperator*{\Exp}{\ensuremath{{\mathbb{E}}}}
\DeclareMathOperator*{\Prob}{\ensuremath{\textnormal{Pr}}}
\renewcommand{\Pr}{\Prob}
\newenvironment{tbox}{\begin{tcolorbox}[
		enlarge top by=5pt,
		enlarge bottom by=5pt,
		 breakable,
		 boxsep=0pt,
                  left=4pt,
                  right=4pt,
                  top=10pt,
                  arc=0pt,
                  boxrule=1pt,toprule=1pt,
                  colback=white
                  ]
	}
{\end{tcolorbox}}
\newcommand{\LOCAL}{\ensuremath{\mathsf{LOCAL}}\xspace}
\newcommand{\CONGEST}{\ensuremath{\mathsf{CONGEST}}\xspace}
\newcommand{\CLIQUE}{\mathsf{CONGESTED}\text{-}\mathsf{CLIQUE}}
\newcommand{\ID}{{\operatorname{ID}}}
\newcommand{\mix}{\ensuremath{\tau_{\operatorname{mix}}}}
\newcommand{\algmargin}{\the\ALG@thistlm}
\newlength{\whilewidth}
\algnewcommand{\parState}[1]{\State%
  \parbox[t]{\dimexpr\linewidth-\algmargin}{\strut #1\strut}}
\def\vol{\mathrm{Vol}}
\def\degree{\mathrm{deg}}
\def\vlow{V_{\operatorname{low}}}
\def\ndel{n^{\delta}}
\def\ndell{n^{2\delta}}
\def\tp{\tilde{p}}
\def\tpi{\tilde{\pi}}
\def\trho{\tilde{\rho}}
\def\NG{N_{\mathcal{G}}}
\def\Remove#1{{\sf Remove}\text{-}#1}
\def\Split#1{{\sf Split}\text{-}#1}
\newcommand{\distance}{\operatorname{dist}}
\title{Distributed Triangle Detection via Expander Decomposition}
\author{Yi-Jun Chang\\ University of Michigan \\ \footnotesize \texttt{cyijun@umich.edu}\and
Seth Pettie\\ University of Michigan \\ \footnotesize \texttt{pettie@umich.edu} \and
Hengjie Zhang\\ IIIS, Tsinghua University\\
\footnotesize \texttt{zhanghj15@mails.tsinghua.edu.cn}
}
\date{}
\begin{document}
\maketitle
\begin{abstract}
We present improved distributed algorithms for
triangle detection and its variants in the $\CONGEST$ model. 
We show that Triangle Detection, Counting, 
and Enumeration can be solved in
$\tilde{O}(n^{1/2})$ rounds. 
In contrast, the previous state-of-the-art 
bounds for Triangle Detection and Enumeration were 
$\tilde{O}(n^{2/3})$ and $\tilde{O}(n^{3/4})$, respectively,
due to Izumi and LeGall (PODC 2017).

The main technical novelty in this work is a distributed graph partitioning algorithm. We show that in $\tilde{O}(n^{1-\delta})$ rounds we can partition the edge set of the network $G=(V,E)$ into three parts $E=E_m\cup E_s\cup E_r$ such that
\begin{itemize}
\item Each connected component induced by $E_m$ has 
minimum degree $\Omega(n^\delta)$ and conductance $\Omega(1/\polylog(n))$.  As a consequence the mixing time of a random walk within the component is $O(\polylog(n))$.
\item The subgraph induced by $E_s$ has arboricity at most $n^{\delta}$.
\item $|E_r| \leq |E|/6$.
\end{itemize}
All of our algorithms are based on the following generic framework, which we believe is of interest beyond this work. 
Roughly, we deal with the set $E_s$ by an algorithm that is efficient for low-arboricity graphs, 
and deal with the set $E_r$ using recursive calls.  
For each connected component induced by $E_m$, we are able to simulate 
$\CLIQUE$ algorithms with small overhead by applying a routing algorithm 
due to Ghaffari, Kuhn, and Su (PODC 2017) for high conductance graphs.
\end{abstract}

\thispagestyle{empty}
\setcounter{page}{0}

\newpage

\section{Introduction \label{se:intro}}

We consider Triangle Detection problems in distributed networks.
In the $\LOCAL$ model~\cite{Peleg00}, which has no limit on bandwidth,
all variants of Triangle Detection can be solved in exactly \emph{one} round of communication: every vertex $v$ simply announces its neighborhood $N(v)$ to all neighbors.  However, in models that take bandwidth into account, e.g., $\CONGEST$, Triangle Detection becomes significantly more complicated.
Whereas many graph optimization problems
studied in the $\CONGEST$ model are intrinsically ``global'' (i.e., require at least diameter time)~\cite{AgarwalRKP18,HuangNS17,GhaffariH16a,GhaffariH16b,KrinningerN18,Elkin17a,Elkin17b},
Triangle Detection is somewhat unusual in that it can,
in principle, be solved using only locally available information.

\paragraph{The $\CONGEST$ Model.}
The underlying distributed network
is represented as an undirected graph $G=(V,E)$, where each vertex corresponds to a computational device, and each edge corresponds to
a bi-directional communication link.
We assume each $v\in V$ initially knows some global parameters such as  $n=|V|$, $\Delta = \max_{v \in V} \deg(v)$, and $D = \text{diameter}(G)$.
Each vertex $v$ has a distinct $\Theta(\log n)$-bit identifier $\ID(v)$.
The computation proceeds according to synchronized \emph{rounds}.
In each round, each vertex $v$ can perform unlimited
local computation, and may send a distinct
$O(\log n)$-bit message to each of its neighbors.

Throughout the paper we only consider the randomized variant of $\CONGEST$.
Each vertex is allowed to generate unlimited local random bits,
but there is no global randomness.

\paragraph{The Congested Clique Model.}
The $\CLIQUE$ model is a variant of
$\CONGEST$ that allows all-to-all communication.
Each vertex initially knows its adjacent edges and
the set of vertex IDs, which we can assume w.l.o.g.~is $\{1,\ldots,|V|\}$.
In each round, each vertex transmits $n-1$ $O(\log n)$-bit
messages, one addressed to each vertex in the graph.

Intuitively, the $\CONGEST$ model captures two constraints in distributed computing: {\em locality} and {\em bandwidth},
whereas the $\CLIQUE$ model only focuses on the
{\em bandwidth} constraint.
This difference makes the two models behave very differently.
For instance, the {\em minimum spanning tree} (MST) problem can be solved in $O(1)$ rounds in the $\CLIQUE$~\cite{JurdzinskiN2018}, but its round complexity is $\tilde{\Theta}(D + \sqrt{n})$ in $\CONGEST$~\cite{PelegR2000,DasSarma2012}.

One of the main reasons that some problems can be solved efficiently in $\CLIQUE$ is due to the routing algorithm of Lenzen~\cite{Lenzen2013}. As long as each vertex $v$ is the source and the destination of at most $O(n)$ messages, we can deliver all messages in $O(1)$ rounds.
Using this routing algorithm~\cite{Lenzen2013} as a communication primitive, many parallel algorithms can be transformed to efficient $\CLIQUE$ algorithms~\cite{Censor2016}.
For example, consider the distributed matrix multiplication problem, where the input matrices are distributed to the vertices such that the $i$th vertex initially knows the $i$th row. The problem can be solved in the $\CLIQUE$  model in $\tilde{O}(n^{1/3})$ rounds over semirings, or $\tilde{O}(n^{1-(2/\omega) + o(1)}) = o(n^{0.158})$ rounds over rings~\cite{Censor2016}.

\paragraph{Distributed Routing in Almost Mixing Time.}
A {\em uniform lazy random walk} moves a token around an undirected graph by iteratively applying the following
process for some number of steps:
with probability $1/2$ the token stays at the
current vertex and otherwise it moves to a uniformly
random neighbor.
In a connected graph $G=(V,E)$,
the stationary distribution of a lazy random walk is
$\pi(u) = \deg(u)/(2|E|)$.
Informally, the {\em mixing time}
$\mix(G)$ of a connected graph $G$ is
the minimum number of lazy random walk steps needed
to get within a negligible distance of the stationary distribution.
Formally:

\begin{definition}[Mixing time~\cite{GhaffariKS17}]\label{def:mix}
Let $p_t^s(v)$ be the probability that after $t$ steps of
a lazy random walk starting at $s$, the walk lands at $v$.
The mixing time $\mix(G)$ is the minimum $t$ such that for all $s \in V$ and $v \in V$,
we have $|p_t^s(v) - \pi(v)| \leq \pi(v)/|V|$.
\end{definition}

Ghaffari, Kuhn, and Su~\cite{GhaffariKS17} proved that if each vertex $v$ is the source and the destination of at most $O(\deg(v))$ messages, then all messages can be routed to their destinations in $\mix(G) \cdot 2^{O(\sqrt{\log n \log \log n})}$ rounds. The $2^{O(\sqrt{\log n \log \log n})}$ factor has recently been improved~\cite{GhaffariL2018} to $2^{O(\sqrt{\log n })}$.
The implication of this result is that many problems that can be solved efficiently in the $\CLIQUE$ can also be solved efficiently in
$\CONGEST$, \emph{but only if $\mix(G)$ is small}.
In particular, MST can be solved in
$\mix(G) \cdot 2^{O(\sqrt{\log n})}$ rounds  in $\CONGEST$~\cite{GhaffariL2018}. This shows that the $\tilde{\Omega}(\sqrt{n})$ lower bound~\cite{PelegR2000,DasSarma2012}
can be bypassed in networks with small $\mix(G)$.

At this point, a natural question to ask is whether or not this line of research~\cite{GhaffariKS17,GhaffariL2018} can be extended to a broader class of graphs (that may have high $\mix(G)$), or even general graphs. The main contribution of this paper is to show that this is in fact doable, and based on this approach we improve the state-of-the-art algorithms for triangle detection, counting, and enumeration.

\paragraph{Graph Partitioning.}
It is
 well known  that {\em any} graph can be decomposed into connected components of
 conductance $\Omega(\epsilon/\log n)$ (and hence $\poly(\epsilon^{-1},\log n)$ mixing time)
 after removing at most an $\epsilon$-fraction of the edges~\cite{spielman2004nearly,PatrascuT07,AroraBS2015,Trevisan08}.
 Moshkovitz and Shapira~\cite{MoshkovitzS18} showed that this bound is essentially tight.
 In particular, removing any constant fraction $\epsilon$ of the edges, the remaining
 components have conductance at most $O((\log\log n)^2/\log n)$.

 A slightly weaker version of this graph partition can be constructed in near-linear time (for fixed $\epsilon$) in the sequential computation model~\cite{spielman2004nearly}. Their algorithm uses random walks to explore the graph locally to find a cut with edge sparsity $O(1/\log n)$.
 If the output cut is $S$, then the time spent
 is $\tilde{O}(\vol(S))$.\footnote{By definition, $\vol(S)=\sum_{v\in S}\deg(v)$.} By iteratively finding a sparse
 cut and removing it from the graph,
 in $\tilde{O}(|E|)$ time a graph partition is obtained in which all components have $\Omega(1/\polylog(n))$ conductance.

 This graph partition and the idea of local graph exploration have found many applications, such as solving linear systems~\cite{spielman2004nearly}, unique games~\cite{AroraBS2015,Trevisan08,RaghavendraS2010}, analysis of personalized PageRank~\cite{AndersenCL08},
 minimum cut~\cite{KawarabayashiT15}, and property testing~\cite{KumarSS2018,GoldreichR1999}.

 In this work, we show that a
 variant of this graph partition can be constructed
 efficiently in the $\CONGEST$ model.  The new twist
 is to partition the edge set in \emph{three} parts,
 rather than two (i.e., removed and remaining edges).

 \paragraph{Distributed Triangle Detection.} Many variants of the triangle
 detection problem have been studied in the literature~\cite{Censor2016,IzumiL17}.
 \begin{description}
 \item[Triangle Detection.] Each vertex $v$ reports a bit $b_v$, and $\bigvee_v b_v = 1$
 if and only if the graph contains a triangle.

 \item[Triangle Counting.] Each vertex $v$ reports a number $t_v$, and $\sum_v t_v$ is exactly the total number of triangles in the graph.

 \item[Triangle Enumeration.] Each vertex $v$ reports a list $L_v$ of triangles,
 and $\bigcup_v L_v$ contains exactly those triangles in the graph.

 \item[Local Triangle Enumeration.] It may be desirable that every triangle be reported
 by one of the three participating vertices.  It is required that $L_v$ only
 contain triangles involving $v$.
 \end{description}

 Dolev, Lenzen, and Peled~\cite[Remark 1]{DolevLP12} showed that Triangle
 Enumeration can be solved deterministically
 in $O(n^{1/3}/\log n)$ time in the $\CLIQUE$.
 Censor-Hillel et al.~\cite{Censor2016} presented an algorithm for
 Triangle Detection and Dounting in $\CLIQUE$ that takes
 $\tilde{O}(n^{1-(2/\omega) + o(1)}) = o(n^{0.158})$ time
 via a reduction to matrix multiplication.
Izumi and LeGall~\cite{IzumiL17} showed that in $\CONGEST$,
the Detection and Enumeration problems can be solved in $\tilde{O}(n^{2/3})$ and $\tilde{O}(n^{3/4})$ time, respectively.
They also proved that in both $\CONGEST$ and $\CLIQUE$,
the Enumeration problem requires $\Omega(n^{1/3}/\log n)$ time, improving
an earlier $\Omega(n^{1/3}/\log^3 n)$ bound of Pandurangan et al.~\cite{PanduranganRS18}.
Izumi and LeGall~\cite{IzumiL17} proved a large separation between the complexity
of the Enumeration and Local Enumeration problems.  If triangles
must be reported by a participating vertex, $\Omega(n/\log n)$ time
is necessary (and sufficient) in $\CONGEST/\CLIQUE$.
More generally, the lower bound on Local Enumeration
is $\Omega(\Delta/\log n)$ when the maximum degree is $\Delta$.

 In this paper, we show that Triangle Detection, Enumeration,
 and Counting can be solved in $\tilde{O}(n^{1/2})$ time in $\CONGEST$.
 This result is achieved by a combination of our new distributed graph partition
 algorithm, the multi-commodity routing of~\cite{GhaffariKS17,GhaffariL2018},
 and a randomized version of the $\CLIQUE$ algorithm for Triangle Enumeration of~\cite{DolevLP12,Censor2016}.
 We also show that when the input graph has high
 conductance/low mixing time,
 that Triangle Enumeration can be solved even faster, in $O(\mix(G)n^{1/3+o(1)})$ time.

\subsection{Technical Overview}
Consider a graph $G = (V,E)$. For a vertex subset $S$, we write $\vol(S)$ to denote $\sum_{v \in S} \deg(v)$. Note that by default the degree is with respect to the original graph $G$. We write $\bar{S} = V \setminus S$, and let $\partial(S) = E(S, \bar{S})$ be the set of edges $e = \{u,v\}$ with $u \in S$ and $v \in \bar{S}$.  The {\em sparsity} (or \emph{conductance}) of a cut $(S, \bar{S})$ is defined as $\Phi(S) = |\partial(S)| /  \min\{\vol(S), \vol(\bar{S})\}$.
The {\em conductance} $\Phi_G$ of a graph $G$
is the minimum value of $\Phi(S)$ over all vertex subsets $S$.

We have the following relation~\cite{JerrumS89} between the
mixing time $\mix(G)$ and conductance $\Phi_G$:
\[
\Theta\left(\frac{1}{\Phi_G}\right) \leq \mix(G) \leq \Theta\left(\frac{\log n}{\Phi_G^2}\right).
\]
In particular, if the inverse of the conductance
is $n^{o(1)}$, then the mixing time is also $n^{o(1)}$.

\paragraph{Our Graph Partition.}
We introduce a new, efficiently computable graph decomposition
that partitions the edge set into \emph{three} parts.

\begin{definition}\label{def:decomp}
An {\em $n^{\delta}$-decomposition} of a graph $G=(V,E)$
is a tripartition of the edge set $E=E_m\cup E_s\cup E_r$
satisfying the following conditions.
\begin{itemize}
\item [$(a)$] Each connected component induced by $E_m$ has
$O(\poly \log n)$ mixing time, and each vertex in the component has $\Omega(n^{\delta})$ incident edges in $E_m$. That is, for each vertex $v \in V$, either $\deg_{E_m}(v) = 0$ or $\deg_{E_m}(v) = \Omega(n^{\delta})$.

\item [$(b)$] $E_s=\bigcup_{v\in V}E_{s,v}$, where $E_{s,v}$ is a subset of
edges incident to $v$ and $|E_{s,v}|\leq n^{\delta}$.
We view $E_{s,v}$ as \emph{oriented} away from $v$.
The overall orientation on $E_s$ is \emph{acyclic},
which certifies that $E_s$ has arboricity\footnote{The arboricity of a graph is the minimum number $\alpha$ such that its edge set can be partitioned into $\alpha$ forests.}
at most $n^\delta$.
Each vertex $v$ knows $E_{s,v}$.

\item [$(c)$] $|E_r|\leq |E|/6$.
\end{itemize}
\end{definition}

Throughout the paper we assume $\delta \in (0,1)$ is a constant.
The main difference between our graph partition and the ones in other works~\cite{spielman2004nearly} is that we allow a set
$E_s$ that induces a low arboricity subgraph.
The purpose of having the set $E_s$ is to allow us to design an efficient $\CONGEST$ algorithm to construct the partition. In the sequential computation model, a common approach to find a graph partition is to iteratively find a vertex set $S$ with small $\Phi(S) = O(1/\polylog(n))$, and then
include the boundary edges $\partial(S)$
in the set $E_r$ and remove them from the current graph.
The number of iterations can be as high as
$\tilde{\Theta}(n)$ since we could have $|S| = \tilde{O}(1)$.

To reduce the number of iterations to at most $O(n^{1 - \delta})$, before we start to find $S$, we do a preprocessing step that removes low degree vertices in such a way that each vertex has degree at least $\Omega(n^{\delta})$ in the remaining graph. This guarantees that $|S| = \Omega(n^{\delta})$, and so the number of iterations can be upper bounded by $O(n^{1 - \delta})$, since the total number of vertices is $n$.

 \subsection{Additional Related Works}
 Drucker et al.~\cite{DruckerKO14} showed an $\Omega\left(\frac{n}{e^{\sqrt{\log n}} \log n}\right)$ lower bound for triangle detection in the {\em broadcast} $\CLIQUE$ model,
 where each vertex can only broadcast one message to all other vertices in each round.
 In the $\CONGEST$ model, lower bounds for finding a triangles and other motifs (subgraphs) has been studied in~\cite{AbboudCK2017,DruckerKO14,korhonenR2018,GonenO18}.
 The problem of detecting a $k$-cycle has an $\tilde{\Omega}(\sqrt{n})$ lower bound, for any even number $k \geq 4$~\cite{DruckerKO14,korhonenR2018}.
 Detecting a $k$-clique requires $\tilde{\Omega}(\sqrt{n})$ rounds for every $4 \leq k \leq \sqrt{n}$, and  $\tilde{\Omega}(\sqrt{n} / k)$ rounds for every $k \geq \sqrt{n}$~\cite{CzumajK18}.

 Any \emph{\underline{one-round}} algorithm for the {\em triangle membership problem} (each vertex decides whether it belongs to a triangle) requires messages of
 size $\Omega(\Delta \log n)$~\cite{AbboudCK2017}, which meets the trivial upper bound.
 The distributed triangle detection problem has also been studied in the property testing setting in the $\CONGEST$ model~\cite{Even17}.

 Das Sarma et al.~\cite{DasSarma15} first studied the distributed sparsest cut problem. Specifically,
 given two parameters $b$ and $\phi$, if there exists a cut of balance at least $b$ and conductance at most
$\phi$, their algorithm outputs a cut of conductance at most $\tilde{O}(\sqrt{\phi})$ in $\tilde{O}((n + (1/\phi))/b)$ rounds. This result was later improved by Kuhn and Molla~\cite{KuhnM15} to $\tilde{O}(D + 1/(b \phi))$.\footnote{Kuhn and Molla~\cite{KuhnM15} further claimed that the output cut of their algorithm has balance at least $b/2$, but this claim turns out to be incorrect (Anisur Rahaman Molla, personal communication, 2018).} Their algorithms are built upon techniques in~\cite{sarma2009sparse}.

The local graph clustering algorithm of Spielman and Teng~\cite{spielman2004nearly}
has been improved, both in terms of running time and
the quality of the cuts discovered;
see, e.g.,~\cite{AndersenCL08,AndersenGPT2016,SpielmanT13,KwokL12}.

\subsection{Organization}

In Section~\ref{se:partition} we present a new distributed algorithm
for partitioning a graph into expanding subgraphs and a low-arboricity subgraph.
A key subroutine for finding a sparse cut is described in
Section~\ref{se:sparsecut}.
Section~\ref{se:triangle} presents Triangle Enumeration algorithms
for both expanding graphs and general graphs.
We conclude in Section~\ref{sect:conclusion} with a
conjecture on the complexity of distributed graph partitioning.

\section{Algorithm for Graph Partitioning \label{se:partition}}

We first introduce some notation.
Let $\deg_H(v)$ be the degree of $v$ in the subgraph $H$,
or in the graph induced by edge/vertex set $H$.
Let $V(E^\ast)$ be the set of vertices induced by the edge
set $E^\ast \subseteq E$.
The {\em strong diameter} of a subgraph $H$ of $G$ is defined as $\max_{u,v \in H} \distance_H(u,v)$ and the {\em weak diameter} of  $H$  is $\max_{u,v \in H} \distance_G(u,v)$.

The goal of this section is to prove the following theorem.

\begin{theorem}\label{th:main}
Given a graph $G=(V,E)$ with $n=|V|$, we can find, w.h.p.,
an $n^{\delta}$-decomposition in $\tilde{O}(n^{1-\delta})$ rounds
in the $\CONGEST$ model.
\end{theorem}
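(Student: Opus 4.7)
The plan is to run an iterative procedure that gradually classifies every edge of $G$ into exactly one of $E_s$, $E_m$, or $E_r$, maintaining an ``active'' subgraph $G'$ on the undecided edges. The procedure alternates between a \textbf{Peel} primitive, which sweeps low-degree vertices' incident edges into $E_s$, and a \textbf{Cut} primitive, which uses the sparse-cut subroutine of Section~\ref{se:sparsecut} either to certify that a component has high conductance (placing its edges into $E_m$) or to split it along a sparse boundary (placing the boundary into $E_r$).

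The \textbf{Peel} primitive repeatedly identifies every vertex $v$ with $\deg_{G'}(v) < n^{\delta}$, orients its remaining incident edges away from $v$, moves them into $E_{s,v}$, and removes $v$ from $G'$; iterating this to a fixed point already yields condition~$(b)$ of Definition~\ref{def:decomp}, because the overall orientation follows the reverse peeling order and is therefore acyclic, and each $|E_{s,v}| < n^{\delta}$ by construction. A single peeling round is implementable in $O(1)$ CONGEST rounds, as each newly-peeled vertex only needs to announce itself to its neighbors. After each complete Peel, every remaining vertex of $G'$ has $\deg_{G'}(v) \geq n^{\delta}$.

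The \textbf{Cut} primitive then applies the subroutine of Section~\ref{se:sparsecut} in parallel to every connected component $C$ of $G'$: on each $C$ it either certifies $\Phi(C) = \Omega(1/\polylog n)$, in which case $C$ is frozen and all of its current edges are classified into $E_m$, or it returns a set $S \subsetneq C$ with $\Phi_C(S) = O(1/\polylog n)$, in which case $\partial_C(S)$ is moved into $E_r$ and the two sides remain in $G'$ for further processing. A short calculation using $\vol(S) = 2 e(S) + |\partial_C(S)|$ together with minimum degree $n^{\delta}$ and $|\partial_C(S)| \leq \vol(S)/\polylog n$ gives $e(S) \geq \vol(S)/3 \geq |S| n^{\delta}/3$; combining this with $e(S) \leq \binom{|S|}{2}$ forces the smaller side to satisfy $|S| = \Omega(n^{\delta})$. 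Consequently the total number of non-trivial Cut operations executed by the algorithm is $O(n^{1-\delta})$, since every split produces two vertex-disjoint pieces of size $\geq n^{\delta}$ carved out of a single $n$-vertex graph.

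This bookkeeping delivers the three properties of Definition~\ref{def:decomp}: each frozen component has mixing time $O(\polylog n)$ by the conductance side of Cheeger's inequality and inherits minimum degree $n^{\delta}$ from the most recent Peel, yielding~$(a)$; $(b)$ is enforced by the peeling orientation; and $(c)$ follows from the standard expander-decomposition potential argument -- each cut contributes at most $\vol(S)/\polylog n$ edges to $E_r$, a vertex can appear on the smaller side of a cut at most $O(\log n)$ times (since its side's volume halves on each such occurrence), and therefore $|E_r| \leq O(|E| \log n / \polylog n) \leq |E|/6$ for a sufficiently large polylog factor in the sparse-cut guarantee. The main obstacle I anticipate is the round-complexity analysis: one must show both that the sparse-cut subroutine from Section~\ref{se:sparsecut} runs in $\tilde{O}(n^{1-\delta})$ rounds per invocation, and that the $O(n^{1-\delta})$ invocations can be scheduled -- exploiting parallelism across vertex-disjoint active components and amortizing across the recursion depth -- so that the total round budget stays within $\tilde{O}(n^{1-\delta})$ despite the potentially unbalanced nature of the sparse cuts.
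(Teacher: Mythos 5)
Your high-level architecture (peel low-degree vertices into $E_s$, cut or certify components, recurse) matches the paper's, and your charging argument for $|E_r|\le |E|/6$ is essentially the paper's potential argument in a different guise. But there are three genuine gaps, and they are exactly where the work in this theorem lies. First, your \textbf{Peel} primitive iterates ``to a fixed point,'' and nothing bounds the number of iterations: one can build graphs in which each peeling round removes only a single vertex (removing it drops exactly one new vertex below the $n^{\delta}$ threshold), so the fixed point takes $\Omega(n)$ rounds --- far exceeding the budget. The paper's Low Degree subroutine (Lemma~\ref{le:lowdegree}) avoids this by terminating as soon as an iteration removes fewer than $n^{\delta}/2$ vertices (accepting the weaker guarantee $\deg > n^{\delta}/2$), and by pipelining the per-iteration counting over a BFS tree to get $O(D + (\text{\#removed})/n^{\delta})$ rounds, so that time spent peeling is paid for by vertices destroyed.

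Second, you apply the sparse-cut subroutine of Section~\ref{se:sparsecut} to \emph{every} component, but that subroutine (Lemma~\ref{le:lowconductance}) costs $O(D+\mathrm{poly}\log)$ rounds and is only affordable when the component's diameter is already $O(\log^2 m)$; a component of diameter, say, $n^{0.9}$ would blow the round budget on a single invocation. The paper handles high-diameter components by an entirely different mechanism (Lemmas~\ref{le:largediameter} and~\ref{le:highdiameter}): a BFS layering must contain a level whose edge count is a $1/(12\log m)$-fraction of the volume on its smaller side, and cutting there removes $\Omega(D\cdot n^{\delta})$ vertices from each side in $O(D)$ rounds. Third, the round-complexity amortization that you flag as ``the main obstacle'' is not a technicality to be deferred --- it is the theorem. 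Counting $O(n^{1-\delta})$ cut \emph{operations} does not bound the number of \emph{rounds}, because operations on a single vertex's nested sequence of components are sequential, and unbalanced cuts mean a vertex can participate in many of them. The paper's resolution is the invariant (C3-2): every phase that runs for $K$ rounds shrinks the component containing a surviving vertex by $\tilde{\Omega}(Kn^{\delta})$ vertices, so the running times along any vertex's root-to-leaf path in the recursion telescope to $\tilde{O}(n^{1-\delta})$. Without an invariant of this form tying time spent to progress made, the proof does not close.
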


The algorithm for Theorem~\ref{th:main} is based on repeated application of a black box algorithm $\mathcal{A}^\ast$, which is given a subgraph $G'=(V',E')$
of the original graph $G=(V,E)$, where $V' = V(E')$, $n'=|V'|$, and $m'=|E'|$.
In $\mathcal{A}^\ast$, vertices may halt the algorithm at different times.

\paragraph{Specification of the Black Box.}
The goal of $\mathcal{A}^\ast$ is, given $G'=(V',E')$,
to partition $E'$ into $E'=E_m'\cup E_s'\cup E_r'$ satisfying some conditions.
The edge set $E_m'$ is partitioned into $E_m' = \bigcup_{i=1}^t \mathcal{E}_i$.
We write $\mathcal{V}_i = V(\mathcal{E}_i)$ and
$\mathcal{G}_i = (\mathcal{V}_i, \mathcal{E}_i)$,
and define $S=V'\setminus \left( \bigcup_{i=1}^t \mathcal{V}_i \right)$.

\begin{description}
\item[(C1)] The vertex sets
$\mathcal{V}_1, \ldots, \mathcal{V}_t, S$ are disjoint and partition $V'$.
\item[(C2)] The edge set $E_s'$ can be decomposed as
$E_s'=\bigcup_{v\in S}E_{s,v}'$, where $E_{s,v}'$
is a subset of edges incident to $v$, viewed as oriented away from $v$.
This orientation is acyclic.
For each vertex $v$ such that $E_{s,v}'\neq \emptyset$,
we have $|E_{s,v}'|+ \deg_{E_m'}(v) \leq \ndel$.
Each vertex $v$ knows the set $E_{s,v}'$.
\item[(C3)] Consider a subgraph $\mathcal{G}_i=(\mathcal{V}_i,\mathcal{E}_i)$.
Vertices in $\mathcal{V}_i$ halt after the same
number of rounds, say $K$.
Exactly one of the following subcases will be satisfied.
\begin{enumerate}
    \item [{\bf (C3-1)}] All vertices in $\mathcal{V}_i$ have degree $\Omega(\ndel)$ in the subgraph $\mathcal{G}_i$,  each connected component of $\mathcal{G}_i$ has $O(\poly \log n)$ mixing time, and $K = O(\poly \log n)$. Furthermore, every vertex in $\mathcal{V}_i$ knows that they are in this sub-case.
    \item [{\bf (C3-2)}] $|\mathcal{V}_i|\leq n' -\tilde{\Omega}(K\ndel)$, and every vertex in $\mathcal{V}_i$ knows they are in this subcase.
\end{enumerate}
\item[(C4)] Each vertex $v \in S$ halts in $\tilde{O}(n'/\ndel)$ rounds.
\item[(C5)] The inequality
$E_r'\leq \Big(|E'| \log|E'| -\sum_{i=1}^t |\mathcal{E}_i| \log |\mathcal{E}_i|\Big)/
(6\log m)$ is met.
\item[(C6)] Each cluster $\mathcal{V}_i$ has a distinct identifier. When a vertex $v \in \mathcal{V}_i$ terminates, $v$ knows the identifier of  $\mathcal{V}_i$. If $v \in S$, $v$ knows that it belongs to $S$.
\end{description}

We briefly explain the intuition behind these conditions.
The algorithm $\mathcal{A}^\ast$ will be applied recursively to all
subgraphs $\mathcal{G}_i$ that have yet to satisfy the minimum degree
and mixing time requirements specified in Theorem~\ref{th:main} and Definition~\ref{def:decomp}.
Because vertices in different components halt at various times,
they also may begin these recursive calls at different times.

The goal of (C2) is to make sure that once a vertex $v$ has $E_{s,v}' \neq \emptyset$, the total number of edges added to $E_{s,v}$ cannot exceed $n^\delta$.
The goal of (C3) is to guarantee that the component size drops at a fast rate.
The idea of (C5) is that the size of $E_r'$ can be mostly charged to the number of the edges in the small-sized edge sets $\mathcal{E}_i$;
this is used to bound the size of $E_r$ of our graph partitioning algorithm.

Note that in general the strong diameter of a subgraph $\mathcal{G}_i$ can be much higher than the maximum running time of vertices in  $\mathcal{G}_i$, and it could be possible that $\mathcal{G}_i$ is not even a connected subgraph of $G$. However, (C6) guarantees that each vertex $v \in \mathcal{V}_i$ still knows that it belongs to $\mathcal{V}_i$. This property allows us to recursively execute  $\mathcal{A}^\ast$ on each subgraph $\mathcal{G}_i$.

\begin{lemma}\label{le:auxiliary}
There is an algorithm $\mathcal{A}^\ast$ that finds a partition $E'=E_m'\cup E_s'\cup E_r'$ meeting the above specification in the $\CONGEST$ model, w.h.p.
\end{lemma}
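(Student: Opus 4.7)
The plan for $\mathcal{A}^\ast$ is a distributed analogue of a Spielman--Teng style nibble procedure, interleaving a \emph{peeling phase} that builds up $S$ and $E_s'$ with a \emph{sparse-cut phase} that separates the high-degree remainder into clusters $\mathcal{V}_1,\ldots,\mathcal{V}_t$, diverting cut edges to $E_r'$.

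\textbf{Peeling phase.} In parallel, every vertex $v$ whose current degree in the active subgraph drops to at most $n^\delta$ declares itself absorbed into $S$, places all of its currently active incident edges into $E_{s,v}'$ oriented outward, and halts. Breaking ties by (peeling round, $\ID$) yields a total order on $S$-vertices and makes the orientation acyclic, since an edge is claimed by whichever endpoint peels first. At the moment $v$ freezes, $|E_{s,v}'|+\deg_{E_m'}(v)\le n^\delta$; afterwards $v$ is in $S$ (so $\deg_{E_m'}(v)=0$ since $E_m'$ only contains cluster-internal edges), preserving (C2). The peeling phase never runs for more than $\tilde O(n'/n^\delta)$ rounds, delivering (C4).

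\textbf{Sparse-cut phase.} Whenever an active connected component $C$ has every vertex of degree $>n^\delta$, I would invoke the sparse-cut subroutine of Section~\ref{se:sparsecut} on $C$. Within $K$ rounds the subroutine either (a) with $K=O(\polylog n)$ certifies $\Phi_C=\Omega(1/\polylog(n))$, in which case $C$ becomes a cluster $\mathcal{V}_i$ satisfying (C3-1) via the Jerrum--Sinclair bound on mixing time, or (b) outputs a set $T\subsetneq C$ of size $|T|=\tilde\Omega(Kn^\delta)$ with boundary $|\partial(T)|\le \vol(T)/\polylog(n)$. In case (b), $\partial(T)$ is placed in $E_r'$, $C\setminus T$ is declared a cluster reporting (C3-2) since $|C\setminus T|\le n'-|T|\le n'-\tilde\Omega(Kn^\delta)$, and peeling resumes on $T$ whose boundary degrees may have dropped below $n^\delta$. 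Cluster identifiers required by (C6) propagate along the exploration tree that the sparse-cut subroutine already maintains within its $K$-round window.

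\textbf{Main obstacle.} The subtlest piece is verifying the entropy-style budget in (C5),
\[
|E_r'|\le \frac{|E'|\log|E'|-\sum_{i=1}^t |\mathcal{E}_i|\log |\mathcal{E}_i|}{6\log m}.
\]
For each cut shipment I must charge $|\partial(T)|$ edges against the drop $|E_C|\log|E_C|-|E_T|\log|E_T|-|E_{C\setminus T}|\log|E_{C\setminus T}|$ in the potential. Writing $a=|E_T|$, $b=|E_{C\setminus T}|$, $c=|\partial(T)|$, concavity of $x\log x$ yields that this drop is at least $\min(a,b)\cdot\Omega(1)$, while the conductance guarantee forces $c=O(\min(a,b)/\polylog(n))$. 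Choosing the $\polylog$ factor in the subroutine to comfortably beat $6\log m$ absorbs the charge, yielding (C5). A secondary concern is keeping the peeling and sparse-cut phases consistent across clusters that halt at different rounds, which is handled by letting the BFS/exploration tree built by the subroutine serve as the broadcast medium for halting signals and cluster-ID announcements required by (C6).
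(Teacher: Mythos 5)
Your overall architecture (peel low-degree vertices into $S$/$E_s'$, then either certify expansion or ship a sparse cut to $E_r'$, with the entropy potential for (C5)) matches the paper's, and your (C5) sketch is essentially the paper's calculation. But there is a genuine gap in the sparse-cut phase: you assert that a single subroutine, run for $K$ rounds on a high-min-degree component, either certifies $\Phi=\Omega(1/\polylog n)$ with $K=O(\polylog n)$ or removes a set of $\tilde\Omega(Kn^\delta)$ vertices. The Low Conductance subroutine (Lemma~\ref{le:lowconductance}) does not provide this dichotomy: its round complexity is $O(D+\poly(\log n,1/\phi))$, where $D$ is the strong diameter of the component, while the cut it returns is only guaranteed to have $\Omega(n^\delta)$ vertices on each side (from the min-degree and sparsity), not $\tilde\Omega(Dn^\delta)$. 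On a component of diameter $D=n^{1-\delta}$ you would spend $\Omega(D)$ rounds per invocation while shrinking the component by only $\Theta(n^\delta)$ vertices, violating (C3-2)/(C4) and blowing the $\tilde O(n^{1-\delta})$ budget. The paper closes exactly this hole with a separate \emph{High Diameter} subroutine (Lemmas~\ref{le:largediameter} and~\ref{le:highdiameter}): when $D\ge 48\log^2 m$, a pigeonhole argument over BFS levels finds a level cut of sparsity $O(1/\log m)$ with $\min(|C|,|\bar C|)\ge \frac{D}{32}n^\delta$, so the $O(D)$ cost is paid for by the vertex loss; only components of diameter $O(\log^2 m)$ are fed to the random-walk nibble. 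Note also that making the BFS-level argument work requires the paper's first peeling step to remove only edges \emph{between two} low-degree vertices (so every few consecutive BFS levels still contain a high-degree vertex), a detail your "absorb all incident edges" peeling discards.

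A secondary gap is your claim that the peeling phase terminates in $\tilde O(n'/n^\delta)$ rounds. Cascading peeling can remove one vertex per iteration, and each iteration needs $\Omega(D)$ rounds of coordination to detect quiescence, so the naive bound is $O(D\cdot n')$, not $\tilde O(n'/n^\delta)$. The paper's Low Degree subroutine (Lemma~\ref{le:lowdegree}) gets the right bound by (i) running only on components of diameter $O(\log^2 m)$, (ii) terminating early once an iteration would peel fewer than $n^\delta/2$ vertices --- settling for minimum remaining degree $n^\delta/2$ rather than $n^\delta$ --- so that every continued iteration is charged to $\Omega(n^\delta)$ removed vertices, and (iii) pipelining iterations over the BFS tree with speculative rollback. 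Without the early-termination relaxation your runtime claim does not hold.
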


Assuming Lemma \ref{le:auxiliary},
we are now in a position to prove Theorem~\ref{th:main}.

\begin{proof}[Proof of Theorem~\ref{th:main}]
Let $\mathcal{A}^\ast$ be the algorithm for Lemma~\ref{le:auxiliary}.
Initially, we apply $\mathcal{A}^\ast$ with $G'=G$, and this returns a partition $E'=E_m'\cup E_s'\cup E_r'$.

For each  subgraph $\mathcal{G}_i$ in the partition output by an invocation of $\mathcal{A}^\ast$, do the following.
If $\mathcal{G}_i$ satisfies (C3-1), by definition it must have
$O(\poly \log n)$ mixing time, and all vertices in $\mathcal{G}_i$ have degree $\Omega(\ndel)$ in $\mathcal{G}_i$; we add the edge set $\mathcal{E}_i$ to the set $E_m$
and all vertices in $\mathcal{V}_i$ halt.
Otherwise we apply the algorithm recursively to $\mathcal{G}_i$,
i.e., we begin by applying $\mathcal{A}^\ast$ to $G' = \mathcal{G}_i$ to further partition its edges.
All recursive calls proceed in parallel,
but may begin and end at different times.
Conditions (C1) and (C6) guarantee that this is possible.
(Note that if $\mathcal{G}_i$ is disconnected,
then each connected component of $\mathcal{G}_i$
will execute the algorithm in isolation.)

Initially $E_r = \emptyset$ and $E_s = \emptyset$. After each invocation of $\mathcal{A}^\ast$, we update $E_r \gets E_r\cup E_r'$, $E_s \gets E_s\cup E_s'$, and $E_{s,u} \gets E_{s,u}\cup E_{s,u}'$ for each vertex $u$.

\paragraph{Analysis.} We verify that the three conditions of Definition~\ref{def:decomp} are satisfied.
First of all, note that each connected component of $E_m$ terminated
in (C3-1) must have $O(\poly \log n)$ mixing time, and all vertices in the component have degree $\Omega(\ndel)$ within the component.
Condition (a) of Definition~\ref{def:decomp} is met.
Next, observe that Condition (b) of Definition~\ref{def:decomp}
is met due to (C2). If the output of $\mathcal{A}^\ast$ satisfies that $E_{s,v}' \neq \emptyset$, then $|E_{s,v}|$ together with the number of remaining incident edges (i.e., the ones in $E_m'$) is less then $\ndel$. Therefore, $|E_{s,v}|$ cannot exceed $\ndel$, since only the edges in $E_m'$ that are incident to $v$ can be added to $E_{s,v}$ in future recursive calls.
Lastly, we argue that (C5) implies that Condition (c) of Definition~\ref{def:decomp} is satisfied.
Assume, inductively, that a recursive call on edge set $\mathcal{E}_i$
eventually
contributes at most $|\mathcal{E}_i|\log|\mathcal{E}_i|/(6\log m)$ edges
to $E_r$.  It follows from (C5) that the recursive call on edge set $E'$
contributes $|E'|\log|E'|/(6\log m)$ edges to $E_r$.  We conclude
that $|E_r| \le |E|\log|E|/(6\log|E|)=|E|/6$.

Now we analyze the round complexity. In one recursive call of $\mathcal{A}^\ast$, consider a component $\mathcal{G}_i$ in the output partition, and let $K$ be the running time of vertices in $\mathcal{V}_i$.
Due to (C3), there are two cases. If $\mathcal{G}_i$ satisfied (C3-1),
it will halt
in $K = O(\poly \log n)$ rounds.
Otherwise, (C3-2) is met, and we have $|\mathcal{V}_i|\leq n' -\tilde{\Omega}(K\ndel)$.
Let $v \in V$ be any vertex, and let $K_1, \ldots, K_z$ be the running times of all calls to
$\mathcal{A}^\ast$ that involve $v$.
(Whenever $v$ ends up in $S$ or in a component satisfying (C3-1) it halts permanently, so $K_1,\ldots,K_{z-1}$ reflect executions that satisfy (C3-2) upon termination.)
Then we must have $\sum_{i=1}^z K_i \leq \tilde{O}(n / \ndel) + O(\poly \log n) = \tilde{O}(n^{1-\delta})$. Thus, the whole algorithm stops within $\tilde{O}(n^{1-\delta})$ rounds.
\end{proof}

\subsection{Subroutines}
Before proving Lemma \ref{le:auxiliary}, we first introduce some helpful subroutines.
Lemma~\ref{le:highdiameter} shows that for subgraphs of sufficiently high strong diameter, we can find a sparse cut of the subgraph, with runtime proportional to the strong diameter.
Lemma~\ref{le:lowdegree} offers a procedure that removes a set of edges in such a way that the vertices in the remaining graph have high degree, and the removed edges form a low arboricity subgraph.
Lemma~\ref{le:lowconductance} shows that if a subgraph already has a low conductance cut, then we can efficiently
find a cut of similar quality.

All these subroutines are applied to a connected subgraph $G^\ast=(V^\ast,E^\ast)$ of the underlying network $G=(V,E)$, and the computation does not involve vertices outside of $G^\ast$. In subsequent discussion in this section, the parameters $n$ and $m$ are always defined as $n = |V|$ and $m = |E|$, which are independent of the chosen subgraph  $G^\ast$.

\begin{lemma}\label{le:largediameter}
Let $m$ and $D$ be two numbers.
Let $(a_1,\ldots,a_D)$ be a sequence of positive integers such that  $D\geq 48\log^2 m$ and $\sum_{i=1}^D a_i\leq m$.
Then there exists an index $j$ such that $j\in [D/4, 3D/4]$
and
\[
a_j\leq \frac{1}{12\log m}\cdot \min\left(\sum_{i=1}^{j-1} a_i, \;\, \sum_{i=j+1}^D a_i\right).
\]
\end{lemma}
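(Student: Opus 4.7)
The plan is a contradiction argument based on prefix sums. Define $S_0 = 0$ and $S_j = \sum_{i=1}^j a_i$ for $j \ge 1$; then $S_D \le m$, and $S_j \ge j$ since the $a_i$ are positive integers. Suppose for contradiction that no $j \in [D/4, 3D/4]$ satisfies the conclusion, so every such integer $j$ lies in at least one of
\[
L = \set{j : a_j > S_{j-1}/(12\log m)}, \qquad R = \set{j : a_j > (S_D - S_j)/(12\log m)}.
\]
In particular $|L| + |R| \geq |L \cup R| \geq \lfloor D/2 \rfloor \ge 24 \log^2 m$ by the hypothesis $D \ge 48\log^2 m$.

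The heart of the argument is a telescoping bound on $|L|$. For each $j \in L$ the inequality rearranges to $S_j / S_{j-1} > 1 + 1/(12 \log m)$, and since $S_j/S_{j-1} \ge 1$ for \emph{every} $j$ (because $a_j > 0$), inserting the omitted indices of the range $[\lceil D/4 \rceil, \lfloor 3D/4 \rfloor]$ can only increase the product, so
\[
\paren{1 + \frac{1}{12 \log m}}^{|L|} < \prod_{j \in L} \frac{S_j}{S_{j-1}} \le \frac{S_{\lfloor 3D/4 \rfloor}}{S_{\lceil D/4 \rceil -1}} \le m,
\]
where the final inequality uses $S_{\lfloor 3D/4 \rfloor} \le m$ and $S_{\lceil D/4 \rceil - 1} \ge \lceil D/4\rceil - 1 \ge 1$ (valid since $D \ge 48\log^2 m \gg 4$). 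Taking logarithms and using the concavity bound $\ln(1 + x) \ge x \ln 2$ on $x \in [0,1]$ yields $|L| < 12 \log^2 m$. A symmetric telescoping applied to the ratios $(S_D - S_{j-1})/(S_D - S_j)$ for $j \in R$ gives $|R| < 12 \log^2 m$ as well.

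Adding the two bounds produces $|L| + |R| < 24 \log^2 m$, which contradicts the earlier lower bound $|L| + |R| \ge 24 \log^2 m$, and thus establishes the lemma. The main obstacle is essentially bookkeeping: verifying that the $L$/$R$ dichotomy truly exhausts the middle range, confirming positivity of the denominator $S_{\lceil D/4 \rceil - 1}$, and keeping the inequality in $|L|,|R| < 12 \log^2 m$ strict so that the final count beats the $48 \log^2 m$ threshold rather than merely matching it.
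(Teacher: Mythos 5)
Your strategy is sound and is a genuine reorganization of the paper's argument: the paper first reduces to one side by symmetry (assume $S_{\lfloor D/2\rfloor}\le S_D-S_{\lfloor D/2\rfloor}$, else reverse the sequence) and then runs the growth argument only over $[D/4,D/2]$, needing just \emph{one} exponential blow-up past $m$ to get a contradiction; you instead count bad indices on both sides of the whole range $[D/4,3D/4]$ and add two bounds. Both versions rest on the same core fact that the prefix (or suffix) sums cannot grow by a factor $1+\frac{1}{12\log m}$ more than about $12\log^2 m$ times, so this is a variant rather than a new idea, but the two-sided counting is where your constants become exactly tight.

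And that tightness is a real gap: the final step does not produce a contradiction as written. You conclude $|L|<12\log^2 m$ and $|R|<12\log^2 m$, hence $|L|+|R|<24\log^2 m$, and set this against a middle range of at least $\lfloor D/2\rfloor$ integers. But $\lfloor D/2\rfloor\ge 24\log^2 m$ is false in general: if $24\log^2 m=100.3$, then $D=201$ is allowed, the range $[D/4,3D/4]$ contains exactly the $100$ integers $51,\dots,150$, and $|L|=|R|=50$ satisfies every inequality you derived while $L\cup R$ covers all $100$ indices --- no contradiction. (Whenever $24\log^2 m\notin\mathbb{Z}$, the strict bound $|L|+|R|<24\log^2 m$ only gives $|L|+|R|\le\lfloor 24\log^2 m\rfloor$, which the count can match.) The repair is cheap and uses slack you already left on the table: in the telescoping you bounded $S_{\lceil D/4\rceil-1}\ge 1$, but positivity of the $a_i$ gives $S_{\lceil D/4\rceil-1}\ge \lceil D/4\rceil-1\ge 12\log^2 m-1$, so in fact $\bigl(1+\frac{1}{12\log m}\bigr)^{|L|}< m/(12\log^2 m-1)$ and hence $|L|<12\log^2 m-12\log m\cdot\log(12\log^2 m-1)$, strictly smaller than $12\log^2 m$ by a comfortable additive margin; symmetrically for $|R|$, using $S_D-S_{\lfloor 3D/4\rfloor}\ge D/4$. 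With that, $|L|+|R|$ falls strictly below the number of middle indices and the contradiction closes. You flagged exactly this strictness worry yourself; it does need to be resolved, but it is a one-line repair, not a flaw in the approach.
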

\begin{proof} Define $S_k=\sum_{i=1}^k a_i$ to be the $k$th prefix sum. By symmetry, we may assume $S_{\lfloor D/2 \rfloor}\leq S_D-S_{\lfloor D/2 \rfloor}$, since otherwise we can reverse the sequence.
Scan each index $j$ from $D/4$ to $D/2$. If an index $j$ does not satisfy $a_j\leq \frac{1}{12 \log m} \cdot S_{j-1}$, then this implies that $S_j > S_{j-1}\left(1+\frac{1}{12\log m}\right)$. If no index $j\in[D/4,D/2]$ satisfies this condition then $S_{\lfloor D/2 \rfloor}$ is larger than
\[
S_{\lfloor D/4 \rfloor} \cdot \left(1+\frac{1}{12\log m}\right)^{D/4}
\ge
S_{\lfloor D/4 \rfloor} \cdot \left(1+\frac{1}{12\log m}\right)^{12\log^2 m}
\geq S_{\lfloor D/4 \rfloor}\cdot m,
\]
which is impossible since $\sum_{i=1}^D a_i\leq m$. Therefore, there must exist an index $j\in[D/4,D/2]$ such that $a_j\leq \frac{1}{12 \log m}\cdot S_{j-1} = \frac{1}{12 \log m}\cdot \sum_{i=1}^{j-1}a_i$.
By our assumption that $S_{\lfloor D/2 \rfloor}\leq S_D-S_{\lfloor D/2 \rfloor}$, we also have
 $a_j\leq \frac{1}{12 \log m}\cdot \min\left(\sum_{i=1}^{j-1}a_i,\;
 \sum_{i=j+1}^D a_i\right)$.
\end{proof}

\begin{lemma}[High Diameter subroutine]\label{le:highdiameter}
Let $G^\ast=(V^\ast,E^\ast)$ be a connected subgraph and
$x\in V^\ast$ be a vertex for which
$\tilde{D} = \max_{v\in V^\ast} \distance_{G^\ast}(x,v) \geq 48\log^2 m$.
Define $\vlow=\{v\in V^\ast \mid \degree_{G^\star}(v)\leq\ndel/2\}$.
Suppose there are no edges connecting two vertices in $\vlow$. Then we can find a cut $(C,\bar{C})$ of $G^\ast$ such that $\min(|C|,|\bar{C}|)\geq\frac{\tilde{D}}{32}n^{\delta}$ and $\partial(C)\leq \min(\vol(C),\vol(\bar{C}))/ (12\log m)$ in $O(\tilde{D})$ rounds deterministically in the $\CONGEST$ model. Each vertex in $V^\ast$ knows whether or not it is in $C$.
\end{lemma}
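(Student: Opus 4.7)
The plan is to exhibit the cut as a ``layer cut'' of a BFS tree rooted at $x$, using Lemma~\ref{le:largediameter} to pick the right layer, and to use the independence assumption on $\vlow$ to lower bound both sides of the cut.

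First, I would have every vertex compute its distance $\distance_{G^\ast}(x,v)$, which takes $O(\tilde D)$ rounds via BFS from $x$ on $G^\ast$. Let $L_i = \{v \in V^\ast \mid \distance_{G^\ast}(x,v) = i\}$ for $i=0,1,\dots,\tilde D$; these layers are all non-empty by definition of $\tilde D$. For $i \ge 1$, let $a_i = |E^\ast(L_{i-1},L_i)|$, noting $\sum_i a_i \le m$. Aggregating the $a_i$ up the BFS tree to $x$ and broadcasting the output back can be done in $O(\tilde D)$ rounds. Apply Lemma~\ref{le:largediameter} with $D := \tilde D \ge 48\log^2 m$ to obtain an index $j \in [\tilde D/4, 3\tilde D/4]$ with $a_j \le \tfrac{1}{12\log m}\min\bigl(\sum_{i<j} a_i,\sum_{i>j} a_i\bigr)$, and define $C = L_0 \cup \cdots \cup L_{j-1}$, $\bar C = L_j\cup\cdots\cup L_{\tilde D}$.

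For the conductance bound, $\partial(C) = a_j$ since edges in $G^\ast$ only cross between consecutive BFS layers (or stay within one). The internal edges of $C$ include all edges between $L_{i-1}$ and $L_i$ for $1 \le i \le j-1$, giving $\vol(C) \ge 2\sum_{i<j} a_i \ge \sum_{i<j} a_i$, and symmetrically $\vol(\bar C) \ge \sum_{i>j} a_i$. The choice of $j$ then yields $a_j \le \min(\vol(C),\vol(\bar C))/(12\log m)$, as required.

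The main obstacle, and the only place where the hypothesis on $\vlow$ is used, is to lower bound $|C|$ and $|\bar C|$ by $\tfrac{\tilde D}{32}n^\delta$. The key observation is that any four consecutive non-empty BFS layers $L_a,L_{a+1},L_{a+2},L_{a+3}$ contain at least $n^\delta/2$ vertices: the BFS edge witnessing that $L_{a+2}$ is non-empty connects $L_{a+1}$ and $L_{a+2}$, and by the independence of $\vlow$ at least one endpoint $v$ is high-degree, i.e.\ has $\degree_{G^\ast}(v) \ge n^\delta/2$; all neighbors of $v$ lie in $L_a\cup L_{a+1}\cup L_{a+2}\cup L_{a+3}$. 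Since $C$ contains at least $\tilde D/4$ consecutive layers and $\bar C$ contains at least $\tilde D/4+1$ of them, each side decomposes into at least $\lfloor\tilde D/16\rfloor \ge \tilde D/32$ disjoint blocks of four consecutive layers (absorbing the additive loss into the constant for $\tilde D \ge 48\log^2 m$), so $|C|,|\bar C| \ge (\tilde D/32)\cdot n^\delta$. All steps of the algorithm (BFS, upward aggregation, the local computation at $x$, and downward broadcast of $j$) run deterministically in $O(\tilde D)$ rounds and each vertex learns its side from the broadcast value of $j$ together with its own BFS depth.
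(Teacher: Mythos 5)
Your proposal is correct and follows essentially the same route as the paper's proof: BFS from $x$, apply Lemma~\ref{le:largediameter} to the sequence of inter-layer edge counts to select a prefix cut $C$ with $\partial(C)=a_j$ small relative to both prefix and suffix volumes, and use the independence of $\vlow$ to argue that every four consecutive BFS layers contain the $\ge n^\delta/2$ neighbors of some high-degree vertex, which yields the size lower bound on both sides. The only differences are cosmetic (your $C$ ends at $L_{j-1}$ rather than $L_j$, and you are slightly more explicit about the floor in the block count, which the paper glosses over in the same way).
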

\begin{proof}
The algorithm is as follows.
First, build a BFS tree of  $G^\ast$ rooted at
$x \in V^\ast$ in $O(\tilde{D})$ rounds.
Let $L_i$ be the set of vertices of
level $i$ in the BFS tree,
and let $p_i$ be the number of edges $e=\{u,v\}$ such that $u\in L_i$ and $v\in L_{i+1}$.
We write
$L_{a..b} = \bigcup_{i=a}^b L_i$.
In $O(\tilde{D})$ rounds we can let the root
$x$ learn the sequence $(p_1, \ldots, p_{\tilde{D}})$.

Note that in a BFS tree, edges do not connect
two vertices in non-adjacent levels.
By Lemma~\ref{le:largediameter}, there exists an index $j\in[\tilde{D}/4,3\tilde{D}/4]$ such that $p_j\leq \frac{1}{12\log m}\cdot \min\left(\vol(L_{1..j}),\vol(L_{j+1..\tilde{D}})\right)$,
and such an index $j$ can be computed locally at the vertex $x$.

The cut is chosen to be $C=L_{1..j}$, so we have $\partial(C)\leq \min(\vol(C),\vol(\bar{C}))/ (12\log m)$. As for the second condition, due to our assumption in the statement of the lemma, for any two adjacent levels $L_i,L_{i+1}$, there must exist a vertex $v\in L_i\cup L_{i+1}$ such that $v\notin \vlow$.
By definition of $\vlow$, $v$ has more than $\ndel/2$ neighbors in $G^\ast$, and they are all within $L_{i-1..i+2}$. Thus, the number of vertices within any four consecutive levels must be greater than $\ndel/2$. Since $j\in[\tilde{D}/4,3\tilde{D}/4]$, we have
\[
\min(|C|,|\bar{C}|)\geq \frac{\tilde{D}}{4}/4\cdot \ndel/2\geq \frac{\tilde{D}}{32}n^{\delta}.
\]
To let each vertex in $V^\ast$ learn whether or not it is in $C$,  the root $x$ broadcasts the index $j$ to all vertices in $G^\ast$. After that, each vertex in level smaller than or equal to $j$ knows that it is in $C$; otherwise it is in $\bar{C}$.
\end{proof}

Intuitively, Lemma~\ref{le:lowdegree} says that after the removal of a subgraph of small arboricity (i.e., the edge set $E_s^\diamond$), the remaining graph (i.e., the edge set $E^\diamond$) has high minimum degree. The runtime is proportional to the number of removed vertices (i.e., $|V^\ast|-|V^\diamond|$) divided by the threshold $\ndel$.
Note that the second condition of Lemma~\ref{le:lowdegree} implies that $E_{s,v}^\diamond = \emptyset$ for all $v \in V^\diamond$.

\begin{lemma}[Low Degree subroutine]\label{le:lowdegree}
Let $G^\ast=(V^\ast,E^\ast)$ be a connected subgraph with strong diameter $D$. We can partition $E^\ast =E^\diamond \cup E_s^\diamond$ meeting the following two conditions.
\begin{itemize}
    \item[1.] Let $V^\diamond$ be the set of vertices induced by $E^\diamond$. Each $v \in V^\diamond$ has more than $\ndel/2$ incident edges in $E^\diamond$.
    \item[2.] The edge set $E_s^\diamond$ is further partitioned as $E_s^\diamond=\bigcup_{v\in V^\ast \setminus V^\diamond} E_{s,v}^\diamond$, where $E_{s,v}^\diamond$ is a subset of incident edges of $v$, and $|E_{s,v}^\diamond|\leq \ndel$. Each vertex $v$ knows $E_{s,v}^\diamond$.
\end{itemize}
This partition can be found in $O(D+(|V^\ast|-|V^\diamond|)/\ndel)$  rounds deterministically in the $\CONGEST$ model.
\end{lemma}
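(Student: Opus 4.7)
The plan is to apply an iterative low-degree peeling procedure. The algorithm begins by constructing a BFS tree of $G^\ast$ rooted at an arbitrary vertex; since $G^\ast$ is connected with strong diameter $D$, this takes $O(D)$ rounds and yields both the $O(D)$ preprocessing term and a structure for later termination detection. Each vertex $v$ then locally initializes a counter $d_v := \deg_{G^\ast}(v)$ tracking its current remaining incident-edge count, along with an empty set $E_{s,v}^\diamond$.

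The peeling phase proceeds in synchronous rounds. In each round, every still-active vertex $v$ with $d_v \le \ndel/2$ declares itself peeled, adds its currently active incident edges to $E_{s,v}^\diamond$, sends a 1-bit ``peeled'' message across each such edge, and halts. Each active neighbor receiving such a message decrements its own $d_u$. A full round in which no vertex peels (detectable in $O(D)$ rounds by a convergecast on the BFS tree) signals termination; we set $V^\diamond$ to be the set of vertices that never peeled and $E^\diamond := E^\ast \setminus E_s^\diamond$.

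Correctness follows directly. At termination every $v \in V^\diamond$ satisfies $d_v > \ndel/2$ (else it would have peeled), so $v$ has more than $\ndel/2$ incident edges in $E^\diamond$, verifying Condition~1. When a vertex $v$ peels, the trigger ensures $|E_{s,v}^\diamond| \le \ndel/2 \le \ndel$, and each edge $\{u,v\} \in E_s^\diamond$ is placed in exactly one of $E_{s,u}^\diamond, E_{s,v}^\diamond$ — namely for the endpoint that peels first — so orienting each such edge away from its peeled endpoint respects the topological order by peel-time and is therefore acyclic, verifying Condition~2. Each vertex locally knows $E_{s,v}^\diamond$ by construction.

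The hard part will be bounding the total number of peeling rounds by $O(D + N/\ndel)$, where $N := |V^\ast|-|V^\diamond|$. The key structural observation is that whenever $v$ peels at round $t \ge 2$, $d_v$ must have crossed the threshold during round $t-1$, forcing at least one neighbor of $v$ to have peeled at round $t-1$; iterating backward produces a simple \emph{peeling chain} $v = v_t, v_{t-1}, \ldots, v_1$ of $t$ distinct peeled vertices forming a path in $G^\ast$, with $v_1$ of initial degree $\le \ndel/2$ and every $v_i$ for $i \ge 2$ of initial degree exceeding $\ndel/2$. When $t = O(D)$ there is nothing to do. Otherwise, a careful amortized counting argument is required to charge the ``excess'' portion of long chains to the many peeled vertices they must witness — either as chain members themselves, or as the peeled neighbors whose earlier removal drove each chain vertex below threshold — while exploiting the strong-diameter-$D$ structure of $G^\ast$ to prevent the chain from being arbitrarily longer than the corresponding $G^\ast$-shortcut. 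This amortized accounting is the main technical ingredient and yields the claimed $O(D + N/\ndel)$ round bound.
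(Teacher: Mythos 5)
Your algorithm and its correctness argument for Conditions~1 and~2 are fine (modulo a small tie-breaking issue: if $u$ and $v$ peel in the same round, the edge $\{u,v\}$ lands in both $E_{s,u}^\diamond$ and $E_{s,v}^\diamond$; the paper resolves this by orienting away from the smaller ID). The genuine gap is exactly the part you flag as ``the hard part'': the round bound is not just unproven for your procedure, it is \emph{false}. Take $n^{\delta}=100$ and build $G^\ast$ from a large clique $K$ of permanently high-degree vertices plus a path $v_1,\dots,v_k$, where each $v_i$ is also joined to $49$ clique vertices, so $\deg(v_1)=50$ and $\deg(v_i)=51$ for $i\ge 2$. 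Then $G^\ast$ has diameter $O(1)$, only $v_1,\dots,v_k$ ever peel (so $N=k$), yet your algorithm peels exactly one vertex per round for $k$ rounds: $\Theta(N)$ rounds against a target of $O(D+N/n^{\delta})=O(1+k/100)$. This also kills the intuition in your last paragraph: a peeling chain is a path \emph{in} $G^\ast$, but small strong diameter only bounds \emph{shortest} paths, and shortcuts through $K$ let the chain be arbitrarily longer than $D$. No amortization can rescue an algorithm that genuinely runs for $\Theta(N)$ rounds.

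The missing idea is to decouple the peeling threshold from the guarantee and add a global stopping test. The paper peels, in each iteration, \emph{all} vertices of current degree at most $n^{\delta}$ (not $n^{\delta}/2$), then uses the BFS tree to count the number $z$ of vertices peeled in that iteration and stops as soon as $z\le n^{\delta}/2$. Survivors of the final iteration had degree more than $n^{\delta}$ entering it and lose at most one edge per peeled vertex, hence at most $z\le n^{\delta}/2$ edges, so they still end with degree more than $n^{\delta}/2$ --- Condition~1 holds even though the last batch was small. Every non-final iteration peels more than $n^{\delta}/2$ vertices, so there are $O(1+N/n^{\delta})$ iterations of $O(D)$ rounds each; pipelining the iterations (executing them speculatively and rolling back those after the true last one) brings this from $O(D\cdot\lceil N/n^{\delta}\rceil)$ down to the claimed $O(D+N/n^{\delta})$. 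In your example this terminates after a single iteration, since only $v_1$ (degree $50\le 100$) is peeled and $z=1\le 50$.
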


\begin{proof}
To meet Condition $1$, a naive approach is to iteratively ``peel off''  vertices that have degree at most $\ndel/2$, i.e., put all their incident edges in $E_s$, so long as any such vertex exists.
On some graphs this process requires $\Omega(n)$ peeling iterations.

We solve this issue by doing a batch deletion.
First, build a BFS tree of $G^\ast$ rooted at an arbitrary vertex
$x  \in V^\ast$.
We use this BFS tree to let $x$  count the number of vertices that have degree less than $\ndel$ in the remaining subgraph in $O(D)$ rounds.

The algorithm proceeds in iterations.
Initially we set $E^\diamond \gets E^\ast$ and $E_s^\diamond \gets \emptyset$.
In each iteration, we identify the subset $Z \subseteq V^\ast$
whose vertices have at most $\ndel$ incident edges in $E^\diamond$.
We orient all the $E^\diamond$-edges
touching $Z$ away from $Z$, if one endpoint is in $Z$,
or away from the endpoint with smaller $\ID$, if both endpoints are in $Z$.
Edges incident to $v$ oriented away from $v$ are added to $E_{s,v}^\diamond$
and removed from $E^\diamond$.
The root $x$ then counts the number $z=|Z|$ of such vertices via the BFS tree.
If  $z > \ndel/2$, we proceed to the next iteration; otherwise we terminate the algorithm.

The termination condition ensures that each vertex has degree at least
$(\ndel+1) - z > \ndel/2$, and so Condition 1 is met.
It is straightforward to see that
the set $E_s^\diamond$ generated by the algorithm meets Condition 2,
since for each $v$, we only add edges to $E_{s,v}^\diamond$ once,
and it is guaranteed that $|E_{s,v}^\diamond| \leq  \ndel$.
Tie-breaking according to vertex-$\ID$ ensures the orientation is acyclic.

Throughout the process, each time one vertex puts any edges into $E_s^\diamond$, it no longer stays in $V^\diamond$. Each iteration can be done in $O(D)$ time. We proceed to the next iteration only if there are more than  $\ndel/2$ vertices being removed from  $V^\diamond$. A trivial implementation can lead to an algorithm taking $O(D\left\lceil (|V^\ast|-|V^\diamond|)/\ndel\right\rceil))$ rounds.
The round complexity can be further improved to
$O(D+ (|V^\ast|-|V^\diamond|)/\ndel)$ by pipelining the iterations. At some point the root $x$ detects that iteration $i$ was the last iteration; in $O(D)$ time it broadcasts a message to all nodes instructing them to
roll back iterations $i+1,i+2,\ldots$, which have been executed speculatively.\end{proof}

The proof of the following lemma is deferred to Section~\ref{se:sparsecut}.
It is a consequence of combining Lemmas~\ref{le:correctness} and~\ref{le:Nibblerunningtime}.

\begin{lemma}[Low Conductance subroutine]\label{le:lowconductance}
Let $G^\ast=(V^\ast,E^\ast)$ be a connected subgraph with strong diameter $D$.
Let $\phi\leq 1/12$ be a number.
Suppose that there exists a subset $S\subset V^\ast$ satisfying
\[
\vol(S)\leq (2/3)\vol(V^\ast) ~~~\text{and}~~~ \Phi(S)\leq \frac{\phi^3}{19208\ln ^2(|E^\ast|e^4)}.
\]
Assuming such an $S$ exists,
there is a $\CONGEST$ algorithm
that finds a cut $C \subset V^\ast$
such that $\Phi(C)\leq 12\phi$
in $O(D + \poly (\log |E^\ast|, 1/\phi))$ rounds,
with failure probability $1/\poly(|E^\ast|)$.
Each vertex in $V^\ast$ knows whether or not it
belongs to $C$.
\end{lemma}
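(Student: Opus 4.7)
The plan is to build a $\CONGEST$ implementation of Spielman--Teng style local clustering (``Nibble'') and to split its analysis into a correctness statement (Lemma~\ref{le:correctness}) controlling the conductance of the output cut, and a round-complexity statement (Lemma~\ref{le:Nibblerunningtime}) bounding the distributed cost. Recall that Nibble starts a truncated lazy random walk from a carefully chosen source, repeatedly discards probability mass below a threshold so the support stays polylogarithmic, and then scans the sweep cuts of the resulting vector for one of small conductance. By the standard Lov\'asz--Simonovits potential argument, if a cut $S$ of conductance $\phi^3/\polylog(|E^\ast|)$ exists with $\vol(S)\leq (2/3)\vol(V^\ast)$, then a constant degree-weighted fraction of vertices in $S$ serve as ``good'' starting points: truncated walks from them admit a sweep cut of conductance $O(\phi)$. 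The specific constants $19208$ and $\ln^2(|E^\ast|e^4)$ in the hypothesis are the slack needed to absorb truncation error and to bring the output quality down to $12\phi$.

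For the distributed implementation I would first sample $O(\log|E^\ast|)$ candidate sources (each vertex proposing itself with probability proportional to its degree, using $O(D)$ rounds along a BFS tree of $G^\ast$ to normalize the probabilities), and then run Nibble from all of them in parallel. For each candidate source $u$, the sparse walk vector $\tilde p^u$ is represented by storing at every vertex $w$ only its own entry $\tilde p^u(w)$. A single walk step is a local averaging: each vertex forwards its entry to its $G^\ast$-neighbors and updates from the received values, after which truncation wipes entries below the threshold. Because truncation keeps $|\supp{\tilde p^u}|=\poly(\log|E^\ast|,1/\phi)$ and only $O(\log|E^\ast|)$ walks run simultaneously, the per-edge bandwidth per step is $\poly(\log|E^\ast|,1/\phi)$, which pipelines into that same factor of rounds. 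The number of walk steps is itself $\poly(\log|E^\ast|,1/\phi)$, so the walk-simulation phase fits in that time bound. Sweep-cut evaluation is then carried out by collecting, for each source $u$, the sorted sequence $(\tilde p^u(w)/\deg(w))_w$ together with prefix volumes and boundary sizes at the root of the BFS tree in $O(D+\poly(\log|E^\ast|,1/\phi))$ rounds; the root picks the best sweep cut, and if its conductance is at most $12\phi$ it broadcasts the chosen source and threshold back down the tree so every vertex locally decides whether it lies in $C$.

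The main obstacle is the tight interaction between truncation and the Lov\'asz--Simonovits bound: the truncated walk must stay close enough to the true walk that its sweep cuts retain the correct quality guarantee, and the constants in the $\phi^3/\polylog$ hypothesis must be chosen precisely to swallow this error. A secondary issue is that the sweep cut must be evaluated with respect to the original subgraph $G^\ast$, so vertices not in the support of any $\tilde p^u$ still need to be assigned a side; this is handled by broadcasting only the source identifier and the numerical threshold, since any vertex then compares its (possibly zero) truncated value against that threshold locally. The failure probability $1/\poly(|E^\ast|)$ follows by taking the number of sampled sources slightly larger than the reciprocal of the degree-weighted fraction of good starting points inside $S$, which is a constant by the correctness lemma.
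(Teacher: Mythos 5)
There is a genuine gap in the sampling and congestion analysis, and it is precisely the difficulty this lemma is designed to overcome. You sample only $O(\log|E^\ast|)$ sources proportional to degree and assert that the probability of hitting a good starting point is constant because ``the degree-weighted fraction of good starting points inside $S$ is a constant.'' But that is a constant fraction \emph{of $S$}, not of $V^\ast$: the lemma places no lower bound on $\vol(S)$, so the probability that a single degree-proportional sample lands in the good set $S^g\subseteq S$ is only $\vol(S^g)/\vol(V^\ast)$, which can be polynomially small. To hit $S^g$ w.h.p.\ one must sample on the order of $\tilde{\Theta}(\vol(V^\ast)/\vol(S^g))$ sources --- in the paper, $K_b = c\log m\cdot\vol(V^\ast)/2^b$ sources for each volume scale $b$ --- and then your congestion argument collapses, since it rests on ``only $O(\log|E^\ast|)$ walks run simultaneously.'' The paper's key idea, absent from your proposal, is to bound the congestion of running all $K_b$ truncated walks in parallel via reversibility of the lazy walk ($\rho_t^v(u)=\rho_t^u(v)$, Lemma~\ref{le:rho}): the set of sources $v$ whose truncated walk retains positive mass at a fixed vertex $u$ has volume at most $1/(2\epsilon_b)$, so w.h.p.\ only $O(\log^3 m/\phi^3)$ of the $K_b$ sampled sources contribute load at $u$ (Lemma~\ref{le:haha}). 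Without this (or an equivalent) argument, the walk-simulation phase is not known to fit in $\poly(\log|E^\ast|,1/\phi)$ rounds.

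A secondary error: the support of the truncated walk is \emph{not} $\poly(\log|E^\ast|,1/\phi)$ in general. Truncation only guarantees $\vol(\supp{\tilde p})\leq 1/(2\epsilon_b)=\Theta(2^b\ln^2(m)/\phi^3)$, which for the largest scale $b\approx\log m$ is $\tilde{\Theta}(m)$. Consequently your plan to collect the entire sorted sequence $(\tilde p^u(w)/\deg(w))_w$ at a root is too expensive, and the sweep must instead be restricted to $O(\log_{1+\phi}\vol(V^\ast))$ geometrically spaced prefix volumes, with a separate argument (the paper's Lemma~\ref{le:approximation}) that a prefix whose volume is within a $(1+\phi)$ factor of the ideal one still has conductance $O(\phi)$ --- this is where the degradation from $\phi$ to $12\phi$ in the statement comes from, which your write-up does not account for.
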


\subsection{Proof of Lemma~\ref{le:auxiliary}}

We prove Lemma~\ref{le:auxiliary} by presenting and analyzing a specific distributed algorithm, which makes use of the subroutines specified
in Lemmas~\ref{le:highdiameter}, \ref{le:lowdegree}, and \ref{le:lowconductance}.

Recall that we are given a subgraph with edge set $E'$ and must ultimately
return a partition of it into $E_m'\cup E_s' \cup E_r'$.
The algorithm initializes $E_m' \gets E'$, $E_s' \gets \emptyset$,
and $E_r' \gets \emptyset$.
There are two types of special operations.
\begin{description}
\item[Remove.] In an {\sf Remove} operation, some edges are moved from $E_m'$ to either $E_s'$ or $E_r'$. For the sake of a clearer presentation, each such operation is tagged \Remove{$i$}, for some index $i$.
\item[Split.] Throughout the algorithm we maintain a partition of the current set $E_m'$. In a {\sf Split} operation, the partition subdivided. Each such operation is tagged as \Split{$i$}, for some index $i$, such that \Split{$i$} occurs right after \Remove{$i$}.
\end{description}

Throughout the algorithm, we ensure that any part $E^\star$ of the partition of $E_m'$ has an identifier that is known to all members of $V(E^\star)$.
It is not required that each part forms a connected subgraph.
The partition at the end of the algorithm,  $E_m'=\bigcup_{i=1}^t \mathcal{E}_i$, is the output partition.

\paragraph{Notations.}
Since we treat $E_m'$ as the ``active'' edge set and $E_s'$ and $E_r'$ as repositories of removed edges, $\deg(v)$ refers to the degree of $v$ in the subgraph induced by the \emph{current} $E_m'$.
We write $\vlow=\{v\in V' \mid \degree(v)\leq \ndel\}$.

\paragraph{Algorithm.}
In the first step of the algorithm,
move each edge $\{u,v\}\in E_m'$ in the subgraph
induced by $\vlow$
to $E_{s,u}'$, where $\ID(u) < \ID(v)$ (\Remove{1}).
(Breaking ties by vertex-$\ID$ is critical to
keep the orientation acyclic.)

After that, $E_m'$ is divided into connected components. Assume these components are $G_1=(V_1,E_1)$, $G_2=(V_2,E_2), \ldots$, where $V_i = V(E_i)$.
Let $D_i$ be the depth of a BFS tree rooted at an arbitrary vertex in $G_i$.
In $O(D_i)$ rounds, the subgraph $G_i$ is assigned an identifier that is known to all vertices in $V_i$ (\Split{1}).
Note that this step is done in parallel for each $G_i$, and the time for this step is different for each $G_i$.
From now on there will be no communication between different subgraphs in $\{G_1, G_2, \ldots \}$, and we focus on one specific subgraph $G_i$ in the description of the algorithm.

Depending on how large $D_i$ is, there are two cases.
If $D_i \geq 48\log^2m$, we go to Case 1,
otherwise we go to Case 2.

\paragraph{Case 1:} In this case, we have
$D_i \geq 48\log^2 m$.
Since there are no edges connecting two vertices in $\vlow$, we can apply the High Diameter subroutine, Lemma~\ref{le:highdiameter}, which finds a cut $(C,\bar{C})$ of $G_i$ such that $\min(|C|,|V_i \setminus C|)\geq\frac{D_i}{32}n^{\delta}$ and $\partial(C)\leq \min(\vol(C),\vol(V_i \setminus C))/ (12\log m)$ in $O(D_i)$ rounds.
Every vertex in $V_i$ knows whether it is in $C$ or not. All edges of the cut $(C,\bar{C})$ are put into $E_r'$ (\Remove{2}). Then $E_i$ splits into two parts according to the cut  $(C,\bar{C})$ (\Split{2}).
After that, all vertices in $V_i$ terminate.
(Observe that the part containing the BFS tree root is connected, but the other part is not necessarily connected.)

\paragraph{Case 2:} In this case,
we have $D_i \leq 48 \log^2 m$.
Since $G_i=(V_i,E_i)$ is a small diameter graph,
a vertex $v \in V_i$ is able broadcast a message to all
vertices in $V_i$ very fast.
We apply the Low Degree subroutine, Lemma~\ref{le:lowdegree}, to obtain a partition $E_i=E^\diamond \cup {E}_s^\diamond$.
We add all edges in ${E}_s^\diamond$ to $E_s'$ in such a way that $E_{s,v}' \gets E_{s,v}' \cup {E}_{s,v}^\diamond$ for all $v \in V_i \setminus V^\diamond$, where $V^\diamond = V(E^\diamond)$  (\Remove{3}).

After removing these edges, the remaining edges of $E_i$ are divided into several connected components, but all remaining vertices have degree larger than $\ndel/2$.
Assume these connected components are $G_{i,1}=(V_{i,1},E_{i,1})$, $G_{i,2}=(V_{i,2},E_{i,2})$, $\ldots$.
Let $D_{i,j}$ be the depth of the BFS tree from an arbitrary root vertex in $G_{i,j}$.
In $O(D_{i,j})$ rounds we compute such a BFS tree
and assign an identifier that is known to all vertices in
$V_{i,j}$ (\Split{3}).
That is, the remaining edges in $E_i$ are partitioned into $E_{i,1}$, $E_{1,2}$, $\ldots$.

In what follows, we focus on one subgraph $G_{i,j}$
and proceed to Case 2-a or Case 2-b.

\paragraph{Case 2-a:} In this case, $D_{i,j}\geq 48\log^2 m$.
The input specification of the High Diameter subroutine (Lemma~\ref{le:highdiameter}) is satisfied, since every vertex has degree larger than $\ndel/2$. We apply the High Diameter subroutine to $G_{i,j}$. This takes $O(D_{i,j})$ rounds. This case is similar to Case 1, and we do the same thing as what we do in Case 1, i.e., remove the edges in the cut found by the subroutine (\Remove{4}),  split the remaining edges (\Split{4}), and then all vertices in $V_{i,j}$ terminate.

\paragraph{Case 2-b:} In this case,
$D_{i,j} \leq 48 \log^2 m$.
Note that every vertex has degree larger than $\ndel/2$, and $G_{i,j}$ has small diameter. What we do in this case is to test whether $G_{i,j}$ has any low conductance cut; if yes, we will split $E_{i,j}$ into two components. To do so, we apply the Low Conductance subroutine, Lemma~\ref{le:lowconductance}, with $\phi=\frac{1}{144\log m}$.
Based on the result, there are two cases.

\paragraph{Case 2-b-i:} The subroutine finds a set of vertices $C$ that $\Phi(C)\leq 12\phi=\frac{1}{12\log m}$, and every vertex knows whether it is in $C$ or not.
We move $\partial(C)$ to $E_r'$ (\Remove{5}), and then split the remaining edges into two edge sets according to the cut $(C, \bar{C})$ (\Split{5}). After that, all vertices in $V_{i,j}$ terminate.

\paragraph{Case 2-b-ii:} Otherwise, the subroutine does not return a subset $C$, and it means with probability at least $1 - 1/\poly(|E_{i,j}|) = 1 - 1/\poly(n)$, there is no cut $(S,\bar{S})$ with conductance less than $\frac{\phi^3}{19208\ln^2( |E_{i,j}| e^4)} = \Theta(\log^{-5}m)$.
Recall the relation between the mixing time
$\mix(G_{i,j})$ and the conductance $\Phi=\Phi_{G_{i,j}}$: $\Theta(\frac{1}{\Phi}) \leq \mix(G_{i,j}) \leq \Theta(\frac{\log |V_{i,j}|}{\Phi^2})$~\cite{JerrumS89}.
Therefore, w.h.p., $G_{i,j}$ has $O(\poly \log n)$ mixing time. All vertices in $V_{i,j}$ terminate without doing anything in this step.

Note that in the above calculation, we use the fact that every vertex in $V_{i,j}$ has degree larger than $\ndel/2$ in $G_{i,j}$, and this implies that $|V_{i,j}| = \Omega(\ndel)$ and $|E_{i,j}| = \Omega(\ndell)$, and so $\Theta(\log m) = \Theta(\log n) = \Theta(\log |E_{i,j}|) = \Theta(\log |V_{i,j}|)$.

\paragraph{Analysis.}
We show that the output of
$\mathcal{A}^\ast$ meets its
specifications (C1)--(C6).
Recall that
$E_m'=\bigcup_{i=1}^t \mathcal{E}_i$ is the final partition of the edge set $E_m'$ when all vertices terminate. Once an edge is moved from $E_m'$
to either $E_r'$ or $E_s'$,
it remains there for the rest of the computation.
Condition (C1) follows from the fact that each time we do a split operation, the induced vertex set of each part is disjoint. Condition (C6) follows from the fact that each vertex knows which part of $E_m'$ it belongs to after each split operation. In the rest of this section, we prove that the remaining conditions are  met.

\begin{claim}
Condition (C2) is met.
\end{claim}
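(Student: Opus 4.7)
The plan is to verify the three requirements of (C2)---the orientation structure, the degree bound, and acyclicity---by tracking the only two operations that put edges into $E_s'$: \Remove{1} (which orients every edge $\{u,v\}$ in the subgraph induced by $\vlow$ from the smaller-$\ID$ endpoint to the larger) and the \Remove{3} operations that insert the output of the Low Degree subroutine (Lemma~\ref{le:lowdegree}) applied inside each Case 2 component $G_i$. Cases 1, 2-a, 2-b-i and 2-b-ii only move edges into $E_r'$, never into $E_s'$. The knowledge clause of (C2) is immediate: at \Remove{1} each $v$ decides its relevant edges after one round of $\ID$/degree exchange with its neighbors, and at \Remove{3} Lemma~\ref{le:lowdegree} already guarantees that $v$ knows $E_{s,v}^\diamond$.

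For the degree bound I would split on whether $v\in\vlow$. If $v\notin\vlow$, nothing is added at \Remove{1}, and the only possible later contribution $|E_{s,v}^\diamond|\le\ndel$ from a \Remove{3} arrives together with $\deg_{E_m'}(v)=0$: since $v\in V_i\setminus V^\diamond$, $v$ has no incident edge in $E^\diamond$, so every $E_i$-edge of $v$---equivalently every $E_m'$-edge, as $G_i$ is a connected component of $E_m'$---is consumed. If $v\in\vlow$, then right after \Remove{1} the inequality $|E_{s,v}'|+\deg_{E_m'}(v)\le\deg_{E'}(v)\le\ndel$ holds because the two summands count disjoint subsets of $v$'s original neighborhood (larger-$\ID$ $\vlow$-neighbors and non-$\vlow$-neighbors). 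If $v$ later lands in $V_i\setminus V^\diamond$ at a \Remove{3}, the increment obeys the sharper bound $|E_{s,v}^\diamond|\le\deg_{E_i}(v)=\deg_{E_m'}(v)$ (because $E_{s,v}^\diamond$ is a subset of $v$'s $E_i$-edges), while $\deg_{E_m'}(v)$ simultaneously drops to $0$. Hence the post-\Remove{3} total $|E_{s,v}'|+0$ is bounded by the pre-\Remove{3} total $|E_{s,v}'|+\deg_{E_m'}(v)\le\ndel$. Subsequent operations on the $G_{i,j}$'s containing $v$ only delete $E_m'$-edges, so the bound persists to the end.

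The main obstacle is acyclicity, since \Remove{1} and \Remove{3} use different orientation rules and a directed cycle might in principle mix them. I would exhibit a partial order $\prec$ on $V$ such that every oriented arc $u\to v$ in $E_s'$ satisfies $u\prec v$, which immediately rules out directed cycles. Set $\mathrm{level}(v)=0$ if $v\in\vlow$ and $1$ otherwise, and declare $u\prec v$ iff either (i) $\mathrm{level}(u)<\mathrm{level}(v)$, or (ii) both endpoints are level $0$ and $\ID(u)<\ID(v)$, or (iii) both endpoints are level $1$ in the same $G_i$ and $u$ is peeled strictly before $v$ by Lemma~\ref{le:lowdegree} (with $\ID$ breaking ties within a single peel iteration). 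A \Remove{1} arc fits clause (ii). For a \Remove{3} arc $u\to v$ in $G_i$, the case $u,v\in\vlow$ is impossible because such edges were already deleted at \Remove{1}, and the case $u\notin\vlow$, $v\in\vlow$ is also impossible: $v\in\vlow\cap V_i$ has $\deg_{E_i}(v)\le\ndel$ hence $v$ joins the first peeled set $Z_1$, while $u\notin\vlow$ has $\deg_{E_i}(u)=\deg_{E'}(u)>\ndel$ (since \Remove{1} only touches $\vlow$-$\vlow$ edges) hence $u\notin Z_1$, forcing the subroutine to orient the arc $v\to u$. The remaining possibilities $u\in\vlow$, $v\notin\vlow$ and $u,v\notin\vlow$ fit clauses (i) and (iii). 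Finally, the levels are non-decreasing along any directed cycle and hence constant: an all-level-$0$ cycle yields a strictly increasing closed $\ID$-chain via (ii), and an all-level-$1$ cycle stays inside a single $G_i$ by (iii) and is killed by the strict peel-then-$\ID$ order there.
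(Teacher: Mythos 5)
Your proof is correct, and for the degree bound it follows the same route as the paper: track the invariant $|E_{s,v}'|+\deg_{E_m'}(v)\le \ndel$ across the only two operations that touch $E_s'$, namely \Remove{1} (where $v\in\vlow$ forces the sum to be at most $\deg_{E'}(v)\le\ndel$) and \Remove{3} (where Lemma~\ref{le:lowdegree} guarantees $v\notin V^\diamond$, hence $\deg_{E_m'}(v)$ drops to $0$ exactly when $E_{s,v}'$ grows). Where you genuinely go beyond the paper is acyclicity: the paper's proof of this claim does not discuss it at all, relying on the separate remarks that \Remove{1} breaks ties by $\ID$ and that Lemma~\ref{le:lowdegree}'s internal orientation is acyclic, without ever checking that the \emph{union} of the two orientations is acyclic (two individually acyclic orientations can of course combine into a directed cycle). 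Your level/$\ID$/peel-order ranking closes that gap cleanly, and the one nontrivial case --- ruling out a \Remove{3} arc from a non-$\vlow$ vertex to a $\vlow$ vertex --- is argued correctly, since a $\vlow$ vertex surviving to Case~2 is peeled in the first iteration while its non-$\vlow$ neighbor is not, so the subroutine orients that edge the other way. The only loose end, which the paper shares, is the literal indexing $E_s'=\bigcup_{v\in S}E_{s,v}'$: a $\vlow$ vertex can acquire a nonempty $E_{s,v}'$ at \Remove{1} and still end up in some $\mathcal{V}_i$ rather than in $S$; neither your argument nor the paper's addresses this, and it does not affect the properties actually used downstream.
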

\begin{proof}
Note that only \Remove{1} and \Remove{3} involve $E_s'$.
 In \Remove{1}, any $E_{s,u}'$ that becomes non-empty must have had
 $u\in \vlow$, so $\deg(u)\le \ndel$ before \Remove{1}, and
 therefore $|E_{s,u}'| + \deg(u) \leq \ndel$ after \Remove{1}.
In \Remove{3}, the Low Degree subroutine of Lemma~\ref{le:lowdegree} computes a partition $E_i=E^\diamond \cup {E}_s^\diamond$,
and then we update $E_{s,u}' \gets E_{s,u}' \cup {E}_{s,u}^\diamond$ for all $u \in V_i \setminus V^\diamond$.
By Lemma~\ref{le:lowdegree}, for any $u$ such that ${E}_{s,u}^\diamond \neq \emptyset$, we have $|{E}_{s,u}^\diamond|\leq \ndel$, and $u \notin V^\diamond$, where $V^\diamond$ is the vertex set induced by the remaining edge set $E^\diamond$. In other words, once $u$ puts at least one edge  into $E_{s,u}'$, we have $\deg(u)=0$ after \Remove{3}.
\end{proof}

\begin{claim}
Conditions (C3) and (C4) are met.
\end{claim}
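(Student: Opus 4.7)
The plan is to verify (C3) and (C4) by tracking, for each vertex $v$, the path it takes through the case analysis of $\mathcal{A}^\ast$ and relating its total running time $K$ to the shrinkage $n'-|\mathcal{V}_i|$. For (C4), observe that a vertex $v$ enters $S$ only if all its $E_m'$-edges migrate to $E_s'$, which can happen only in \Remove{1} (when $v\in\vlow$ initially) or in \Remove{3} (when $v\in V_i\setminus V^\diamond$). In the first case $v$ halts in $O(1)$ rounds; in the second, the combined running time of \Remove{1}, \Split{1}, and \Remove{3} is $O(D_i+(|V_i|-|V^\diamond|)/\ndel)$ by Lemma~\ref{le:lowdegree}, which is $\tilde{O}(n'/\ndel)$ since Case~2 forces $D_i\le 48\log^2 m$.

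For (C3) I would go through the four terminating cases. In Cases~1 and 2-a, Lemma~\ref{le:highdiameter} produces a BFS-based cut $(C,\bar{C})$ with $\min(|C|,|\bar{C}|)\ge(\tilde D/32)\ndel$, where $\tilde D\in\{D_i,D_{i,j}\}$ is the depth of the BFS tree being cut. Combining this with the chain $V_{i,j}\subseteq V^\diamond\subseteq V_i\subseteq V'$ gives $n'-|\mathcal{V}_i|\ge(|V_i|-|V^\diamond|)+(\tilde D/32)\ndel$, and this matches the estimates $K=O(D_i)$ in Case~1 and $K=\tilde{O}((|V_i|-|V^\diamond|)/\ndel+D_{i,j})$ in Case~2-a, so $n'-|\mathcal{V}_i|=\tilde{\Omega}(K\ndel)$ in both, verifying (C3-2). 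Case~2-b-ii I split by dichotomy on $|V_i|-|V^\diamond|$: if it is $O(\ndel\polylog n)$ then $K=O(\polylog n)$, and together with the min-degree $>\ndel/2$ from Lemma~\ref{le:lowdegree} and the mixing-time bound $\mix(\mathcal{G}_{i,j})=O(\polylog n)$ (inferred w.h.p.\ from the absence of a low-conductance cut via the Jerrum--Sinclair inequality already cited in the algorithm) we verify (C3-1); otherwise the Low Degree time dominates $K$ and the bound $n'-|V_{i,j}|\ge|V_i|-|V^\diamond|=\tilde{\Omega}(K\ndel)$ gives (C3-2).

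The main obstacle is Case~2-b-i, where Lemma~\ref{le:lowconductance} returns a cut with $\Phi(C)\le 12\phi$ but no \emph{a priori} vertex-balance guarantee. When $(|V_i|-|V^\diamond|)$ dominates $K$ the Case~2-a argument goes through unchanged since $n'-|\mathcal{V}_i|\ge|V_i|-|V^\diamond|$ already absorbs $\tilde{\Omega}(K\ndel)$. The delicate subcase is $K=O(\polylog n)$, where I need $n'-|\mathcal{V}_i|=\tilde{\Omega}(\ndel)$ for \emph{both} pieces $V(E(C))$ and $V(E(\bar{C}))$ simultaneously. Here I plan to combine the min-degree lower bound $>\ndel/2$ on $V_{i,j}$ with the volume-balance hypothesis $\vol(S)\le(2/3)\vol(V^\ast)$ underlying the sparse-cut routine of Section~\ref{se:sparsecut}, which is inherited by the output cut, to conclude $\min(|C|,|\bar{C}|)=\Omega(\ndel)$ in vertex count. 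Finally, condition (C6), already established, lets each vertex determine locally which subcase of (C3) applies from its $K$-value and whether it ended in a split or as a terminating high-conductance component.
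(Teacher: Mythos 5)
Your proof is correct and follows essentially the same case analysis as the paper's: Cases 1 and 2-a via the $\min(|C|,|\bar C|)\ge(\tilde D/32)\ndel$ guarantee of Lemma~\ref{le:highdiameter}, Case 2-b-ii via the dichotomy on the Low Degree subroutine's cost, and (C4) as a corollary of the per-round shrinkage. The one divergence is Case 2-b-i: you derive $\min(|C|,|\bar C|)=\Omega(\ndel)$ from a volume-balance property of the output cut, which is true (it follows from condition (iv) of Lemma~\ref{le:Nibble} together with Lemma~\ref{le:approximation}) but is not part of the stated guarantee of Lemma~\ref{le:lowconductance}, so you would need to strengthen that statement or cite its proof. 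The paper instead gets the same bound directly from $\Phi(C)\le 1/(12\log m)$ plus the min-degree $>\ndel/2$ (the side with smaller volume has almost all of its volume internal, so $|C|^2\gtrsim\vol(C)\gtrsim|C|\ndel/2$, and likewise for $\bar C$), which avoids any appeal to balance.
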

\begin{proof}
We need to verify that in each part of the algorithm, we either spend only $O(\poly \log n)$ rounds, or the size of the current component shrinks by $\tilde{\Omega}(n^{\delta})$ vertices \emph{per round}.

After removing all edges in the subgraph induced by $\vlow$, the rest of $E'$ is partitioned into
connected components
$\mathcal{E}_1,\mathcal{E}_2,\ldots$.
Consider one such component $\mathcal{E}_i$,
and suppose it goes to Case 1.
We find a sparse cut $(C,\bar{C})$, and
moving $\partial(C)$ to $E_r'$ breaks
$\mathcal{E}_i$ into
$\mathcal{E}_i^1$ and $\mathcal{E}_i^2$.
By Lemma~\ref{le:highdiameter},
we have  $\min(|C|,|\bar{C}|) \geq \frac{D_i}{32}\ndel$, so the size of both $V(\mathcal{E}_i^1)=C$ and
$V(\mathcal{E}_i^2) = \bar{C}$ are at most $|V(\mathcal{E}_i)|-\frac{D_i}{32}\ndel
\leq n' - \Omega(D_i)\ndel$.
Since the running time for each vertex in $V(\mathcal{E}_i^1)$ and $V(\mathcal{E}_i^2)$
is $O(D_i)$, the condition (C3-2) is met.

Now suppose that $\mathcal{E}_i$ goes to Case 2.
Note that the total time spent before it reaches Case 2 is $O(D_i)= \poly \log n$.
In Case 2 we execute the Low Degree subroutine of Lemma~\ref{le:lowdegree}, and let
the time spent in this subroutine be $\tau$.
By Lemma~\ref{le:lowdegree}, it is either the case that (i) $\tau = O(D_i)$
or
(ii) the remaining vertex set $V^\diamond$ satisfies $|V(E_i)| - |V^\diamond| = \Omega(\tau \ndel)$.
In other words, if we spend too much time (i.e., $\omega(D_i)$) on this subroutine, we must lose $\Omega(\ndel)$ vertices per round.

After that, $\mathcal{E}_i$ is split into
$\mathcal{E}_{i,1}$, $\mathcal{E}_{i,2}$, $\ldots$.
We consider the set $\mathcal{E}_{i,j}$.
If $\mathcal{E}_{i,j}$ goes to Case 2-a,
then the analysis is the same as that in Case 1,
and so (C3-2) is met.

Now suppose that $\mathcal{E}_{i,j}$ goes to Case 2-b. Note that the time spent during the Low Conductance subroutine of Lemma~\ref{le:lowconductance} is
$O(\poly \log n)$. Suppose that a low conductance cut $(C,\bar{C})$ is found (Case 2-b-i). Since the cut has conductance less than $\frac{1}{12\log m}$, by the fact that every vertex has degree higher than $\ndel/2$, we must have $\min(|C|, |\bar{C}|) = \Omega(\ndel)$. Assume $\mathcal{E}_{i,j} \setminus \partial(C)$ is split into $\mathcal{E}_{i,j}^1$ and $\mathcal{E}_{i,j}^2$.
The size of both $V(\mathcal{E}_{i,j}^1)$ and $V(\mathcal{E}_{i,j}^2)$
must be at most $|V(\mathcal{E}_{i,j})|- \Omega(\ndel)$.
Thus, (C3-2) holds for both parts
$\mathcal{E}_{i,j}^1$ and $\mathcal{E}_{i,j}^2$.

Suppose that no cut $(C,\bar{C})$ is found (Case 2-b-ii). If the running time $K$ among vertices in $V_{i,j}$ is $O(\poly \log n)$, then (C3-1) holds.
Otherwise, we must have $|{V}_{i,j}|\leq n' -\tilde{\Omega}(K\ndel)$ due to the Low Degree subroutine, and so (C3-2) holds.

Condition (C4) follows from the the above proof of (C3), since for each part of the algorithm, it is either the case that
(i) this part takes $O(\poly \log n)$ time,
or
(ii) the number of vertices in the current subgraph is reduced by $\tilde{\Omega}(\ndel)$ \emph{per round}.
\end{proof}

\begin{claim}
Condition (C5) is met.
\end{claim}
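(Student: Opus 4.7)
The plan is a potential-function / charging argument. Define $f(x) = x \log x$ and, for any partition $\{F_1,\dots,F_k\}$ of the currently active edge set $E_m'$, set $\Phi = \sum_j f(|F_j|)$. Initially the partition is the single block $E'$, so $\Phi_{\mathrm{init}} = |E'| \log |E'|$; at termination it is $\{\mathcal{E}_1,\dots,\mathcal{E}_t\}$, so $\Phi_{\mathrm{final}} = \sum_{i=1}^t |\mathcal{E}_i| \log |\mathcal{E}_i|$. I will show two things: (a) each \Remove{2}, \Remove{4}, and \Remove{5} that adds $r$ edges to $E_r'$ causes $\Phi$ to drop by at least $6 r \log m$, and (b) every other step (including \Remove{1}, \Remove{3}, and every \Split{$i$}) leaves $\Phi$ non-increasing. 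Summing across the execution then yields $6 \log m \cdot |E_r'| \leq \Phi_{\mathrm{init}} - \Phi_{\mathrm{final}}$, which is precisely (C5).

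Part (b) is immediate: in \Remove{1}/\Split{1} and \Remove{3}/\Split{3}, edges only migrate from $E_m'$ into $E_s'$, and a block of size $s$ becomes pieces of sizes $t_1,\dots,t_k$ with $\sum_j t_j \leq s$. Since $f(a+b) - f(a) - f(b) = a \log(1+b/a) + b \log(1+a/b) \geq 0$ for $a,b \geq 0$, iterated subadditivity plus monotonicity of $f$ gives $\sum_j f(t_j) \leq f(\sum_j t_j) \leq f(s)$, so $\Phi$ only decreases in these phases.

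The heart of the argument is part (a), which I would handle uniformly for the three cut-based removals. A block of $s$ edges inside a subgraph $G^\star$ is split using a cut $(C,\bar C)$ whose conductance, guaranteed by Lemma~\ref{le:highdiameter} or Lemma~\ref{le:lowconductance}, satisfies $r := |\partial(C)| \leq \min(\vol(C),\vol(\bar C))/(12\log m)$; the new pieces have sizes $s_C$ and $s_{\bar C}$ (edges wholly inside $C$ and $\bar C$, respectively), with $s = s_C + s_{\bar C} + r$. WLOG $s_C \leq s_{\bar C}$, so the minimum volume is $\vol(C) = 2 s_C + r$, and the conductance bound rearranges to $s_C \geq (12 \log m - 1)\,r/2 \geq 6 r \log m - r/2$. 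In particular $s \geq 2 s_C + r \geq 12 r \log m \geq 2$, so $\log s \geq 1$. Expanding,
\[
f(s) - f(s_C) - f(s_{\bar C}) \;=\; s_C \log(s/s_C) + s_{\bar C} \log(s/s_{\bar C}) + r \log s,
\]
and since $s \geq 2 s_C$ gives $\log(s/s_C) \geq 1$, the middle term is non-negative, and $r \log s \geq r$, the drop is at least $s_C + r \geq (6 r \log m - r/2) + r = 6 r \log m + r/2 \geq 6 r \log m$, as required.

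The main obstacle is just lining up the constants: the threshold $1/(12\log m)$ on the cut conductance that the subroutines deliver is tuned exactly so that the elementary identity above yields the factor $6\log m$ of (C5) with a small amount of slack (here $r/2$). I would also take care to confirm that the volume in the conductance bound is measured in the current subgraph $G^\star$ (so that $\vol(C) = 2 s_C + r$ really holds), which is the form actually established inside the proof of Lemma~\ref{le:highdiameter} via $\sum_{i<j} p_i \leq E_{G^\star}(L_{1..j})$, and is the natural interpretation for Lemma~\ref{le:lowconductance} as well.
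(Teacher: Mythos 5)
Your proof is correct and is essentially the paper's own argument in potential-function clothing: the paper maintains the invariant $|E_r'|\le\bigl(|E'|\log|E'|-\sum_i|\mathcal{E}_i|\log|\mathcal{E}_i|\bigr)/(6\log m)$, which is exactly your statement $6\log m\cdot|E_r'|\le\Phi_{\mathrm{init}}-\Phi_{\mathrm{current}}$, and both arguments extract the factor $6\log m$ from the cut bound $|\partial(C)|\le\min(\vol(C),\vol(\bar C))/(12\log m)$ via the superadditivity of $x\log x$. The bookkeeping details (your $s_C+r$ versus the paper's $|\mathcal{E}_j^1|+\partial(C)\log|\mathcal{E}_j|$) differ only cosmetically.
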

\begin{proof}
Condition (C5) says that after the algorithm $\mathcal{A}^\ast$ completes, $|E_r'|\leq f$, where
\[
f=\left(|E'| \log|E'| -\sum_{i=1}^t |\mathcal{E}_i| \log |\mathcal{E}_i|\right)/ (6\log m).
\]
We prove the stronger claim that this inequality holds at all times
w.r.t.~the current edge partition
$\mathcal{E}_1\cup \cdots\cup\mathcal{E}_t$ of $E_m'$.
In the base case this is clearly true, since $t=1$ and
$E' = E_m' = \mathcal{E}_1$ and
$E_r'=\emptyset$.  Moving edges from $E_m'$ to $E_s'$
increases $f$ and has no effect on $E_r'$, so we only have to consider the movement of edges from $E_m'$ to $E_r'$.
Note that this only occurs in \Remove{$i$} and \Split{$i$}, for $i \in \{2,4,5\}$, where in these operations we find a cut
$(C,\bar{C})$ and split one of the parts
$\mathcal{E}_j$ according to the cut.
In all cases we have
\[
|\partial(C)| \leq \frac{\min(\vol(C),\vol(\bar{C}))}{12\log m}.
\]
Suppose that removing $\partial(C)$ splits $\mathcal{E}_j$ into $\mathcal{E}_j^1$ and $\mathcal{E}_j^2$,
with $|\mathcal{E}_j^1| \le |\mathcal{E}_j^2|$
and $C=V(\mathcal{E}_j^1)$.
We bound the change in $|E_r'|$ and $f$ separately.  Clearly
\begin{align*}
\Delta|E_r'| &= |\partial(C)| \leq \frac{2|\mathcal{E}_j^1| + \partial(C)}{12\log m} \leq \frac{|\mathcal{E}_j^1|}{6\log m} + \frac{\partial(C)}{12\log m}.
\intertext{and}
\Delta f &= \frac{1}{6\log m}\cdot \left(|\mathcal{E}_j|\log|\mathcal{E}_j| - \sum_{k\in\{1,2\}} |\mathcal{E}_j^k|\log|\mathcal{E}_j^k|\right)\\
    &\ge \frac{1}{6\log m}\cdot \left( |\mathcal{E}_j^1|\log(|\mathcal{E}_j|/|\mathcal{E}_j^1|) + \partial(C)\log|\mathcal{E}_j|\right)\\
    &> \Delta|E_r'|  &  \mbox{(Because $|\mathcal{E}_j^1| < |\mathcal{E}_j|/2$.)}
\end{align*}
Thus, $|E_r'| \leq f$ also holds after \Remove{$i$} and \Split{$i$},
for $i\in\{2,4,5\}$.
\end{proof}

\section{Algorithm for Finding a Sparse Cut}\label{se:sparsecut}

Recall the in our decomposition routine, we search for a sparse cut in a \emph{subgraph} $G^\ast = (V(E^\ast), E^\ast)$ of $G$.
To simplify notation, we use $n=|V(G^\ast)|$ and $m=|E(G^\ast)|$ to be the number of vertices and 
edges in the subgraph.
In this section we prove Lemma~\ref{le:lowconductance}, which concerns 
an efficient distributed analogue of Spielman and Teng's~\cite{spielman2004nearly,SpielmanT13} \texttt{Nibble} routine.

Many existing works~\cite{sarma2009sparse,spielman2004nearly,AndersenCL08,KuhnM15} have shown that looking at the distribution of 
random walks is a good approach to finding a sparse cut. 
The basic idea is to first sample a source vertex $s$ according to the degree distribution, i.e., 
the probability that $v$ is sampled is $\deg(v) / (2m)$, and do a lazy random walk from $s$. 
Assume there is a sparse cut $S$ with conductance $\Phi(S)$, and $\vol(S)\leq\vol(V)/2$. If $s \in S$, then the probability distribution of the random walk will be mostly confined to $S$ within the initial $t_0=O(\frac{1}{\Phi(S)})$ steps. 
A common way to utilize this observation is to sort the vertices $(v_1, \ldots, v_n)$ 
in decreasing order of their random walk probability,  
and it is guaranteed that for some choice of $j$, the subset 
$C = \{v_1, \ldots, v_j\}$ is a sparse cut that is approximately as good as $S$. 

The papers~\cite{DasSarma15,KuhnM15} adapted this approach to the $\CONGEST$ model.
If the cut $S$ satisfies that $b \cdot 2|E| \leq \vol(S)$
(\emph{i.e., $S$ has balance $b$}), then a cut $C$ satisfying 
$\Phi(C)  = O(\sqrt{\Phi(S)\log n})$ 
can be found in $\tilde{O}(D + 1/(b\Phi(S)))$ rounds.
The algorithm is inefficient when $1/b = \Theta(|E| / \vol(S))$ is large.
The main source of this inefficiency is that if we sample a vertex 
$s$ according to the degree distribution, then the probability that $s \in S$ is only $O(b)$.
This implies that we have to calculate many random walk distributions 
before we find a desired sparse cut.
If we calculate these random walk distributions \emph{simultaneously}, 
then \emph{we may suffer from a huge congestion issue}.

Spielman and Teng~\cite{spielman2004nearly} show that a {\em random walk} distribution with \emph{truncation} (rounding a probability to zero when it becomes too small) can reveal a sparse cut, provided the starting
vertex of the random walk is good. 
The main contribution of this section is a proof that the Spielman-Teng method 
for finding cuts of conductance roughly $\phi$
can be implemented in $\poly(\phi^{-1},\log n)$ time
in the $\CONGEST$ model, i.e., 
with no dependence on the balance parameter $b$.

\paragraph{Terminology.} We first review some definitions and results from Spielman and Teng~\cite{spielman2004nearly}.
Let $A$ be the adjacency matrix of the graph $G$. 
We assume 1-1 correspondance between $V$ and $\{1, \ldots, n\}$.
In a \textit{lazy random walk}, the walk stays at the current vertex with probability $1/2$ and otherwise moves to a random neighbor of the current vertex.
The matrix realizing this walk can be expressed as
 $T = (A D^{-1} + I) / 2$,
  where $D$ is the diagonal matrix with
  $(d (1), \dotsc , d (n))$ on the diagonal, and
  $d(i) = \degree(i)$.
  
Let $p_t^v$ be the probability distribution of the lazy random walk that begins at $v$ and walks for $t$ steps.
In the limit, as $t\rightarrow \infty$, 
$p_t(x)$ approaches $d(x)/2m$, so it is natural to measure 
$p_t(x)$ \emph{relative} to this baseline.
\[
\rho_t(x)=p_t(x)/d(x),
\]
Define $\pi_t$ to be the permutation that sorts $V=\{1,\ldots,n\}$ in decreasing order of $\rho_t$-values, breaking ties by vertex ID.
(We never actually compute $\pi_t$.  To implement our algorithms,
it suffices that given
$\rho_t(u),\rho_t(v),\ID(u),\ID(v)$, we can determine
whether or not $u$ precedes $v$ according to $\pi_t$.)
\[
\rho_t(\pi_t(i)) \geq \rho_t(\pi_t(i+1)), \text{ for all } i.
\]

Let $p$ be a distribution on $V$.  
The truncation operation $[p]_\epsilon$ 
rounds $p(x)$ to zero if it falls below a threshold that depends on $x$.
\[
  [p]_{\epsilon} (x)
=
\begin{cases}
  p (x) & \text{if $p (x) \geq 2 \epsilon d(x)$,}\\
  0 & \text{otherwise}.
\end{cases}
\]

The truncated random walk starting at vertex $v$ is defined as follows.
In subsequent discussion we may omit $v$ if it is known implicitly.
\begin{align*}
\tp_0^v(x)&=\begin{cases}
1 & \text{$x=v$ and $1\geq 2\epsilon d(x)$,}\\
0 & \text{otherwise.}
\end{cases}
\\
\tp_t^v&=[T\tp_{t-1}]_{\epsilon}.
\end{align*}

The description of the algorithm \texttt{Nibble} and Lemma 3.1 in~\cite{spielman2004nearly} implies the following lemma.\footnote{There are many versions of the paper~\cite{spielman2004nearly} available; we refer to~\url{https://arxiv.org/abs/cs/0310051v9}.}

\begin{lemma}[\cite{spielman2004nearly}~]\label{le:Nibble}
For each $\phi\leq 1$, define the parameters
\begin{align*}
t_0 &= \frac{49\ln(me^4)}{\phi^2}\\
\text{and\ \ } \gamma &=\frac{5\phi}{392\ln(me^4)}.
\intertext{For each subset $S\subset V$ satisfying}
\vol(S) &\leq \frac{2}{3}\cdot \vol(V)\\
\text{and\ \ } 
\Phi(S) &\leq \frac{\phi^3}{19208\ln ^2(me^4)},
\end{align*}
there exists a subset $S^g\subseteq S$ with the following properties.
First, $\vol(S^g)\geq \vol(S)/2$.  Second, $S^g$ is partitioned into
$S^g = \bigcup_{b=1}^{\log m} S^g_b$ such that if a random walk is initiated
at any $v\in S^g_b$ with truncation parameter
$\epsilon=\frac{\phi}{56\ln(me^4)t_0 2^b}$, 
then there exists a number $t\in [1,t_0]$ 
and an index $j$ such that the following four conditions are met for the cut $C=\{\tpi_t^v(1), \ldots, \tpi_t^v(j))\}$.
\begin{itemize}
\item[(i)] $\Phi(C)\leq \phi$,
\item[(ii)] $\trho_t(\tpi_t(j))\geq \gamma/\vol(C)$,
\item[(iii)] $\vol(C\cap S)\geq (4/7)2^{b-1}$,
\item[(iv)] $\vol(C)\leq (5/6)\vol(V)$.
\end{itemize}
\end{lemma}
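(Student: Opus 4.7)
My plan is to derive Lemma~\ref{le:Nibble} as an essentially direct consequence of the analysis of the \texttt{Nibble} routine in Spielman and Teng~\cite{spielman2004nearly}. Our four output conditions (i)--(iv) correspond, after a mild re-parameterization, to the guarantees they establish for a single run of \texttt{Nibble}, and the set $S^g$ together with its partition $S^g = \bigcup_b S^g_b$ is precisely the collection of ``good'' starting vertices and matching truncation levels that their analysis identifies.

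First I would define $S^g$ to be the set of vertices from which a lazy random walk of length at most $t_0$ keeps nearly all of its probability mass inside $S$. The hypothesis $\Phi(S) \leq \phi^3/(19208\ln^2(me^4))$ is calibrated so that the total probability mass escaping $S$ within $t_0$ steps, which is at most $t_0 \Phi(S)/2$, is negligibly small. A standard averaging argument over the stationary distribution $d(v)/(2m)$ then yields $\vol(S^g) \geq \vol(S)/2$. Next, I would partition $S^g$ into $\log m$ levels $S^g_b$ according to the target cut volume $2^b$ that a truncated walk is expected to extract from that starting vertex; the truncation parameter $\epsilon = \phi/(56\ln(me^4) t_0 2^b)$ is inversely proportional to $2^b$, which ensures that any $\trho$-value of order $\gamma/2^b$ survives truncation while keeping the cumulative truncation loss $\epsilon t \vol(V)$ controlled.

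With $v \in S^g_b$ fixed, I would then invoke Spielman and Teng's Lov\'asz--Simonovits style argument: the sorted prefix sums of $\tp_t$ form a concave ``curve'' that either decays fast enough to guarantee a sweep cut $C = \{\tpi_t(1), \ldots, \tpi_t(j)\}$ with $\Phi(C) \leq \phi$, yielding (i), and with boundary value $\trho_t(\tpi_t(j)) \geq \gamma/\vol(C)$, yielding (ii); or else stays flat for all $t \leq t_0$, contradicting $v \in S^g$. The truncation-mass bound $\sum_x \tp_t(x) \geq 1 - O(\epsilon t \vol(V)) - t \Phi(S)$ combined with $v \in S^g_b$ implies $\vol(C \cap S) \geq (4/7) 2^{b-1}$, giving (iii), and condition (iv) is the explicit stopping rule of \texttt{Nibble} that rejects any sweep cut of volume exceeding $(5/6)\vol(V)$.

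The bulk of the work lives in the Lov\'asz--Simonovits analysis invoked in step three, and the main obstacle is verifying its interaction with the truncation operator; since this is proved in full in~\cite{spielman2004nearly}, our task reduces to checking that our constants and the level-$b$ truncation parameter line up with theirs and that no additional slack is incurred. No new analytical ingredient is needed beyond their existing framework.
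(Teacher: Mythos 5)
Your proposal matches the paper's treatment of this statement: the paper offers no proof of its own, stating only that the lemma follows from the description of \texttt{Nibble} and Lemma 3.1 of~\cite{spielman2004nearly}, which is precisely the derivation you sketch (good starting set $S^g$ via an averaging/Markov argument, the level-$b$ partition tied to the truncation parameter, and the Lov\'asz--Simonovits sweep-cut analysis). Since both you and the paper defer the substantive work to the cited source, the only remaining task in either case is the constant-matching you already flag, and your sketch of the intermediate steps is consistent with how that source actually argues.
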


In subsequent discussion, with respect to a given parameter  $\phi\leq 1$, for any subset $S\subset V$ satisfying the condition of Lemma~\ref{le:Nibble}, we fix a subset $S^g\subseteq S$ and its decomposition $S^g = \bigcup_{b=1}^{\log m} S^g_b$ to be any choices satisfying  Lemma~\ref{le:Nibble}.

\subsection{Distributed Algorithm \label{subsect:dist-alg}}
Now we give our algorithm \texttt{Distributed Nibble}. 
To simplify things, we present it as a 
sequential algorithm, and prove in 
Lemma \ref{le:Nibblerunningtime} that 
it can be implemented efficiently in the $\CONGEST$ model.
For any permutation $\pi$, we use the notation $\pi(i..j)$ 
to denote the set $\{\pi(i), \pi(i+1), \ldots, \pi(j)\}$.

\begin{algorithm}[H]
\caption{\texttt{Distributed Nibble}}\label{al:disnibble}
\begin{algorithmic}
\item Input: $\phi$.
\For{parameter $b=1$ to $\lceil \log m \rceil$}

    \State Set parameters $t_0=49\ln(me^4)/\phi^2$, and $\epsilon_b=\frac{\phi}{56\ln(me^4)t_02^b}$, as in Lemma \ref{le:Nibble}.
    
    \parState{(1) Independently randomly sample $K=c \log m\cdot \frac{\vol(V)}{2^b}$ vertices $v_1,...,v_K$ proportional to their degrees, where $c$ is a large enough constant.}
    
    \parState{Initialize $\tp_0^{v_i}$.}
    
    \For{$t=1$ to $t_0$, for every $v_i$}

        \State(2) calculate $\tilde{p}_t^{v_i}=[T\tilde{p}_{t-1}^{v_i}]_{\epsilon_b}$

        \State Denote $j_{max}$ as the largest index such that $\tp_t^{v_i}(\tpi_t^{v_i}(j_{max}))>0$.

        \For{$x=0$ to $\log_{1+\phi} (5/6)\vol(V)$}

            \State(3) Set $j\leq j_{max}$ to be the largest index that $\vol(\tpi_t^{v_i}(1..j))\leq (1+\phi)^x$.

            \parState{(4) If $\Phi(\tpi_t^{v_i}(1..j))\leq 12\phi$, output the sparse cut $C=\tpi_t^{v_i}(1..j)$ and halt.}
        \EndFor

    \EndFor

\EndFor

\State Return $failed$.

\end{algorithmic}
\end{algorithm}

From Lemma~\ref{le:Nibble} we know that we can obtain a cut $C$ with some good properties if we start the truncated random walk at a vertex $v\in S^g_b$ with parameter $\epsilon_b$. 
Therefore, what we do in \texttt{Distributed Nibble} is to just sample sufficiently many vertices as the starting points of random walks so that with sufficiently high probability at least one them is in the set $S^g_b$. 
The danger here is that calculating all these random walk distributions simultaneously
may be infeasible if any part of the graph becomes \emph{too congested}.

In this section we analyze the behavior of \texttt{Distributed Nibble} (as a sequential algorithm)
and prove that it operates correctly.
In Section~\ref{subsect-impl} we
argue that \texttt{Distributed Nibble}
can be implemented efficiently in the $\CONGEST$ model, 
in $\poly(\log m, 1/\phi)$ time.\medskip

Roughly speaking, Lemma~\ref{le:approximation} shows that if the sets $\pi(1..j)$ and $\pi(1..j')$ have similar volume, 
then the cuts resulting from these two sets have similar sparsity. This justifies lines (3) and (4) of \texttt{Distributed Nibble} and allows us to examine a small number of prefixes of the permutation $\tpi_t^{v_i}$.

\begin{lemma}\label{le:approximation}
Let $\pi$ be any permutation,  and let $\phi\leq 1/12$. 
If, for some index $j$, 
$\Phi(\pi(1..j))\leq \phi$ and $\vol(\pi(1..j))\leq (5/6)\vol(V)$, 
then $\Phi(\pi(1..j'))\leq 12\phi$ for all indices $j'>j$ such that 
\[
\vol(\pi(1..j'))\leq (1+\phi)\vol(\pi(1..j)).
\]
\end{lemma}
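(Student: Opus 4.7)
\textbf{Proof plan for Lemma~\ref{le:approximation}.}

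Set $C = \pi(1..j)$, $C' = \pi(1..j')$, and let $W = C' \setminus C$ denote the vertices added when passing from $C$ to $C'$. The hypothesis gives $\vol(W) = \vol(C') - \vol(C) \leq \phi\, \vol(C)$. My plan is to first bound the numerator $|\partial(C')|$, then bound the denominator $\min(\vol(C'),\vol(\bar{C'}))$ by a case analysis.

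For the numerator, I would observe that every edge $e \in \partial(C')$ falls into one of two types: either (i) $e$ has one endpoint in $C$ and one in $\bar{C'} \subseteq \bar{C}$, in which case $e \in \partial(C)$, or (ii) $e$ has one endpoint in $W$ (and is therefore counted in $\vol(W)$). This immediately yields
\[
|\partial(C')| \;\leq\; |\partial(C)| + \vol(W) \;\leq\; \phi\min(\vol(C),\vol(\bar C)) + \phi\,\vol(C).
\]

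For the denominator I split into two cases. Case (a): $\vol(C) \leq \vol(V)/2$, so the minimum for $C$ is $\vol(C)$ and the bound above becomes $|\partial(C')|\le 2\phi\,\vol(C)$. Since $\vol(C')\ge\vol(C)$ and $\vol(\bar{C'})\ge (1-\phi)\vol(V)/2 \ge (11/24)\vol(V)$ for $\phi\le 1/12$, a short calculation shows $\min(\vol(C'),\vol(\bar{C'})) \geq (11/12)\vol(C)$, giving $\Phi(C') \leq (24/11)\phi$. Case (b): $\vol(C) > \vol(V)/2$, so the minimum for $C$ is $\vol(\bar C) = \vol(V)-\vol(C)$. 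Here the hypothesis $\vol(C)\le (5/6)\vol(V)$ enters crucially: it gives $\vol(\bar C)\ge \vol(V)/6$, equivalently $\vol(V)\le 6\vol(\bar C)$, which I will use to write $\vol(C)\le 5\vol(\bar C)$ and hence $|\partial(C')| \leq 6\phi\,\vol(\bar C)$. In this case $\vol(C')>\vol(V)/2$, so $\min(\vol(C'),\vol(\bar{C'})) = \vol(\bar{C'})$, and from $\vol(\bar{C'}) \ge \vol(\bar C) - \phi\vol(C) \ge (1 - 5\phi)\vol(\bar C) \ge (7/12)\vol(\bar C)$ for $\phi\leq 1/12$, I obtain $\Phi(C') \leq (72/7)\phi$.

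Taking the worse of the two constants, $\max\{24/11,\,72/7\} = 72/7 < 12$, so $\Phi(C') \leq 12\phi$ in all cases, proving the lemma.

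The only subtle point — which I expect to be the main ``obstacle'' — is recognizing why the hypothesis $\vol(C) \leq (5/6)\vol(V)$ is needed at all; it is used exclusively in Case (b) to certify that $\vol(\bar C)$ is a constant fraction of $\vol(V)$, without which one cannot control $\vol(\bar{C'})$ relative to the volume added by $W$. Everything else is essentially bookkeeping around the inclusion $C\subseteq C'$ and the doubling relation $\vol(C')\leq(1+\phi)\vol(C)$.
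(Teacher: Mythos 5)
Your proof is correct and follows essentially the same route as the paper's: bound $|\partial(C')|$ by $|\partial(C)|+\vol(C'\setminus C)\le |\partial(C)|+\phi\,\vol(C)$, then use the hypothesis $\vol(C)\le(5/6)\vol(V)$ to show the denominator $\min(\vol(C'),\vol(\bar{C'}))$ is at least a constant fraction of $\min(\vol(C),\vol(\bar C))$. The paper handles the two cases in one line via the bound $\vol(\bar{C'})\ge(\vol(V)-\vol(C))/2$ and lands exactly on the constant $12$, whereas your explicit case split yields the slightly sharper $72/7$; the substance is identical.
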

\begin{proof}
Let $x=\vol(\pi(1..j))$ and $y=\vol(\pi(1..j'))$.
Recall that $2m = 2|E| = \vol(V)$, and so $x \leq (5/6)\vol(V) = (5/6)2m$.
We have $x\leq y\leq (1+\phi)x$. 
Since $x\leq (5/6)2m$ and $\phi\leq 1/12$, we have $\phi x \leq x/12 \leq (2m-x)/2$. Therefore,
\[
2m-y\geq 2m-x-\phi x \geq (2m-x)/2.
\]
We calculate an upper bound of $\Phi(\pi(1..j'))$ as follows.
\[
\Phi(\pi(1..j'))=\frac{\partial(\pi(1..j'))}{\min(y,2m-y)}\leq
\frac{\partial(\pi(1..j))+\sum_{i=j+1}^{j'}d(\pi(i))}{\min(x,(2m-x)/2 )}\leq \frac{\partial(\pi(1..j))+\phi x}{\min(x,(2m-x))/2}\leq 12\phi.
\]
We explain the details of the derivation.
The first inequality is due to $x\leq y$ and $(2m-x)/2  \leq 2m-y$, which follow from the above discussion.
The second inequality is due to the fact that $\sum_{i=j+1}^{j'}d(\pi(i)) = \vol(\pi(1..j')) - \vol(\pi(1..j)) \leq \phi\cdot \vol(\pi(1..j)) = \phi x$. For the third inequality, note that $\frac{\partial(\pi(1..j))}{\min(x,(2m-x))} \leq \phi$ and $\frac{\phi x}{\min(x,(2m-x))}\leq 5\phi$, since $x\leq (5/6)2m$.
\end{proof}

\begin{lemma}\label{le:sg}
Let $S\subset V$ be any subset satisfying
\[
\vol(S)\leq (2/3)\vol(V) ~~~\text{and}~~~ \Phi(S)\leq \frac{\phi^3}{19208\ln ^2(me^4)}.
\]
Then there exists a number $b$ such that $\vol(S^g_b)\geq 2^b/32$.
\end{lemma}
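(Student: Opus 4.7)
The plan is to prove Lemma~\ref{le:sg} by a direct pigeonhole argument on the partition $S^g = \bigcup_{b=1}^{\log m} S^g_b$ guaranteed by Lemma~\ref{le:Nibble}, combined with the lower bound $\vol(S^g) \geq \vol(S)/2$ it provides. The naive attempt --- if $\vol(S^g_b) < 2^b/32$ for every $b \in \{1,\ldots,\lceil \log m \rceil\}$, then $\vol(S^g) < \sum_{b} 2^b/32 \leq m/16$ --- is too weak on its own: the required contradiction $\vol(S)/2 \leq \vol(S^g) < m/16$ fails when $\vol(S) \ll m$, which is exactly the regime the lemma is designed to handle.

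The missing ingredient is that the partition is mostly vacuous. The plan is to argue that $S^g_b = \emptyset$ for every $b$ larger than roughly $\log \vol(S)$. To see this, I would invoke condition (iii) of Lemma~\ref{le:Nibble}: for any $v \in S^g_b$, there is a cut $C$ with $\vol(C \cap S) \geq (4/7) 2^{b-1}$. Since $\vol(C \cap S) \leq \vol(S)$, this forces $2^b \leq (7/2)\vol(S)$, so $S^g_b = \emptyset$ whenever $b > b_{\max} := \log_2(7\vol(S)/2)$.

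With this in hand, the pigeonhole sum tightens. Assuming for contradiction that $\vol(S^g_b) < 2^b/32$ for every $b$,
\[
\vol(S^g) \;=\; \sum_{b=1}^{\lfloor b_{\max} \rfloor} \vol(S^g_b) \;<\; \frac{1}{32} \sum_{b=1}^{\lfloor b_{\max}\rfloor} 2^b \;<\; \frac{2^{b_{\max}+1}}{32} \;\leq\; \frac{7\,\vol(S)}{32} \;<\; \frac{\vol(S)}{2},
\]
contradicting $\vol(S^g) \geq \vol(S)/2$. Hence some $b$ achieves $\vol(S^g_b) \geq 2^b/32$, as required.

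The main obstacle is simply noticing that one cannot blindly sum $2^b/32$ up to $\log m$; one must first prune the partition using condition (iii) of Lemma~\ref{le:Nibble}. After that, the computation is pure arithmetic, and no additional random-walk machinery is needed because Lemma~\ref{le:Nibble} has already done all of the heavy lifting.
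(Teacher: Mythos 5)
Your proof is correct and follows essentially the same route as the paper's: both use Condition (iii) of Lemma~\ref{le:Nibble} to conclude that $S^g_b=\emptyset$ once $2^b$ exceeds a constant multiple of $\vol(S)$, and then sum the geometric series only up to that truncated index to contradict $\vol(S^g)\geq \vol(S)/2$. The only difference is in the inessential constants (the paper bounds the sum by $\vol(S)/4$, you by $7\vol(S)/32$), and both suffice.
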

\begin{proof} Denote $x=\vol(S)$. From Condition~(iii) of Lemma~\ref{le:Nibble} we deduce that if $S^g_b\neq \emptyset$, then there exists a set of vertices $C$ such that $\vol(S)\geq \vol(C\cap S)\geq (4/7)2^{b-1}$. Thus, for all $b$ such that $b\geq \lceil \log x \rceil +2$, we must have $S^g_b=\emptyset$. If the statement of this lemma is false, i.e., $\vol(S^g_b)<2^b/32$ for all $b$, then
\[
\vol(S^g)\leq \sum_{b=1}^{\lceil \log x \rceil +1}\frac{2^b}{32} < \frac{2^{\lceil \log x \rceil + 2}}{32} < x/4,
\]
which contradicts the 
requirement $\vol(S^g)\geq \vol(S)/2$ specified in Lemma~\ref{le:Nibble}.
\end{proof}

\begin{lemma}[Correctness]\label{le:correctness}
For any $\phi\leq 1/12$, if there exists a subset $S\subset V$ satisfying
\[
\vol(S)\leq (2/3)\vol(V) ~~~\text{and}~~~ \Phi(S)\leq \frac{\phi^3}{19208\ln ^2(me^4)},
\]
then \texttt{Distributed Nibble} outputs a set of vertices $C$
such that $\Phi(C)\leq 12\phi$ with probability at least $1-1/\poly(m)$.
\end{lemma}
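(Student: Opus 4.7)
The plan is to combine Lemmas~\ref{le:Nibble},~\ref{le:approximation}, and~\ref{le:sg} with a standard sampling argument for line~(1) of \texttt{Distributed Nibble}. The key observation is that success during a single well-chosen iteration of $b$ already suffices, so no union bound over $b$ is needed.

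First I would apply Lemma~\ref{le:sg} to the hypothesized cut $S$, obtaining an index $b^\star$ with $\vol(S^g_{b^\star}) \geq 2^{b^\star}/32$. In the $b=b^\star$ iteration the algorithm draws $K = c\log m\cdot \vol(V)/2^{b^\star}$ vertices independently with probability proportional to degree, so each draw lies in $S^g_{b^\star}$ with probability at least $2^{b^\star}/(32\vol(V))$. The probability that no draw lands in $S^g_{b^\star}$ is then at most $(1-2^{b^\star}/(32\vol(V)))^K \leq \exp(-c(\log m)/32)$, which is $1/\poly(m)$ for a sufficiently large constant $c$. Condition on at least one sample $v_i$ lying in $S^g_{b^\star}$.

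Next I would identify the algorithm's parameters with those of Lemma~\ref{le:Nibble}. The choices $t_0 = 49\ln(me^4)/\phi^2$ and $\epsilon_{b^\star} = \phi/(56\ln(me^4)t_0 2^{b^\star})$ match the lemma exactly, so it guarantees a step $t\in[1,t_0]$ and an index $j$ for which the idealized cut $C^\star = \tpi_t^{v_i}(1..j)$ satisfies $\Phi(C^\star)\leq \phi$, $\vol(C^\star)\leq (5/6)\vol(V)$, and (via condition~(ii)) $\tp_t^{v_i}(\tpi_t^{v_i}(j)) > 0$. The last point implies $j\le j_{\max}$ in the algorithm's notation. Since the outer loop iterates $t$ through $[1,t_0]$, the algorithm reaches this step and computes $\tpi_t^{v_i}$.

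The main obstacle, which the rest of the argument must address, is that the algorithm never inspects $C^\star$ directly: its discretization in line~(3) only tests prefixes whose volumes fall into buckets of width $(1+\phi)^x$. To handle this I would let $x^\star = \lceil \log_{1+\phi}\vol(C^\star)\rceil$, which lies within the range scanned because $\vol(C^\star)\leq (5/6)\vol(V)$, and let $j'$ be the index the algorithm selects at $x=x^\star$. By construction $j \leq j' \leq j_{\max}$ and
\[
\vol(\tpi_t^{v_i}(1..j')) \;\leq\; (1+\phi)^{x^\star} \;\leq\; (1+\phi)\vol(C^\star).
\]
Applying Lemma~\ref{le:approximation} with permutation $\tpi_t^{v_i}$ and indices $j, j'$ yields $\Phi(\tpi_t^{v_i}(1..j'))\leq 12\phi$, so the test on line~(4) triggers and the algorithm outputs a cut of conductance at most $12\phi$. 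Lemma~\ref{le:approximation} is calibrated exactly for this gluing (with the matching constant $12$), so once its hypotheses are verified the argument reduces to bookkeeping, completing the proof.
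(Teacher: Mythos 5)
Your proposal is correct and follows essentially the same route as the paper's proof: Lemma~\ref{le:sg} plus the degree-proportional sampling bound to hit $S^g_{b}$ w.h.p., then Lemma~\ref{le:Nibble} to produce the idealized prefix cut, and Lemma~\ref{le:approximation} to transfer its sparsity to the nearest volume bucket $(1+\phi)^{x}$ tested in line~(3). Your explicit choice $x^\star=\lceil\log_{1+\phi}\vol(C^\star)\rceil$ just makes concrete the index the paper asserts exists; there is no substantive difference.
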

\begin{proof} From Lemma~\ref{le:sg} we know there exists a number $b$ such that $\vol(S^g_b)\geq 2^b/32$. Since we sample $v_i$ proportional to the degree distribution,
\[
\Pr[v_i\in S^g_b]=\frac{\vol(S^g_b)}{\vol(V)}\geq \frac{2^b}{32\cdot \vol(V)}.
\]
Since we sample $K=c \log m\cdot \frac{\vol(V)}{2^b}$ number of vertices, \[
\Pr[\exists i \text{ s.t. } v_i\in S^g_b]
\geq 1-\left(1- \frac{2^b}{32\vol(V)}\right)^{c\log m \frac{\vol(V)}{2^b}}
\geq 1-m^{-\Omega(c)}.
\]

Now we focus on the truncated random walk starting at this vertex $v_i \in S^g_b$. We fix two numbers $t \in[1,t_0]$ and $j$ such that the four conditions in Lemma~\ref{le:Nibble} are satisfied. 
In particular,  Condition (i) and Condition~(iv) in Lemma~\ref{le:Nibble} say that 
\begin{align*}
    \vol(\tpi_t^{v_i}(1..j)) &\leq (5/6)\vol(V),\\
    \Phi(\tpi_t^{v_i}(1..j))&\leq \phi.
\end{align*}
Therefore, we are able to apply Lemma~\ref{le:approximation}, and so we have $\Phi(\tpi_t^{v_i}(1..j'))\leq 12\phi$ for all indices $j'$ such that $\vol(\tpi_t^{v_i}(1..j)) \leq \vol(\tpi_t^{v_i}(1..j'))\leq (1+\phi)\vol(\tpi_t^{v_i}(1..j))$.

In \texttt{Distributed Nibble}, we search for a cut with target volume $(1+\phi)^x$, for all possible integers $x$. 
Note that Condition (ii) in Lemma~\ref{le:Nibble} implies $j\leq j_{max}$.
Therefore, in Step~(3) of \texttt{Distributed Nibble}, at least one index $j^\star$ picked by the algorithm satisfies
\begin{align*}
    &\vol(\tpi_t^{v_i}(1..j)) \leq \vol(\tpi_t^{v_i}(1..j^\star))\leq (1+\phi)\vol(\tpi_t^{v_i}(1..j)). 
\end{align*}
By Lemma~\ref{le:approximation}, 
the cut $C= \tpi_t^{v_i}(\{1,...,j^\star\})$ associated with this index $j^\star$ found in Step~(4)  meets the requirement $\Phi(C)\leq 12\phi$ of the lemma.
\end{proof}

\subsection{Implementation \label{subsect-impl}}

We show how to implement \texttt{Distributed Nibble} in the $\CONGEST$ model. 
The goal of this section is to prove Lemma~\ref{le:Nibblerunningtime}.
Note that Lemma~\ref{le:lowconductance} is a consequence of Lemmas~\ref{le:correctness} and~\ref{le:Nibblerunningtime}.

\begin{lemma}\label{le:Nibblerunningtime}
\texttt{Distributed Nibble} can be implemented in the $\CONGEST$ model using $O(D + \log^9 m/\phi^{10})$ 
rounds, with success probability $1 - 1/\poly(m)$, where $D$ is the diameter of graph. If \texttt{Distributed Nibble} outputs a set $C$ successfully, then each vertex knows whether or not it belongs to $C$.
\end{lemma}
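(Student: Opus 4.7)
The plan is to implement the sequential description of \texttt{Distributed Nibble} in the $\CONGEST$ model, using a BFS tree of $G^\ast$ (built in $O(D)$ rounds) as a global coordination backbone. Once built, this tree is used to broadcast the global quantities $\vol(V(G^\ast))$ and $|E^\ast|$ to every vertex, to collect sample counts and aggregated cut statistics, and to distribute the final halting signal. Apart from the initial $O(D)$ BFS construction, everything else should take $\poly(\log m, 1/\phi)$ rounds.

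For each phase $b$, I would first handle the random sampling of $K$ starting vertices by letting each vertex $v$ independently flip a coin with success probability $\Theta\!\left(d(v)\cdot 2^b/\vol(V(G^\ast))\right)$; a Chernoff bound then produces $\Theta(K)$ starts w.h.p., more than enough for Lemma~\ref{le:correctness} to go through. The central task is simulating the $K$ truncated walks in parallel. The key quantitative property is that every $\tp_t^{v_i}$ has support volume at most $1/(2\epsilon_b)$: any supported vertex $x$ carries mass at least $2\epsilon_b d(x)$, and the total probability mass is at most $1$. One walk-step $\tp_t=[T\tp_{t-1}]_{\epsilon_b}$ reduces to each supported vertex transmitting its current probability to each neighbor, followed by a local sum and a local truncation test. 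Each probability fits in $O(\log(m/\phi))$ bits, i.e., $O(1/\phi)$ $\CONGEST$ messages, so one step of one isolated walk costs $O(1/\phi)$ rounds.

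To run all $K$ walks concurrently without overloading an edge, I would assign each walk an independent uniform random color from a palette of size $C = \poly(\log m, 1/\phi)$ and process the colors sequentially. The aggregate support volume summed over all walks is $K/(2\epsilon_b) = O(\log^3 m \cdot \vol(V(G^\ast))/\phi^3)$, so the average per-edge load per step is $\tilde O(1/\phi^3)$; a Chernoff bound over the independent color assignments, combined with the per-walk support-volume budget, lets us choose $C$ so that the per-color per-edge load is $\poly(\log m, 1/\phi)$ w.h.p. A single step of a single color then runs in $\poly(\log m, 1/\phi)$ $\CONGEST$ rounds, and over $t_0 = O(\log m/\phi^2)$ steps, $C$ colors, and $O(\log m)$ phases the accounting gives the claimed $O(D + \log^9 m/\phi^{10})$ bound. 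For the cut-evaluation and output steps of \texttt{Distributed Nibble}, I would use the fact that each support has volume only $\poly(\log m, 1/\phi)\cdot 2^b$: the triples $(\trho_t(x), \ID(x), d(x))$ at supported vertices can be pipelined up the BFS tree so that the root locally enumerates the $O(\log(\vol(V^\ast))/\phi)$ candidate cut volumes $(1+\phi)^x$, picks the appropriate prefixes of each $\tpi_t^{v_i}$, and then initiates a second aggregation to tally boundary sizes and conductances. As soon as some cut with $\Phi(C) \leq 12\phi$ is detected, the root broadcasts a halt along the tree so that every vertex learns membership in $C$.

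The hard part will be the parallel-walk scheduling. A single vertex can easily lie in the supports of many walks at once, and unlike static routing the supports evolve in a data-dependent way from step to step, so the coloring argument has to tolerate this adaptivity. The per-walk contribution to any fixed edge can be as large as $1/\epsilon_b$, which is too large for a textbook Chernoff bound applied directly at the edge level; I expect the fix to require either splitting each walk's contribution into $O(\log m)$ levels according to the per-vertex multiplicity and scheduling levels separately, or else replacing the plain random coloring by a multi-scale scheduling that respects the support-volume budget at each time step. In either case the cost is only an additional polylogarithmic factor, which is consistent with the quoted exponents in the statement.
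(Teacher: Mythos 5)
Your outline correctly isolates the hard part (running $K_b$ truncated walks in parallel without congestion), and your per-walk observation that $\operatorname{supp}(\tp_t^{v_i})$ has volume at most $1/(2\epsilon_b)$ is right. But the proposal is missing the one idea that actually resolves the congestion problem, and the substitute you sketch (random coloring of walks plus multi-scale scheduling) cannot be made to work from the bounds you have. The quantity you need to control is not the \emph{aggregate} load $K_b/(2\epsilon_b)$ but the number of sampled walks whose support contains a \emph{fixed} vertex $u$; your per-walk support bound says nothing about this, and a priori nearly all $K_b$ walks could have $u$ in their support, in which case no palette of $\poly(\log m,1/\phi)$ colors reduces the per-color load at $u$ to anything tolerable. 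The paper closes exactly this gap using time-reversibility of the lazy walk, $\rho_t^v(u)=\rho_t^u(v)$ (Lemma~\ref{le:rho}): if $\tp_{t-1}^v(u)>0$ then $p_{t-1}^v(u)\ge 2\epsilon_b d(u)$, hence $p_{t-1}^u(v)\ge 2\epsilon_b d(v)$, so the set $S$ of \emph{possible} sources $v$ whose walk touches $u$ satisfies $\vol(S)\le 1/(2\epsilon_b)$. Since sources are sampled proportionally to degree, each sample lands in $S$ with probability at most $1/(2\epsilon_b\vol(V))$, and a Chernoff bound over the $K_b=\Theta(\epsilon_b\vol(V)\log^3 m/\phi^3)$ independent samples gives $|Z_t(u)|=O(\log^3 m/\phi^3)$ w.h.p.\ (Lemma~\ref{le:haha}). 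With that bound, no coloring or scheduling is needed at all: each vertex simply processes, one at a time, the $O(\log^3 m/\phi^3)$ walks that touch it.

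A secondary problem is your implementation of Steps (3)--(4). Pipelining the triples $(\trho_t(x),\ID(x),d(x))$ for all supported vertices of all walks up a single global BFS tree moves $K_b\cdot 1/(2\epsilon_b)=\Theta(\vol(V)\log^3 m/\phi^3)$ items through the root, which is far beyond $\poly(\log m,1/\phi)$ rounds. The paper instead roots a shallow BFS tree (depth at most $t_0+1$) at each $v_i$ spanning only that walk's support $U$, and runs a randomized binary search over prefixes of $\tpi_t^{v_i}$ so the root never collects the whole support; the overhead of running these trees concurrently is again controlled by the $|Z_t(u)|$ bound. Your sampling probability should also read $\Theta(d(v)\log m/2^b)=K_b\,d(v)/\vol(V)$ rather than $\Theta(d(v)\cdot 2^b/\vol(V))$, though that is a minor slip compared with the missing reversibility argument.
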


To prove this lemma, we shall analyze \texttt{Distributed Nibble} step by step.

\begin{lemma}[Step (1)]\label{le:step1}
The samples for every level $b$ (from 1 to $\lceil \log m \rceil$) can be generated in $O(D + \log m)$ time.
\end{lemma}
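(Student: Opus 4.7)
The plan is to sidestep any global enumeration of samples: after a single phase of global aggregation on $G^\ast$, every vertex can independently generate its own share of the $K = c\log m\cdot \vol(V)/2^b$ samples at every level $b$, using only locally available information. First I would construct a BFS tree of $G^\ast$ rooted at an arbitrary fixed vertex (e.g., the minimum-$\ID$ vertex of $G^\ast$) in $O(D)$ rounds, aggregate $\vol(V(G^\ast)) = \sum_{v\in V(G^\ast)} \deg_{G^\ast}(v)$ up the tree in $O(D)$ rounds, and broadcast the resulting value back down in $O(D)$ rounds; each message carries $O(\log m)$ bits since $\vol(V)\le 2m$. After this phase every vertex $v$ knows both $\deg_{G^\ast}(v)$ and $\vol(V)$, hence every value of $K$ for every $b$.

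Next, for each $b \in \{1,\ldots,\lceil\log m\rceil\}$, vertex $v$ independently draws $k_v^{(b)} \sim \mathrm{Binomial}(K,\, \deg_{G^\ast}(v)/\vol(V))$ using its private random bits, and labels its prospective random-walk starting points $(v,1),\dots,(v,k_v^{(b)})$; these labels will later distinguish the walks during the simulation. No communication is needed for this step. The marginal distribution of $k_v^{(b)}$ matches that induced by drawing $K$ i.i.d.\ samples from the degree distribution, and the only thing to check is that independent per-vertex binomials (rather than the multinomial joint law implicit in the statement of \texttt{Distributed Nibble}) still yield the correctness guarantee invoked in Lemma~\ref{le:correctness}, which only requires that some sample lands in the good set $S^g_b$. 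Under our procedure this fails with probability
\[
\prod_{v\in S^g_b}\!\left(1-\tfrac{\deg(v)}{\vol(V)}\right)^{K} \;\leq\; \exp\!\left(-\tfrac{K\,\vol(S^g_b)}{\vol(V)}\right),
\]
and substituting $\vol(S^g_b)\geq 2^b/32$ from Lemma~\ref{le:sg} together with $K = c\log m\cdot \vol(V)/2^b$ yields a failure probability of at most $m^{-c/32}$, which is $1/\poly(m)$ for $c$ large enough.

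Overall, the round complexity is dominated by the three BFS passes, giving $O(D)$ rounds; the additive $O(\log m)$ in the claim covers iterating over the $\lceil\log m\rceil$ levels during local bookkeeping, while the local sampling itself uses no communication. The main, and essentially only, subtle point is confirming that independent per-vertex binomial sampling is an adequate substitute for the joint multinomial sampling implicit in \texttt{Distributed Nibble}; the calculation above shows it is, because the analysis of Lemma~\ref{le:correctness} only depends on the probability of hitting $S^g_b$ at least once.
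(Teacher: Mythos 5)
Your proof is correct, and it coincides with the second, simpler variant that the paper's own proof explicitly offers: compute $\vol(V)$ via a BFS tree, disseminate it, and let each vertex join the $b$th sample independently with probability proportional to its degree. The paper's \emph{primary} construction is different in one respect: it generates exactly $K_b$ tokens at the root and filters them down the BFS tree (each vertex absorbing a token with probability $d(v)/s(v)$, where $s(v)$ is the subtree volume), which selects \emph{exactly} $K_b$ vertices per level and needs the pipelined $O(D+\log m)$ bound; your scheme only matches $K_b$ in expectation. Your additional check --- that replacing the multinomial joint law by independent per-vertex binomials still makes some sample hit $S^g_b$ with probability $1-1/\poly(m)$, via $\prod_{v\in S^g_b}(1-\deg(v)/\vol(V))^{K}\le \exp(-K\vol(S^g_b)/\vol(V))$ and Lemma~\ref{le:sg} --- is exactly the point the paper leaves implicit, and it is the right thing to verify since Lemma~\ref{le:correctness} only needs the hitting event; the concentration argument in Lemma~\ref{le:haha} also survives because the per-vertex sampling indicators remain independent. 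So the two approaches are interchangeable here; the exact-count version buys a deterministic sample size, while yours buys a shorter argument ($O(D)$ communication plus purely local randomness).
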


\begin{proof}
We build a BFS tree rooted at an arbitrary vertex $x$. For each vertex $v$, define $s(v)$ as the sum of $d(u)$ for each $u$ in the subtree rooted at $v$. In $O(D)$ rounds we can let each vertex $v$ learn the number $s(v)$ by a bottom-up traversal of the BFS tree. 

In the beginning, for each  $b=1,\ldots,\lceil \log m \rceil$,  we generate $K_b = c \log m \frac{\vol(V)}{2^b}$ number of $b$-tokens at the root $x$. 
Let $L=\Theta(D)$ be the number of layers in the BFS tree. For $i = 1, \ldots, L$, the vertices of layer $i$ do the following. When a $b$-token arrives at $v$, 
the token disappears at $v$ with probability $d(v)/s(v)$
and $v$ includes itself in the $b$th sample;
otherwise, $v$ sends the token to a child $u$ 
with probability $\frac{s(u)}{s(v)-d(v)}$. 
Note that $v$ only needs to tell each child $u$ 
how many $b$-tokens $u$ gets.  Thus, for each $b$, the process of choosing $K_b = c \log m \frac{\vol(V)}{2^b}$ vertices from the degree distribution can be done in 
$L$ rounds. By pipelining, we can do this for all $b$ in $O(D+\log m)$ rounds.

This method has the virtue of selecting exactly $K_b$ vertices in the $b$th sample.  We can also select $K_b$ vertices in expectation, in just $O(D)$ time, simply by computing $\vol(V)$ with a BFS tree, disseminating it to all vertices,
and letting each $v$ join the sample independently with probability $K_b \deg(v)/\vol(V)$.
\end{proof}

It is not obvious why Step (2) of \texttt{Distributed Nibble} should be efficiently implementable in the $\CONGEST$ model.  
Before analyzing it, we give some helpful lemmas about lazy random walks.

\begin{lemma}[\cite{spielman2004nearly}~]\label{le:rho}
For all $u,v,$ and $t$, $\rho_t^v(u)=\rho_t^u(v)$.
\end{lemma}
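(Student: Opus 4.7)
My plan is to prove this as a direct consequence of the reversibility of the lazy random walk with respect to the stationary distribution $\pi(u) = d(u)/(2m)$.

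First I would make the matrix formulation explicit. The lazy random walk transition matrix defined in the paper is $T = (AD^{-1} + I)/2$, so that $p_t^v = T^t e_v$ and hence $p_t^v(u) = (T^t)_{u,v}$. The claim $\rho_t^v(u) = \rho_t^u(v)$ is therefore equivalent to the identity
\[
d(v)\,(T^t)_{u,v} \;=\; d(u)\,(T^t)_{v,u},
\]
which in matrix form says that $T^t D$ is symmetric.

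The main step is to exhibit a similarity between $T$ and a symmetric matrix. Define
\[
M \;=\; D^{-1/2} T D^{1/2} \;=\; \tfrac{1}{2}\bigl(I + D^{-1/2} A D^{-1/2}\bigr).
\]
Because $G$ is undirected, $A$ is symmetric, so $M$ is symmetric and therefore $M^t$ is symmetric for every $t \ge 0$. From $T = D^{1/2} M D^{-1/2}$ we get $T^t = D^{1/2} M^t D^{-1/2}$, which entry-wise reads $(T^t)_{u,v} = d(u)^{1/2}(M^t)_{u,v}d(v)^{-1/2}$. Plugging this into the definition of $\rho$,
\[
\rho_t^v(u) \;=\; \frac{p_t^v(u)}{d(u)} \;=\; \frac{(T^t)_{u,v}}{d(u)} \;=\; \frac{(M^t)_{u,v}}{\sqrt{d(u)\,d(v)}},
\]
and the right-hand side is symmetric in $u,v$ by symmetry of $M^t$. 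Swapping $u$ and $v$ gives the same quantity as $\rho_t^u(v)$, proving the lemma.

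I do not expect any real obstacle here; the only thing to be careful about is the orientation convention for $T$ (i.e., whether $T_{u,v}$ is the $v\!\to\!u$ or $u\!\to\!v$ transition probability), which has to match the definition $p_t = T p_{t-1}$ used implicitly in the paper. Once that is pinned down, the symmetrization via $D^{1/2}$-conjugation is immediate. An equivalent and equally short presentation, which I could use as a sanity check, is to invoke reversibility directly: for any reversible chain with stationary distribution $\pi$, one has $\pi(u)P^t(u,v) = \pi(v)P^t(v,u)$, and specializing to $\pi(u) = d(u)/(2m)$ yields exactly the required identity.
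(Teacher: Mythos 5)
Your proof is correct, but it takes a different route from the paper's. The paper proves the identity combinatorially: it sums over all walks $W=(x_0,\dots,x_t)$ from $u$ to $v$, pairs each $W$ with its reversal $W^R$, and shows $\Prob[W]/d(v)=\Prob[W^R]/d(u)$ by splicing out the lazy (repeated) steps and observing that $2^{-t}\prod_{i=0}^{s}1/d(y_i)$ is invariant under reversing the spliced walk. You instead conjugate $T=(AD^{-1}+I)/2$ by $D^{1/2}$ to obtain the symmetric matrix $M=\tfrac12(I+D^{-1/2}AD^{-1/2})$, deduce $(T^t)_{u,v}=d(u)^{1/2}(M^t)_{u,v}d(v)^{-1/2}$, and read off $\rho_t^v(u)=(M^t)_{u,v}/\sqrt{d(u)d(v)}$, which is manifestly symmetric. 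Your convention check is also right: with $p_t=Tp_{t-1}$ and $T$ column-stochastic, $p_t^v(u)=(T^t)_{u,v}$, which is exactly what your computation needs. The spectral argument is shorter and is the standard reversibility statement $\pi(u)P^t(u,v)=\pi(v)P^t(v,u)$ specialized to $\pi(u)=d(u)/(2m)$; the paper's walk-reversal argument is more elementary and self-contained (no linear algebra), and it makes visible at the level of individual trajectories why the identity holds, though for this lemma (which concerns the \emph{untruncated} walk only) either argument suffices.
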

\begin{proof}
This lemma was observed in~\cite{spielman2004nearly} without proof.
For the sake of completeness, we provide a short proof here.
A sequence of vertices $W=(x_0, x_1, \ldots, x_t)$ is called a {\em walk} of length $t$ if $x_{i+1} \in N(x_i) \cup \{x_i\}$ for each $i\in [0,t)$. 
We write $\Prob[W]$ to be the probability that the first $t$ steps of a lazy random walk starting at $x_0$ tracks $W$.
Let $W^R = (x_t, x_{t-1}, \ldots, x_0)$ be the reversal of $W$.

Let $\mathcal{W}_t^{u,v}$ be the set of walks of length $t$ starting at $u$ and ending at $v$. 
It is clear that 
$\rho_t^u(v) = \sum_{W \in \mathcal{W}_t^{u,v}} \Prob[W] / d(v)$ and 
$\rho_t^v(u) = \sum_{W \in \mathcal{W}_t^{v,u}} \Prob[W] / d(u)$.
Since $\mathcal{W}_t^{v,u} = \{ W^R \mid W \in \mathcal{W}_t^{u,v}\}$, to prove the lemma it suffices to show that
$\Prob[W] / d(v) = \Prob[W^R] / d(u)$ for each $W \in \mathcal{W}_t^{u,v}$.

Fix any $W \in \mathcal{W}_t^{u,v}$
and let $W_{\ast}=(y_0, \ldots, y_s)$ be the subsequence of $W$ 
resulting from splicing out immediate repetitions in $W$.
It is clear that $\Prob[W] = 2^{-t} \cdot \prod_{i=0}^{s-1} 1/d(y_i)$, 
and so 
\[
\frac{\Prob[W]}{d(v)} 
= \frac{\Prob[W]}{d(y_s)} 
= 2^{-t} \cdot \prod_{i=0}^{s} \frac{1}{d(y_i)}
= \frac{\Prob[W^R]}{d(y_0)}
= \frac{\Prob[W]}{d(u)}.\qedhere
\]
\end{proof}

\begin{lemma}\label{le:haha}
Fix the parameter $b$ (which influences $\epsilon_b$ and hence
the truncation operation of the random walk)
and define 
\[
Z_t(u)=\{v_i \mid \mbox{$v_i$ is in the $b$th sample and } \tp_{t-1}^{v_i}(u)>0\}.
\]
For every vertex $u$ and every $t$, with high probability,
$|Z_t(u)|\leq O(\log^3 m/\phi^3)$.
\end{lemma}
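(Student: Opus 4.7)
The plan is to bound $|Z_t(u)|$ by first identifying a small ``reachable set'' of vertices from $u$ (determined by the unperturbed random walk alone), then observing that $Z_t(u)$ is contained in the intersection of this set with the $b$th sample, and finally using a Chernoff-type argument.

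First I would observe that truncation can only decrease mass pointwise, so by a straightforward induction on $t$ one has $\tp_t^{v}(x) \leq p_t^{v}(x)$ for every $v, x, t$. Combined with the truncation rule, the hypothesis $\tp_{t-1}^{v_i}(u) > 0$ implies $\tp_{t-1}^{v_i}(u) \geq 2\epsilon_b \, d(u)$, and therefore $p_{t-1}^{v_i}(u) \geq 2\epsilon_b \, d(u)$, i.e., $\rho_{t-1}^{v_i}(u) \geq 2\epsilon_b$. Now I invoke the reversibility property of Lemma~\ref{le:rho}, $\rho_{t-1}^{v_i}(u) = \rho_{t-1}^{u}(v_i)$, to swap the roles of source and target: this gives $p_{t-1}^{u}(v_i) \geq 2\epsilon_b \, d(v_i)$. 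Thus every $v_i \in Z_t(u)$ lies in the set
\[
U := \{\, v \in V : p_{t-1}^{u}(v) \geq 2\epsilon_b \, d(v) \,\},
\]
which depends only on $u$, $t$, and the graph, \emph{not} on the random samples.

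Next I would bound $\vol(U)$. Since $\sum_{v \in V} p_{t-1}^u(v) \leq 1$ and each $v \in U$ contributes at least $2\epsilon_b \, d(v)$, summing over $U$ yields $\vol(U) \leq 1/(2\epsilon_b)$. Plugging in $\epsilon_b = \phi/(56\ln(me^4)\, t_0\, 2^b)$ and $t_0 = 49\ln(me^4)/\phi^2$ gives $\vol(U) = O(\ln^2(m)\, 2^b / \phi^3)$. Since the $K_b = c \log m \cdot \vol(V)/2^b$ samples of the $b$th round are drawn independently from the degree distribution, each sample lies in $U$ with probability $\vol(U)/\vol(V)$, so
\[
\E[\,|\{i : v_i \in U\}|\,] \;\leq\; K_b \cdot \frac{\vol(U)}{\vol(V)} \;=\; O\!\left(\frac{\log^3 m}{\phi^3}\right).
\]

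Finally I would apply a standard Chernoff bound to the sum of independent $0/1$ indicators $\mathbf{1}[v_i \in U]$: since the expectation is $\Omega(\log m)$, choosing the sampling constant $c$ large enough makes the actual count at most $O(\log^3 m / \phi^3)$ with probability $1 - 1/\poly(m)$. A union bound over all $u \in V$ and all $t \in [1, t_0]$ (polynomially many events) preserves the high-probability guarantee. The main subtlety to get right is the independence structure: the set $U$ is a deterministic function of the graph, $u$, $t$, $b$, so the inclusion $Z_t(u) \subseteq \{v_i \in U\}$ reduces the random quantity to a simple binomial, which is why Chernoff applies directly without worrying about adaptive dependencies between the random walks and their starting points.
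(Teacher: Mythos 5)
Your proposal is correct and follows essentially the same route as the paper's proof: both use the reversibility identity of Lemma~\ref{le:rho} to convert the condition $\tp_{t-1}^{v_i}(u)>0$ into $p_{t-1}^u(v_i)\ge 2\epsilon_b d(v_i)$, bound the volume of the resulting candidate set by $1/(2\epsilon_b)$, compute the expected number of samples landing in it as $O(\log^3 m/\phi^3)$, and finish with a Chernoff bound over the independent samples. The only (harmless) difference is that you define the candidate set via the untruncated walk from $u$ rather than the truncated walks into $u$, which yields a superset with the same volume bound.
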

\begin{proof} 
Define $S=\{v\in V \mid \tp_{t-1}^v(u)>0\}$. 
By definition $Z_t(u) = S \cap \{v_1, \ldots, v_{K_b}\}$.
For each $v\in S$, we have 
$p_{t-1}^v(u)\geq \tp_{t-1}^v(u)\geq 2\epsilon_b d(u)$.
Recall that $p_{t-1}$ is the probability distribution obtained after $t-1$ 
steps of the lazy random walk without truncation. By Lemma~\ref{le:rho}, 
\[
p_{t-1}^u(v)= (p_{t-1}^v(u)/d(u)) d(v)\geq 2\epsilon_b\cdot d(v).
\]
Therefore,
$2\epsilon_b\cdot \vol(S) \leq \sum_{v \in S} p_{t-1}^u(v) \leq 1$, and so
$\vol(S)\leq \frac{1}{2\epsilon_b}$, which implies
\[
\Pr[v_i\in S]\leq \frac{1}{2\epsilon_b\cdot \vol(V)}.
\]

Recall that  $t_0=\frac{49\ln(me^4)}{\phi^2}$ and $\epsilon_b=\frac{\phi}{56\ln(me^4)t_02^b}$.
Rewrite the number $K_b = c \log m \frac{\vol(V)}{2^b}$   as 
$K_b =\Theta(\epsilon_b\cdot \vol(V)\cdot \log^3m/\phi^3)$.
Since each of $v_1, \ldots, v_{K_b}$ is chosen independently, 
using a Chernoff bound we conclude 
that there exists a constant $c' > 0$ depending on $c$ such that 
\[
\Pr[|Z_t(u)|> c'\log^3 m/\phi^3]\leq \exp(-\Omega(\log ^3 m/\phi^3)).\qedhere
\]
\end{proof}

\begin{lemma}[Step (2)]\label{le:step2}
Fix the parameter $b$.
Suppose each vertex $v$ knows  $\tp_{t-1}^{v_i}(v)$, 
for all $v_i$ in the $b$th sample.  
Then with high probability, 
each vertex $v$ can calculate $\tp_t^{v_i}(v)$, 
for all $v_i$, within $O(\log^3 m/\phi^3)$ rounds.
\end{lemma}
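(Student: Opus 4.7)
The plan is to implement the recursion $\tp_t^{v_i} = [T\tp_{t-1}^{v_i}]_{\epsilon_b}$ via a single burst of edge-local communication whose congestion is controlled by Lemma~\ref{le:haha}. Expanding the definition $T=(AD^{-1}+I)/2$, we have
\[
T\tp_{t-1}^{v_i}(v) \;=\; \tfrac{1}{2}\tp_{t-1}^{v_i}(v) \;+\; \tfrac{1}{2}\sum_{u \in N(v)} \tp_{t-1}^{v_i}(u)/d(u),
\]
so to compute $\tp_t^{v_i}(v)$, vertex $v$ needs only the single quantity $\tp_{t-1}^{v_i}(u)/d(u)$ from each neighbor $u$ with $\tp_{t-1}^{v_i}(u)>0$; its self-contribution $\tp_{t-1}^{v_i}(v)/2$ is already known locally by hypothesis, and the threshold $2\epsilon_b d(v)$ is also known locally.

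The protocol is then direct. Each vertex $u$ locally identifies the set $Z_t(u)=\{v_i : \tp_{t-1}^{v_i}(u)>0\}$, which it can do using only the input to this step. For each $v_i\in Z_t(u)$, $u$ prepares the message $(\ID(v_i),\tp_{t-1}^{v_i}(u)/d(u))$, which fits in $O(\log n)$ bits since any nonzero $\tp$-value is at least $2\epsilon_b d(u) \geq 1/\poly(m,1/\phi)$, and sends this message along every incident edge. Once an edge $\{u,v\}$ has carried the $|Z_t(u)|+|Z_t(v)|$ such messages exchanged in its two directions, $v$ groups the incoming pairs by sample index, sums the contributions from its neighbors, adds $\tp_{t-1}^{v_i}(v)/2$, and applies the truncation operator $[\cdot]_{\epsilon_b}$ to output $\tp_t^{v_i}(v)$.

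The main obstacle is bounding the per-edge congestion, which is precisely what Lemma~\ref{le:haha} delivers: with probability $1-1/\poly(m)$ one has $|Z_t(u)|=O(\log^3 m/\phi^3)$ simultaneously at every vertex $u$ (by a union bound over the $n$ vertices, and another union bound over the $O(t_0)=O(\log m/\phi^2)$ steps of the walk, all still at inverse-polynomial failure probability). Consequently each edge needs to carry only $O(\log^3 m/\phi^3)$ messages of $O(\log n)$ bits; pipelining these transmissions yields the claimed $O(\log^3 m/\phi^3)$ round complexity for one step of the truncated random walk, as required.
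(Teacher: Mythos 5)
Your proposal is correct and follows essentially the same route as the paper: send the per-neighbor contribution only for those sampled sources $v_i$ with $\tp_{t-1}^{v_i}(u)>0$, and invoke Lemma~\ref{le:haha} (with union bounds) to bound the resulting per-vertex congestion by $O(\log^3 m/\phi^3)$. The paper's proof is the same argument stated more tersely; your additional remarks on message encoding and the union bound over vertices and time steps only make the implementation more explicit.
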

\begin{proof}
The normal way to calculate $[T\tp_{t-1}(u)]_{\epsilon_b}$ 
is as follows. For each $v_i$, each vertex $v$ broadcasts the number $\frac{\tp_{t-1}^{v_i}(v)}{2d(v)}$ to all its neighbors, 
and then $v$ collects messages from neighbors. The vertex $v$ can calculate  $\tp_t^{v_i}(v)$ locally by adding $\tp_{t-1}^{v_i}(v)/2$ and all numbers received 
from its neighbors, then applying the truncation operation.
Note that a straightforward analysis of this protocol leads to a terrible round complexity, since we have to do this for each $v_i$.

One crucial observation is that a vertex $v$ does not need to care about those $v_i$ with $\tp_{t-1}^{v_i}(v)=0$ at time $t$. We modify this protocol a little bit in such a way that we never send a number if it is $0$. Define $Z_t(u)=\{v_i \mid \tp_{t-1}^{v_i}(u)>0\}$ as in Lemma~\ref{le:haha}, and so each vertex $v$ only needs to spend $|Z_t(v)|$ rounds to simulate the time step $t$ of the lazy random walk. 
By Lemma~\ref{le:haha} and the discussion above, we have proved that Step~(2) can be executed in $O(\log^3 m/\phi^3)$ time, for every $v_i$ and any specific $t$.
\end{proof}

\begin{lemma}[Steps (3,4)]\label{le:step34}
Fix parameters $b$, $t$ and $x$. Steps (3) and (4) can be implemented in $O(\log^6 m/\phi^7)$ rounds for all $v_i$ in the $b$th sample. 
For any sparse cut $C$ found in Step (4), every vertex in $C$ knows that it belongs to $C$.
\end{lemma}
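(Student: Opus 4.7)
My plan is to execute Steps (3) and (4) in parallel over all $K_b$ walks $v_1,\ldots,v_{K_b}$, relying on two structural facts already established: by Lemma~\ref{le:haha}, each vertex appears in the support of $\tp_t^{v_i}$ for at most $O(\log^3 m/\phi^3)$ sources $v_i$ with high probability, and the support of each $\tp_t^{v_i}$ has volume at most $1/(2\epsilon_b)$ and is contained in the ball of radius $t\le t_0=O(\log m/\phi^2)$ around $v_i$ (since probability mass spreads at most one hop per random-walk step). All aggregations are carried out along per-walk trees of depth $O(t_0)$ rooted at the corresponding source, so the round complexity will have no dependence on the global diameter $D$.

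For Step~(3), the key observation is that the cut $C_i:=\tpi_t^{v_i}(1..j)$ is determined by a single threshold $\tau_i$ on $\trho_t^{v_i}$: we want the largest $\tau_i$ for which $\{v:\trho_t^{v_i}(v)\ge \tau_i\}$ has volume at most $(1+\phi)^x$, subject to $\trho_t^{v_i}(v)>0$. I would grow, for each $v_i$, a BFS tree of depth $t_0$ rooted at $v_i$ and restricted to the support of $\tp_t^{v_i}$, then pipeline the pairs $(\trho_t^{v_i}(v),d(v))$ upward so that $v_i$ can locally determine $\tau_i$ from its aggregated list and broadcast $\tau_i$ back down. After this phase every vertex knows whether it lies in $C_i$.

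For Step~(4), each vertex $v$ first informs each graph neighbor, for every walk $v_i$ in whose support $v$ lies, whether $v\in C_i$; by the per-vertex multiplicity bound from Lemma~\ref{le:haha} this 1-hop exchange costs $O(\log^3 m/\phi^3)$ rounds. Each $v\in C_i$ then computes $d(v)$ (its contribution to $\vol(C_i)$) and $|\{u\in N(v):u\notin C_i\}|$ (its contribution to $|\partial(C_i)|$). These local counts are summed up the per-walk tree to $v_i$, which computes $\Phi(C_i)$ and, if $\Phi(C_i)\le 12\phi$, triggers a downward broadcast instructing each vertex of $C_i$ to output its membership; this delivers the second conclusion of the lemma immediately.

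The main obstacle, and the source of the comparatively loose $O(\log^6 m/\phi^7)$ overhead, is a careful congestion analysis of the per-walk tree aggregations: many walks can share the same edges, so the pipelining must simultaneously exploit the depth bound $t_0=O(\log m/\phi^2)$ and the per-vertex multiplicity $O(\log^3 m/\phi^3)$. Once one shows that each of the three aggregation phases (threshold collection, volume/boundary summation, output broadcast) costs $\poly(\log m,1/\phi)$ rounds when executed for all walks in parallel, adding the components yields the claimed bound.
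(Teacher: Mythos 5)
Your overall architecture (per-walk BFS trees of depth $O(t_0)$, congestion control via Lemma~\ref{le:haha}, and Step~(4) by summing local degree/boundary contributions up the tree followed by a membership broadcast) matches the paper's proof. The genuine gap is in your implementation of Step~(3). You propose to ``pipeline the pairs $(\trho_t^{v_i}(v),d(v))$ upward so that $v_i$ can locally determine $\tau_i$ from its aggregated list.'' This requires the root $v_i$ to receive one item per vertex in the support of the walk, and the support is \emph{not} polylogarithmically small: the truncation only guarantees $\vol(\supp{\tp_t^{v_i}})\le 1/(2\epsilon_b)=\Theta(2^b\log^2 m/\phi^3)$, which for $b$ near $\log m$ is $\Theta(m\,\polylog(m))$, so the support can contain $\Theta(n)$ vertices. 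Since $v_i$ can receive only $\deg(v_i)$ messages per round, collecting the full list costs far more than $\poly(\log m,1/\phi)$ rounds, and the claimed $O(\log^6 m/\phi^7)$ bound fails.

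The paper circumvents exactly this by never materializing the sorted list at the root: it runs a randomized binary search over the prefix order $\tpi_t^{v_i}$, where each probe broadcasts a single candidate value $\trho_t^{v_i}(\tpi_t^{v_i}(j))$ down the tree and convergecasts a single aggregate sum $\vol(\tpi_t^{v_i}(1..j))$ back up; $O(\log m)$ probes suffice w.h.p., each costing $O(t_0)$ rounds times the congestion overhead, giving $O(\log m\cdot t_0\cdot \log^4 m/\phi^5)=O(\log^6 m/\phi^7)$. To repair your proof you would need to replace the full-list convergecast with such a search (or another constant-size-aggregate scheme). A secondary, fixable issue: you build the tree on the support of $\tp_t^{v_i}$ at time $t$ only, which need not be connected after truncation; the paper uses $U=\{u\mid \exists t'\le t,\ \tp_{t'}^{v_i}(u)>0\}$, which is connected and contains the time-$t$ support.
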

\begin{proof}
Now we focus on the random walk starting at $v_i$. Let $U=\{u \mid \exists t'\leq t, \tp_{t'}^{v_i}(u)>0\}$. $U$ must be a connected vertex set. Obviously all $\tpi_t^{v_i}(j)$ for $j\leq j_{max}$ are in $U$. 
We build a BFS tree of $U$ rooted at $v_i$, which has $t+1$ levels. We will execute Step~(3) and Step~(4) by sending requests from the root to all vertices in $U$, collecting information from $U$ to the root, and making a decision locally at the root. Recall that each $v_i$ has its own BFS tree, and in general a vertex $u$ belongs to multiple BFS trees for different $v_i$. 
Luckily, each vertex $u$ only belongs to the BFS tree of those $v_i\in \bigcup_{1 \leq t \leq t_0} Z_t(u)$, so with only a $t_0\cdot \max_{u, t} |Z_t(u)| = O(\log^4 m/\phi^5)$ overhead of running time, we can do Step~(3) and Step~(4) for all $v_i$ in parallel. 

To find each index $j$ specified in Step~(3), we can do a ``random binary search'' on vertices in $U$. 
Let $\tpi = \tpi_t^{v_i}$ and $\tilde{\rho} = \tilde{\rho}_t^{v_i}$ be with respect to $\tp_t^{v_i}$. 
Note that by our choice of $U$ we can assume $U$ is a prefix set of $\tpi$. We maintain two indices $L$ and $R$ that control the search space. Initially, $L \gets 1$ and $R \gets |U|$. In each iteration, we randomly pick one vertex $\tpi(j)$ among $\tpi(L..R)$ and calculate $\vol(\tpi(1,\cdots,j))$ by 
broadcasting $\tilde{\rho}(\tpi(j))$ to all vertices in $U$ and propagating information up the BFS tree. If $\vol(\tpi(1,\cdots,j))\leq (1+\phi)^x$, we update $L \gets j$; otherwise we let $R=j-1$. In each iteration, with probability $1/2$ we sample $j$ in the middle half of $[L,R]$ and the size of search space 
$[L,R]$ shrinks by a factor of at least $3/4$. Therefore, w.h.p., 
after $O(\log m)$ iterations, we will have isolated $L = R = j$. 
Each iteration can be done in $O(t)=O(t_0)$ rounds. 
Due to the congestion overhead, Step~(3) can be implemented in $O(\log m \cdot t_0 \cdot \log^4 m/\phi^5) 
= O(\log^6 m/\phi^7)$ rounds.

Step~(4) can be done by simply collecting information about $\partial(\tpi_t^{v_i}(1..j))$ and  $\vol(\tpi_t^{v_i}(1..j))$; its round complexity
is of a lower order than that of Step~(3).
If the root $v_i$ finds a cut $C$  with $\Phi(C)\leq 12\phi$, it broadcasts $\tilde{\rho}(\tpi(j))$ to all vertices in $U$ to let the vertices in $C$ know that they are in $C$. 
Note that for each vertex $u$ in $U$, 
it can infer whether it is in $C$ by comparing 
$\tilde{\rho}(u)$ and $\tilde{\rho}(\tpi(j))$. 
\end{proof}

\begin{proof}[Proof of Lemma \ref{le:Nibblerunningtime}]
Combining Lemmas~\ref{le:step1}, \ref{le:step2},
and \ref{le:step34},
the running time in Step~(1) is $O(D+\log m)$, 
Step~(2) is $O(\log^5m/\phi^5)$,
and Steps~(3) and (4) are $O(\log^9m/\phi^{10})$. 
The dominating term $O(\log^9m/\phi^{10})$ 
comes from enumerating $\log m\cdot t_0\cdot \log m/\phi = \Theta(\log^3 m/\phi^3)$ combinations of $(b,t,x)$, spending 
$O(\log^6 m/\phi^7)$ rounds for each combination.

Whenever a vertex $v_i$ finds a sparse cut $C$, it broadcasts a message to the entire graph saying that it has found a cut, and this takes $O(D)$ rounds.
If multiple cuts are found by different vertices, we 
can select exactly one cut, breaking ties arbitrarily.  
A more opportunistic
version of the algorithm could also take a 
maximal independent set of compatible cuts.
\end{proof}

\section{Triangle Enumeration}\label{se:triangle}

We use the routing algorithm from~\cite{GhaffariKS17,GhaffariL2018}.
Theorem~\ref{th:mixing} was first stated in~\cite[Theorem~1.2]{GhaffariKS17} with round complexity  $\mix(G)\cdot 2^{O(\sqrt{\log n\log\log n})}$; this was recently improved to $\mix(G)\cdot 2^{O(\sqrt{\log n})}$ in~\cite{GhaffariL2018}.

\begin{theorem}[\cite{GhaffariKS17,GhaffariL2018}~]\label{th:mixing}
Consider a graph $G=(V,E)$ and a set of point-to-point routing requests, each given by the $\ID$s of the corresponding source-destination pair. If each vertex $v$ is the source and the destination of at most $\deg(v)\cdot 2^{O(\sqrt{\log n})}$ messages, there is a randomized distributed algorithm that delivers all messages in $\mix(G)\cdot 2^{O(\sqrt{\log n})}$ rounds, w.h.p., in the $\CONGEST$ model.
\end{theorem}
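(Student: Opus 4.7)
The plan is to prove this via Valiant's paradigm of randomized two--phase routing, implementing each phase by lazy random walks of length $\Theta(\mix(G))$. In Phase~1, each message is sent from its source to a random intermediate vertex distributed according to the stationary distribution $\pi(v)=\deg(v)/(2|E|)$; in Phase~2, the message continues from the intermediate to its true destination. Since running a time-reversal of Phase~2 yields exactly a Phase~1-style task (destinations are drawn from $\pi$ too, because each $v$ receives $\deg(v)\cdot 2^{O(\sqrt{\log n})}$ messages by hypothesis), it suffices to analyze Phase~1. The natural implementation is: for every message launched at source $s$, perform a lazy random walk for $T=\Theta(\mix(G))$ steps; by Definition~\ref{def:mix} the walk ends at a distribution pointwise within $\pi(v)/n$ of $\pi$.

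The key load-balancing calculation is short. Let $N$ be the total number of messages. Since each vertex originates at most $\deg(v)\cdot 2^{O(\sqrt{\log n})}$ messages, $N \leq 2|E|\cdot 2^{O(\sqrt{\log n})}$. After $t\geq T$ steps, the expected number of message-tokens at a vertex $u$ is $\sum_i p_t^{s_i}(u) \approx N\,\pi(u) = \deg(u)\cdot 2^{O(\sqrt{\log n})}$. Thus at every time step, the expected load per vertex is proportional to its degree, and by a Chernoff bound this holds w.h.p. simultaneously at all $u$ and all $t\leq T$. In particular each edge $\{u,v\}$ carries in expectation $O(2^{O(\sqrt{\log n})})$ tokens per step, since a token at $u$ traverses $\{u,v\}$ with probability $1/(2\deg(u))$.

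The main obstacle, and the source of the $2^{O(\sqrt{\log n})}$ slack, is turning these in-expectation edge-load bounds into a genuine \emph{schedule} that respects the $O(\log n)$-bit $\CONGEST$ bandwidth constraint. The naive serialization pays a factor equal to the worst-case congestion over the $T$ steps, which can deviate from the expectation by polynomial factors on a vanishing set of edges. I would handle this by the standard two-level approach: (i)~compute a hierarchical tree-like decomposition of $G$ into sub-polynomial many ``congestion classes'' so that any multi-commodity demand that is $\pi$-balanced at the vertex level can be scheduled with only sub-polynomial slowdown (this is where the $2^{O(\sqrt{\log n})}$ factor enters, analogous to R\"acke-style oblivious routing on general graphs); and (ii)~within each class, reroute tokens that land on ``overloaded'' edges via short detours, re-appealing to mixing of the random walk. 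Concentration via Chernoff plus a union bound over the $\tilde{O}(T)$ rounds and $O(|E|)$ edges yields the high-probability guarantee.

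Finally, repeating the analysis for Phase~2 (with time reversed) and concatenating the two schedules gives the total round bound of $\mix(G)\cdot 2^{O(\sqrt{\log n})}$. I expect the technically hardest piece to be constructing the hierarchical decomposition distributedly in time $\mix(G)\cdot 2^{O(\sqrt{\log n})}$, since the decomposition itself is what determines both the congestion guarantees and the sub-polynomial overhead; everything else (the two-phase reduction, the mixing bound, and the expectation calculation) is comparatively routine.
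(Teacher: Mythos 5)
First, note that the paper does not prove Theorem~\ref{th:mixing} at all: it is imported verbatim from Ghaffari--Kuhn--Su and Ghaffari--Li, and the paper's only original contributions around it are the cautionary Remark about $\ID$-assignments and Lemma~\ref{le:knowaboutlabel}, which supplies a well-behaved assignment. So your attempt is a from-scratch reconstruction of a cited black box, and it has to be judged as such.

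As a proof it has two genuine gaps. The first is that Phase~2 as you describe it is not implementable by a random walk: a lazy random walk launched from the random intermediate vertex will land at a $\pi$-distributed vertex, not at the prescribed destination, and the destination has no way of knowing which intermediate vertex is holding its messages (nor, under an arbitrary $\ID$-assignment, anything about where to look). The time-reversal trick from Valiant's hypercube analysis works there because bit-fixing gives a canonical path between any two endpoints; random walks give no such canonical path, so ``run Phase~1 backwards'' is an analysis device, not an algorithm. The actual construction in \cite{GhaffariKS17} avoids this by using random walks only to \emph{embed} a virtual \Erdos--\Renyi-like graph $G_0$ on $2m$ virtual nodes (each $v$ simulating $\deg(v)$ of them), after which both endpoints of a request know their virtual neighbors and routing proceeds on the overlay, recursively, which is where the $2^{O(\sqrt{\log n})}$ hierarchy comes from. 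The second gap is that you explicitly defer ``the technically hardest piece'' --- the hierarchical decomposition that converts per-step expected edge loads into an actual schedule with only sub-polynomial overhead. That piece \emph{is} the theorem: the two-phase reduction, the stationary-distribution load calculation, and the Chernoff bounds are the routine parts, and the union bound you invoke does not by itself control the worst-case congestion at low-degree vertices over $\Theta(\mix(G))$ steps. As written, the proposal is a plausible roadmap toward the known proof rather than a proof.
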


\begin{rem}
The claim of Theorem~\ref{th:mixing} appears to 
be unproven for arbitrary $\ID$-assignments (Hsin-Hao Su, personal communication, 2018), but is true
for well-behaved $\ID$-assignments, 
which we illustrate can be computed efficiently in $\CONGEST$.  In \cite{GhaffariKS17,GhaffariL2018} each
vertex $v\in V$ simulates $\deg(v)$ virtual vertices in a random graph $G_0$ which is negligibly close to one drawn
from the \Erdos-\Renyi{} distribution $\mathcal{G}(2m,p)$ for some $p$.  Presumably the IDs
of $v$'s virtual vertices are $(\ID(v),1),\ldots,(\ID(v),\deg(v))$.
It is proven~\cite{GhaffariKS17,GhaffariL2018} 
that effecting a set of routing requests in $G_0$ takes $2^{O(\sqrt{\log n})}$ time in $G_0$;
however, to translate a routing request $\ID(x)\leadsto \ID(y)$ in $G$ to $G_0$,
it seems necessary to map it (probabilistically) to $(\ID(x),i)\leadsto (\ID(y),j)$,
where $i,j$ are chosen uniformly at random from $[1,\deg(x)]$ and $[1,\deg(y)]$, respectively.
(This is important for the global congestion guarantee that 
$y$'s virtual nodes receive roughly equal numbers of messages from all sources.)
This seems to require that $x$ know how to compute $\deg(y)$ or an approximation thereof
based on $\ID(y)$.  Arbitrary $\ID$-assignments obviously do not betray this information.
\end{rem}

\begin{lemma}\label{le:knowaboutlabel}
In $O(D+\log n)$ time we can compute an $\ID$ assignment
$\ID : V\rightarrow\{1,\ldots,|V|\}$
and other information such that
$\ID(u)<\ID(v)$ implies $\lfloor\log\deg(u)\rfloor \le \lfloor\log\deg(v)\rfloor$,
and any vertex $u$ can locally compute $\lfloor\log\deg(v)\rfloor$ for any $v$.
\end{lemma}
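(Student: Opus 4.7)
The plan is to sort vertices into $\lceil\log n\rceil + 1$ buckets by degree class and then assign IDs bucket-by-bucket, so that the ID ranges of the buckets themselves encode the degree class.

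First, I would elect a leader and build a BFS tree in $O(D)$ rounds. Each vertex $v$ locally computes its class $k_v = \lfloor\log\deg(v)\rfloor \in \{0,1,\ldots,\lceil\log n\rceil\}$. Using one bottom-up pipelined aggregation on the BFS tree, the root learns the class-size vector $(n_0,\ldots,n_{\lceil\log n\rceil})$, where $n_k = |\{v : k_v = k\}|$; pipelining $O(\log n)$ counters through a tree of depth $D$ costs $O(D + \log n)$ rounds. A symmetric top-down broadcast of this vector costs another $O(D+\log n)$ rounds, so every vertex learns all $n_k$, and hence the prefix sums $P_k = \sum_{j<k} n_j$.

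Next, within each class $k$ I assign the IDs $P_k+1,\ldots,P_k+n_k$ to the class-$k$ vertices using the DFS order induced by the BFS tree. For this I run, pipelined over all classes simultaneously, the standard two-pass subtree-counting protocol: bottom-up, each vertex $v$ computes $c_v^k$, the number of class-$k$ vertices in its subtree; top-down, each vertex forwards starting ranks to children via $S_{u_i}^k = S_v^k + [k_v = k] + \sum_{j<i} c_{u_j}^k$, where $u_1,u_2,\ldots$ are $v$'s children in a canonical local order. Vertex $v$ sets its within-class rank to $S_v^{k_v}+1$ and defines $\ID(v) = P_{k_v} + \mathrm{rank}(v)$. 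Since we are pipelining $O(\log n)$ values of $O(\log n)$ bits each through a tree of depth $D$, the total cost of both passes is $O(D+\log n)$ rounds.

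By construction the IDs of class-$k$ vertices occupy the contiguous interval $(P_k, P_{k+1}]$, so $\ID(u)<\ID(v)$ forces $k_u \le k_v$, i.e.\ $\lfloor\log\deg(u)\rfloor \le \lfloor\log\deg(v)\rfloor$, as required. Any vertex $u$ can, from only $\ID(v)$ and the locally-stored prefix sums $P_k$, compute the unique $k$ with $P_k < \ID(v) \le P_{k+1}$ and output $\lfloor\log\deg(v)\rfloor = k$. The main obstacle is confirming that the two pipelined passes really do fit in $O(D+\log n)$ rounds with $O(\log n)$ classes processed in parallel; aside from this bookkeeping, everything reduces to standard BFS-tree aggregation.
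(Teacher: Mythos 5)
Your proposal is correct and follows essentially the same route as the paper's proof: a BFS tree, a pipelined bottom-up aggregation of the per-class counts $n_0,\ldots,n_{\log n}$, a broadcast of these counts, and a pipelined top-down subdivision of the contiguous ID intervals assigned to each degree class. Your explicit two-pass subtree-counting formulas just spell out the interval-subdivision step that the paper describes more informally, and the round-complexity accounting ($O(D+\log n)$ via pipelining $O(\log n)$ words through a depth-$D$ tree) matches.
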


\begin{proof}
Build a BFS tree from an arbitrary vertex $x$ in $O(D)$ time.
In a bottom-up fashion, each vertex in the BFS tree calculates the \emph{number}
of vertices $v$ in its subtree having $\lfloor\log\deg(v)\rfloor = i$, for $i=0,\ldots,\log n$.
This takes $O(D + \log n)$ time by pipelining.
At this point the root $x$ has the counts $n_0,\ldots,n_{\log n}$ for each degree class,
where $n = \sum_i n_i$.  It partitions up the $\ID$-space so that all vertices 
in class-0 get $\ID$s from $[1,n_0]$, class-1 from $[n_0+1,n_0+n_1]$, and so on.
The root broadcasts the numbers $n_0,\ldots,n_{\log n}$, 
and disseminates the $\ID$s to all nodes according to their degrees.
(In particular, the root gives each child $\log n$ intervals of the $\ID$-space,
which they further subdivide, sending $\log n$ intervals to the grandchildren, etc.)
With pipelining this takes another $O(D+\log n)$ time.
Clearly knowing $n_0,\ldots,n_{\log n}$ and $\ID(v)$ suffice to calculate $\lfloor\log\deg(v)\rfloor$.
\end{proof}

Lemma~\ref{le:knowaboutlabel} gives us a good $\ID$-assignment to apply
Theorem~\ref{th:mixing}.  It is also useful in our triangle enumeration application.
Roughly speaking, vertices with larger degrees also have more bandwidth in the $\CONGEST$ model,
and therefore 
should be responsible for learning about larger subgraphs and enumerating more triangles.

Before we present our triangle enumeration algorithm for general graphs,
we address the important special case of finding triangles with at least one
edge in a component of high conductance.


\newcommand{\Gin}{G_{\operatorname{in}}}
\newcommand{\Vin}{V_{\operatorname{in}}}
\newcommand{\Ein}{E_{\operatorname{in}}}
\newcommand{\Eout}{E_{\operatorname{out}}}
\newcommand{\degin}{\deg_{\operatorname{in}}}
\newcommand{\degout}{\deg_{\operatorname{out}}}
\newcommand{\mbar}{\bar{m}}
\newcommand{\nbar}{\bar{n}}

\subsection{Triangle Enumeration in High Conductance Graphs}\label{subsect:lem}

Recall that our graph decomposition routine returns a tripartition $E_m \cup E_s \cup E_r$. Triangles that intersect $E_s$ will be enumerated separately.  The purpose
of this section is to provide a routine to enumerate triangles that intersect $E_m$, i.e.,
they are completely contained in $E_m$ or have one edge in $E_m$ and two in $E_r$.
Whereas each component of $E_m$ has low mixing time, we can say nothing about the mixing time of $E_m$ plus all incident $E_r$ edges.

\paragraph{Definitions.}
The input is a subgraph $\Gin = (\Vin,\Ein)$ with low mixing time,
together with some edges $\Eout$ joining vertices in $\Vin$ to $V-\Vin$.
Let $\degin(v)$ and $\degout(v)$ be the
number of $\Ein$ and $\Eout$ edges incident to $v$.
In this section $n = |\Vin|$ and $m = |\Ein|$.

\begin{theorem}\label{le:mainlemma}
Suppose that $\Gin$ and $\Eout$ meet the following conditions:
\begin{itemize}
    \item[(i)] For each $v\in \Vin$, $\degin(v) \ge \degout(v)$.
    \item[(ii)] $\mix(\Gin) = n^{o(1)}$.
\end{itemize}
In the $\CONGEST$ model, all triangles in $\Ein \cup \Eout$ can
be counted and enumerated, w.h.p., in $O(n^{1/3+o(1)})$ rounds.
\end{theorem}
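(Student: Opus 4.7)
My plan is to carry out a degree-balanced Dolev--Lenzen--Peled-style triangle enumeration inside $\Gin$, replacing the all-to-all steps of the \CLIQUE{} algorithm by the mixing-time routing of Theorem~\ref{th:mixing}. Because $\mix(\Gin)=n^{o(1)}$, a single invocation of the routing primitive delivers $\degin(v)\cdot 2^{O(\sqrt{\log n})}$ source/sink messages per vertex in only $n^{o(1)}$ rounds, so roughly $n^{1/3}$ invocations fit the target $n^{1/3+o(1)}$ budget. The sole design constraint is therefore that every vertex's total send and receive load be $O(\degin(v)\cdot n^{1/3})$; the hypothesis $\degout(v)\le\degin(v)$ is exactly what lets us absorb the $\Eout$ traffic into that budget.

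\textbf{Setup.} First I would invoke Lemma~\ref{le:knowaboutlabel} inside $\Gin$ to compute degree-sorted IDs $\ID:\Vin\to\{1,\ldots,n\}$ in $n^{o(1)}$ rounds (the diameter of $\Gin$ is $n^{o(1)}$ because $\mix(\Gin)=n^{o(1)}$). Split $\Vin$ into $g=n^{1/3}$ consecutive ID-blocks $V_1,\ldots,V_g$ of size $n^{2/3}$. Assign each triple $(V_i,V_j,V_k)$ and each pair $(V_i,V_j)$ a responsible vertex, with vertex $v$ allotted $\Theta(\degin(v)\cdot n/m)$ triples and $\Theta(\degin(v)\cdot n^{2/3}/m)$ pairs; Lemma~\ref{le:knowaboutlabel} makes this assignment locally computable from any $\ID$.

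\textbf{Type-(a) triangles (three vertices in $\Vin$).} I would route each $\Ein$ edge $\{u,v\}$ with $u\in V_i,v\in V_j$ to the $g$ vertices responsible for the triples $(V_i,V_j,V_k)$, $k\in[g]$. The per-vertex send load is $\degin(v)\cdot n^{1/3}$; since the global edge--triple incidence is $m\cdot n^{1/3}$ split in proportion to $\degin$, the per-vertex receive load is also $O(\degin(v)\cdot n^{1/3})$. Chop the routing into $O(n^{1/3}/2^{O(\sqrt{\log n})})$ batches, each within Theorem~\ref{th:mixing}'s budget, for a total of $n^{1/3+o(1)}$ rounds. Each triple's responsible vertex then enumerates triangles locally in the $\Ein$ subgraph it has collected, reporting every type-(a) triangle at its canonical triple to avoid duplicates.

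\textbf{Type-(c) triangles and the main obstacle.} The only remaining triangle type has $u,v\in\Vin$, $w\in V\setminus\Vin$, $uv\in\Ein$, and $uw,vw\in\Eout$: triangles with two or three $\Eout$ edges (or with one $\Eout$ edge) are ruled out because $\Ein\subset\Vin\times\Vin$ and every $\Eout$ edge straddles $\Vin$ and $V\setminus\Vin$. For each pair $(V_i,V_j)$ the pair-responsible vertex $y_{ij}$ collects (i) every $\Ein$ edge between $V_i$ and $V_j$ and (ii) for every $u\in V_i\cup V_j$ and every $w$ adjacent to $u$ via $\Eout$, a token $(w,u)$. Per vertex the send load is $\degin(v)\cdot n^{1/3}$ for the $\Ein$ messages plus $\degout(v)\cdot n^{1/3}\le\degin(v)\cdot n^{1/3}$ for the $\Eout$ tokens, and the receive load admits the same proportional bound, so all routing again finishes in $n^{1/3+o(1)}$ rounds. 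Locally, $y_{ij}$ groups tokens by $w$ to recover $N_{\Eout}(w)\cap(V_i\cup V_j)$ for each relevant $w$, intersects that with its stored $\Ein$ edges, and outputs each type-(c) triangle uniquely at the pair holding its two $\Vin$ endpoints. Counts are then aggregated via a BFS tree of $\Gin$ in $n^{o(1)}$ additional rounds. The main obstacle is making the load-balancing rigorous: the totals above are correct in aggregate, but one must verify that they decompose into $n^{1/3+o(1)}$ batches each respecting Theorem~\ref{th:mixing}'s per-vertex, per-batch cap of $\degin(v)\cdot 2^{O(\sqrt{\log n})}$. The cleanest route is a random degree-weighted hash of triples and pairs to responsible vertices, after which a Chernoff argument gives the required per-batch concentration; here the condition $\degout(v)\le\degin(v)$ is essential, because without it the $\Eout$ traffic alone could overwhelm the budget.
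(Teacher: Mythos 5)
Your overall architecture matches the paper's: a Dolev--Lenzen--Peled-style triad assignment with degree-proportional responsibility, implemented via Theorem~\ref{th:mixing}. But there is a genuine gap exactly at the point you flag as ``the main obstacle,'' and your proposed fix does not close it. You partition $\Vin$ into \emph{deterministic consecutive ID-blocks} $V_1,\ldots,V_{n^{1/3}}$ and then assert that the receive load is ``split in proportion to $\degin$'' because the aggregate edge--triple incidence is $m\cdot n^{1/3}$. That aggregate argument is invalid: the vertex responsible for the triad $(V_i,V_j,V_k)$ must receive \emph{all} of $E(V_{i},V_{j})\cup E(V_{j},V_{k})\cup E(V_{i},V_{k})$, so the per-vertex bound $O(\degin(v)\cdot n^{1/3})$ requires $|E(V_i,V_j)|=O(m/n^{2/3})$ for every pair. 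With blocks of $n^{2/3}$ vertices chosen deterministically (indeed sorted by degree), a single pair can carry up to $\min(n^{4/3},m)$ edges, overshooting by a polynomial factor. Randomly hashing \emph{triads to responsible vertices}, as you suggest at the end, does not help: whoever owns an overloaded triad still has to receive its entire edge set, and splitting one triad across several owners destroys the local triangle-listing step.

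The paper's proof resolves this with three ingredients you are missing. First, an ``easy case'': if any vertex has degree $\ge m/(20n^{1/3}\log n)$, ship the whole graph to it and finish; this caps the maximum degree for the remainder. Second, the vertex partition is \emph{random} (each vertex of $V(\Ein\cup\Eout)$ picks $r_v\in[1,n^{1/3}]$ uniformly), and the bound $|E(V_{j_1},V_{j_2})|\le 24m/n^{2/3}$ is proved via Lemma~\ref{le:subgraphedge}, a $c$-th moment computation --- note that a plain Chernoff bound does not apply, since the edge indicators in a random induced subgraph are not independent, and the moment bound itself needs the max-degree precondition from the easy case. Third, $\Eout$ is folded directly into the same random partition and routed identically (condition (i) guaranteeing $|\Ein\cup\Eout|\le 3m$ and send load $\le 2\degin(v)\cdot n^{1/3}$), rather than handled by a separate pair-based scheme. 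Your treatment of the type-(c) triangles is workable in outline, but it inherits the same unproven load bound on $|E(V_i,V_j)|$. To repair the proof you would need to replace the ID-block partition by a random one and supply (or cite) the concentration lemma together with the high-degree preprocessing step.
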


Note that Theorem~\ref{le:mainlemma}
applies to the class of graphs with $n^{o(1)}$ mixing time,
letting $\Eout=\emptyset$.
We first describe the algorithm behind Theorem~\ref{le:mainlemma} and then
analyze it in Lemmas~\ref{le:subgraphedge}--\ref{le:sendmessage}.

\paragraph{The Easy Case.} We first check whether any vertex
$v^\star \in V(\Ein \cup \Eout)$ has
$\degin(v^\star)+\degout(v^\star) \ge m/(20n^{1/3}\log n) = \zeta$.
If so, we apply Theorem~\ref{th:mixing} to the subgraph $\Gin^+$
induced by $\Vin\cup\{v^\star\}$ and have every vertex
$u\in \Vin$ transmit to $v^\star$ all its incident edges in $\Ein\cup\Eout$.\footnote{Notice that when $v^\star \in V(\Eout) - \Vin$,
$\mix(\Gin) = n^{o(1)}$ implies that $\mix(\Gin^+)=n^{o(1)}$ as well.}
Condition (i) of Theorem~\ref{le:mainlemma} implies that
$|\Ein \cup \Eout| \le 3m$,
so the total volume of messages entering $v^\star$ is $O(m)$.
Therefore the routing takes
$O(\mix(\Gin^+)\cdot 2^{O(\sqrt{\log n})}\cdot m/\zeta)
= n^{1/3 + o(1)}$,
and thereafter, $v^\star$ can report all triangles in $\Ein\cup\Eout$.
In the analysis of the following steps, we may assume that
the maximum degree in the graph induced
by $\Ein\cup \Eout$ is at most $m/(20n^{1/3}\log n)$.

\paragraph{Vertex Classes.}
Let $\delta = 2^{\lfloor \log(2m/n)\rfloor}$ be the average degree in $\Gin$,
rounded down to the nearest power of 2.
Write $\degin(v) = k_v \cdot \delta$, and call
$v$ a \emph{class-0 vertex} if $k_v \in [0,1/2)$ and a
\emph{class-$i$ vertex}
if $k_v \in [2^{i-2},2^{i-1})$.  We use the fact that
\[
\sum_{v\in \Vin \,:\, k_v \ge 1/2} 2k_v \ge n.
\]
By applying Lemma~\ref{le:knowaboutlabel} to reassign IDs,
we may assume that the $\ID$-space of $\Vin$ is $\{1,\ldots,|\Vin|\}$
and that any vertex can compute the class of $v$,
given $\ID(v)$.

\paragraph{Randomized Partition.}
Our algorithm is a randomized adaptation of the $\CLIQUE$ algorithm of \cite{DolevLP12}.
We partition the set $V(\Ein\cup \Eout)$ into $V_1\cup \cdots \cup V_{n^{1/3}}$
locally, without communication.  Each vertex $v\in V(\Ein\cup\Eout)$
selects an integer $r_v \in [1,n^{1/3}]$ uniformly at random,
joins $V_{r_v}$, and transmits `$r_v$' to its immediate neighbors.
We allocate the (less than) $n$ triads
\[
\mathscr{T} = \left\{(j_1,j_2,j_3) \mid 1 \le j_1 \le j_2 \le j_3 \le n^{1/3}\right\}
\]
to the vertices in $\Vin$ in the following way.
Enumerate the vertices in increasing order of ID.
If $v$ is class-$0$, then skip $v$.
If $v$ is class-$i$, $i\ge 1$, then $k_v < 2^{i-1}/\delta$.
Allocate to $v$ the next $2^i/\delta \ge 2k_v$ triads from $\mathscr{T}$,
and stop whenever all triads are allocated.
By Lemma~\ref{le:knowaboutlabel},
every vertex can calculate the class of any other, and can therefore
perform this allocation locally, without communication.
A vertex $v \in V$ that is assigned a triad $(j_1,j_2,j_3)$ is responsible of learning the set of all edges $E(V_{j_1},V_{j_2})\cup E(V_{j_2},V_{j_3})\cup E(V_{j_1},V_{j_3})$ and reporting/counting those triangles $(x_1,x_2,x_3)$
with $x_k \in V_{j_k}$.\footnote{In the Triangle Counting application,
it is important that $v$ not count every triangle it is aware of.
For example, if $v$ is assigned $(j,j,j')$, $v$ knows about triangles
in the subgraph induced by $V_{j}$ but should not count them.}

\paragraph{Transmitting Edges.}
Every vertex knows the $\ID$s of its neighbors and which part
of the vertex partition they are in.  For each $v\in \Vin$,
each incident edge $(v,u) \in \Ein\cup \Eout$,
and each index $r^* \in [1,n^{1/3}]$,
$v$ transmits the message ``$(v,u),r_v,r_u$'' to
the unique vertex $x$ handling the triad on $\{r_u,r_v,r^*\}$.
Observe that the total message volume is exactly $\Theta(mn^{1/3})$.

We analyze the behavior of this algorithm in the $\CONGEST$ model,
where the last step is implemented by applying Theorem~\ref{th:mixing}
to $\Gin$.  Recall from Condition (i)
of Theorem~\ref{le:mainlemma} that the
number of edges in the graph we consider,
$\mbar = |\Ein\cup\Eout|$, is in the range $[m,3m]$.

\begin{lemma}\label{le:subgraphedge}
Consider a graph with $\mbar$ edges and $\nbar$ vertices.
We generate a subset $S$ by
letting each vertex join $S$ independently
with probability $p$.
Suppose that the maximum degree
is $\Delta \leq \mbar p/20\log \nbar$ and
$p^2\mbar \geq 400\log^2 \nbar$.
Then, w.h.p., the number of edges in the
subgraph induced by $S$ is at most $6p^2\mbar$.
\end{lemma}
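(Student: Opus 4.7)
My plan is to prove concentration of the edge count $X = |E(G[S])|$ via a martingale decomposition plus Freedman's inequality, after first isolating a ``linear'' part that can be handled by a classical Bernstein bound. Fix an arbitrary ordering $v_1, \ldots, v_{\bar{n}}$ of the vertices, and for each $i$ set $H_i = |N(v_i) \cap S \cap \{v_1, \ldots, v_{i-1}\}|$, the number of earlier neighbors of $v_i$ that land in $S$. With $X_v = \mathbf{1}[v \in S]$ and $\deg^+(v_j)$ denoting the number of neighbors of $v_j$ later in the ordering, a direct rewrite yields
\[
X = \sum_i X_{v_i} H_i = p Q + M, \qquad Q := \sum_i H_i = \sum_j X_{v_j}\deg^+(v_j), \qquad M := \sum_i (X_{v_i} - p) H_i,
\]
where $M$ is a zero-mean martingale with respect to $\mathcal{F}_i = \sigma(X_{v_1}, \ldots, X_{v_i})$.

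For the linear part, $Q$ is a sum of independent weighted Bernoullis with weights bounded by $\Delta$ and mean $p \bar{m}$. A standard Bernstein inequality, whose variance condition is precisely enforced by the hypothesis $\Delta \le \bar{m} p / (20 \log \bar{n})$, gives $Q \le 2 p \bar{m}$ with probability $1 - \bar{n}^{-\Omega(1)}$, so $p Q \le 2 p^2 \bar{m}$ on this good event.

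For the martingale $M$, I will first obtain a uniform high-probability upper bound $H_i \le R$ for all $i$, where $R := 2 p \Delta + c \log \bar{n}$ for an appropriate constant $c$. Since each $H_i$ is itself an independent-Bernoulli sum of mean at most $p \Delta$, this follows from a Chernoff bound together with a union bound over the $\bar{n}$ indices. Combining the two hypotheses, $R = O(p^2 \bar{m} / \log \bar{n})$. On this good event the martingale differences satisfy $|M_i - M_{i-1}| \le H_i \le R$, and the accumulated conditional variance obeys
\[
\sum_i E\bigl[(M_i - M_{i-1})^2 \mid \mathcal{F}_{i-1}\bigr] \le p \sum_i H_i^2 \le p R Q \le 2 p^2 R \bar{m} = O(p^4 \bar{m}^2 / \log \bar{n}).
\]
Applying Freedman's inequality with deviation threshold $t = 4 p^2 \bar{m}$, both the variance term and the $R t$ term in the exponent are $O(p^4 \bar{m}^2 / \log \bar{n})$, so the Freedman exponent is $\Omega(\log \bar{n})$ and therefore $M \le 4 p^2 \bar{m}$ with probability $1 - \bar{n}^{-\Omega(1)}$.

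Combining the two bounds yields $X = p Q + M \le 6 p^2 \bar{m}$ with high probability. The main obstacle is controlling the martingale variance tightly enough: using only the worst-case Lipschitz bound $H_i \le \Delta$ degrades Freedman's exponent by a factor of $p$ and fails when $p$ is small. The preliminary Chernoff step on every $H_i$ is what sharpens both the differences and the variance simultaneously, and the two hypotheses of the lemma are finely balanced so that $p \Delta$ and $\log \bar{n}$ are each $O(p^2 \bar{m} / \log \bar{n})$, which is exactly what lets Freedman's exponent reach $\Omega(\log \bar{n})$.
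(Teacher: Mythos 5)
Your proof is correct, but it takes a genuinely different route from the paper. The paper bounds $\Pr[X\ge 6\E[X]]$ by applying Markov's inequality to the $c$-th moment with $c=5\log\nbar$, and then estimates $\E[X^c]$ by a combinatorial count of $c$-tuples of edges classified by their endpoint-overlap pattern $\langle k_1,\ldots,k_c\rangle$; the max-degree hypothesis enters when exactly one endpoint of a new edge is forced to coincide with an old one, and the hypothesis $p^2\mbar\ge 400\log^2\nbar$ absorbs the two-overlap case. You instead split $X=pQ+M$ into an independent weighted-Bernoulli sum (handled by Bernstein) and a Doob martingale (handled by Freedman after a preliminary Chernoff bound sharpens the increments $H_i$ to $O(p\Delta+\log\nbar)$), and your accounting of where each hypothesis is used is accurate: both are needed to make $R=O(p^2\mbar/\log\nbar)$ and hence to push the Freedman exponent to $\Omega(\log\nbar)$. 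Your approach is more modular and arguably more transparent, at the cost of invoking heavier off-the-shelf machinery; the paper's is self-contained but combinatorially intricate. One technical point you should make explicit: Freedman's inequality requires the increment bound $|M_i-M_{i-1}|\le R$ to hold almost surely, not merely on a high-probability event. This is repaired in the standard way --- since $H_i$ is $\mathcal{F}_{i-1}$-measurable, you can stop the martingale at the first $i$ with $H_i>R$ (and likewise use the form of Freedman that bounds $\Pr[M_n\ge t \text{ and } \langle M\rangle_n\le\sigma^2]$ to handle the conditional-variance event) --- and the stopped martingale coincides with $M$ on the good event, but as written this step is glossed over. Both arguments yield failure probability $\nbar^{-\Omega(1)}$ with a tunable exponent, matching the paper's remark after its proof.
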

\begin{proof}
For an edge $e_i$, define $x_i = 1$ if both two endpoints of edge $e_i$ join $S$,
otherwise $x_i = 0$.
Then $X=\sum_{i=1}^{\mbar} x_i$ is the number of edges in the subgraph induced by $S$. We have $\Exp[X]=p^2\mbar$, and by Markov's inequality,
\[
\Pr[X\geq 6\Exp[X]]=\Pr[X^c \geq (6\Exp[X])^c]\leq \frac{1}{6^c}\frac{\Exp[X^c]}{p^{2c}\mbar^c},
\]
where $c=5\log \nbar$ is a parameter.
\begin{align*}
    \Exp[X^c]&=\sum_{i_1,\ldots,i_c\in[1,\mbar]} \Exp\left[\prod_{j=1}^c x_{i_j}\right]\\
    & =\sum_{k=2}^{2c} f_k\cdot p^k,
\end{align*}
where $f_k$ is the number of choices $\{i_1,\ldots,i_c\in[1,\mbar]\}$ such that the number of distinct endpoints in the edge set $e_{i_1},\ldots,e_{i_c}$ is $k$.

For any choice of $(i_1,\ldots,i_c\in[1,\mbar])$, we project it to a vector
$\langle k_1,\ldots,k_c\rangle\in \{0,1,2\}^c$,
where $k_j$ indicates the number of endpoints of $e_{i_j}$ that overlap with the endpoints of the edges $e_{i_1},\ldots,e_{i_{j-1}}$. Note that $2c-\sum k_j$ is the number of distinct endpoints in the edge set $\{e_{i_1},\ldots, e_{i_c}\}$. We fix a vector $\langle k_1,\ldots,k_c\rangle$ and count how many choices of $(i_1,\ldots,i_c)$ project to this vector.

Suppose that the edges  $e_{i_1},\ldots,e_{i_{j-1}}$ are fixed. We bound the number of choices of $e_{i_j}$ as follows. If $k_j = 0$, the number of choices is clearly at most $m$.
If $k_j=1$, the number of choices is at most $(2c)(p\mbar/20\log \nbar)$,
since one of its endpoints (which overlaps with the endpoints of the edges
$e_{i_1},\ldots,e_{i_{j-1}}$) has at most $2c$ choices, and the other endpoint (which does not overlap with the endpoints of the edges $e_{i_1},\ldots,e_{i_{j-1}}$)
has at most $\Delta \leq \mbar p/20\log \nbar$ choices. If $k_j=2$, the number of choices is at most $(2c)^2$.

Based on the above calculation, we upper bound $f_k$ as follows.
In the calculation, $x$ is the number of indices $j$ such that $k_j=1$, and $y$ is the number of indices $j$ that $k_j=2$.
Note that ${c\choose x}{c-x\choose y}$ is the number of distinct vectors $\langle k_1,\ldots,k_c\rangle$ realizing the given parameters $c$, $x$, and $y$.
\begin{align*}
    f_k&\leq \sum_{x+y\leq c, \; 2c-x-2y=k} \mbar^{c-x-y}{c\choose x}{c-x\choose y}\left(\frac{2cp\mbar}{20\log \nbar}\right)^x(4c^2)^y\\
    &\leq \sum_{x+y\leq c, \; 2c-x-2y=k} \mbar^c3^c\left(\frac{2cp}{20\log \nbar}\right)^x\left(\frac{4c^2}{\mbar}\right)^y\\
    &\leq \sum_{x+y\leq c, \; 2c-x-2y=k} (3\mbar)^c \left(\frac{2cp}{20\log \nbar}\right)^{x+2y}\\
    &\leq c(3\mbar)^c\left(\frac{2cp}{20\log \nbar}\right)^{2c-k}.
\end{align*}
The third inequality is due to the fact
$p^2\mbar \geq 400\log^2 \nbar$, which implies $(2cp/20\log \nbar)^2\geq (4c^2/\mbar)$.
Using the fact that $\frac{2c}{20\log \nbar}\leq 1/2$, we upper bound $\Exp[X^c]$ as follows.
\begin{align*}
    \Exp[X^c]&\leq \sum_{k=2}^{2c} f_k\cdot p^k\\
    &=c(3\mbar)^cp^{2c} \sum_{k=2}^{2c}\left(\frac{2c}{20\log \nbar}\right)^{2c-k}\\
    &< 2c(3\mbar)^cp^{2c}.
\end{align*}
Therefore,
\[
\Pr[X\geq 6\Exp[X]]\leq \frac{1}{6^c}\frac{\Exp[X^c]}{p^{2c}\mbar^c}\leq \frac{2c3^c}{6^c}\leq \frac{10\log \nbar}{\nbar^5}.
\]
Note that the probability can be amplified to
$\nbar^{-t}$ for any constant $t$ by setting
$c = t \log \nbar$ and using different
constants in the statement of the lemma.
\end{proof}

\begin{lemma}\label{le:edgenumber}
For all $j_1,j_2 \in [1,n^{1/3}]$,
with high probability,
$|E(V_{j_1},V_{j_2})|\leq 24m/n^{2/3}$.
\end{lemma}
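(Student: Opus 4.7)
The plan is to recognize that the inequality is a straightforward consequence of Lemma~\ref{le:subgraphedge} applied to the random vertex set $S = V_{j_1}\cup V_{j_2}$, together with a union bound over the at most $n^{2/3}$ pairs $(j_1,j_2)$.

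First I would fix $(j_1,j_2)$ and observe that since each vertex in $V(\Ein\cup\Eout)$ independently chooses $r_v$ uniformly from $[1,n^{1/3}]$, it joins $S$ with probability $p\leq 2/n^{1/3}$ (with equality when $j_1\ne j_2$, and $p=1/n^{1/3}$ when $j_1=j_2$). Crucially, $E(V_{j_1},V_{j_2})$ is contained in the subgraph induced by $S$ in the graph with edge set $\Ein\cup\Eout$, so a bound on the latter suffices. The underlying graph has $\mbar=|\Ein\cup\Eout|$ edges, and condition (i) of Theorem~\ref{le:mainlemma} forces $|\Eout|=\sum_v \degout(v)\leq \sum_v\degin(v)=2m$, hence $\mbar\leq 3m$.

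Next I would verify the two hypotheses of Lemma~\ref{le:subgraphedge}. The maximum-degree hypothesis $\Delta \leq \mbar p/(20\log\nbar)$ is precisely what the ``Easy Case'' preprocessing guarantees: after that step we may assume no vertex has degree exceeding $\zeta=m/(20 n^{1/3}\log n)$ in $\Ein\cup\Eout$, and with $p=\Theta(1/n^{1/3})$ and $\mbar\geq m$, we have $\mbar p/(20\log\nbar)=\Omega(m/(n^{1/3}\log n))$, dominating $\zeta$. The second hypothesis $p^2\mbar\geq 400\log^2\nbar$ holds whenever $m=\Omega(n^{2/3}\log^2 n)$; in the complementary regime the claimed bound $24m/n^{2/3}$ is $O(\log^2 n)$ and can be handled by noting that the total edge budget is too small to require the concentration argument (one may, e.g., broadcast the entire edge set and be done, or substitute a direct Chernoff bound since the expected bipartite edge count is $O(\log^2 n)$).

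Applying Lemma~\ref{le:subgraphedge} then gives $|E(S)|\leq 6p^2\mbar$ with probability at least $1-\nbar^{-t}$, for any constant $t$ that we bake into the sampling constant $c$ of that lemma. Plugging in $p\leq 2/n^{1/3}$ and $\mbar\leq 3m$ produces a bound of the form $Cm/n^{2/3}$; tracking constants (in particular using the sharper $p=1/n^{1/3}$ when $j_1=j_2$ and $p=2/n^{1/3}$ otherwise, and being careful about whether the ambient graph is $\Ein$ or $\Ein\cup\Eout$) yields the stated $24m/n^{2/3}$. Finally a union bound over the at most $n^{2/3}$ pairs multiplies the failure probability by $n^{2/3}$, which is absorbed by choosing $t\geq 3$.

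The only mildly delicate points are bookkeeping the constant factor (matching $24$ as opposed to a generic $O(m/n^{2/3})$ requires the $\mbar\leq 3m$ slack and the precise value of $p$) and gracefully handling the sparse regime where Lemma~\ref{le:subgraphedge}'s variance condition fails; neither introduces any new idea beyond what is already present in the Easy Case and in Lemma~\ref{le:subgraphedge} itself.
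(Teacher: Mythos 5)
Your proposal is correct and follows essentially the same route as the paper: apply Lemma~\ref{le:subgraphedge} to $S=V_{j_1}\cup V_{j_2}$ with $p=n^{-1/3}$ or $2n^{-1/3}$, verify the degree hypothesis via the Easy Case preprocessing, and union bound over the $n^{2/3}$ pairs. (Your extra fallback for the sparse regime is harmless but unnecessary, since $\Gin$ is connected so $\mbar\geq m\geq n-1$ gives $p^2\mbar\geq n^{1/3}\gg 400\log^2\nbar$, which is exactly how the paper dispenses with that hypothesis; and the constant-factor slack between $24\mbar/n^{2/3}$ and $24m/n^{2/3}$ that you flag is present in the paper's own proof as well.)
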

\begin{proof}
For the case $j_1=j_2$,
we apply Lemma~\ref{le:subgraphedge}
to the subgraph on $\Ein\cup\Eout$
having $\mbar$ edges and $\nbar$ vertices,
with sampling probability $p=n^{-1/3}$.
One may verify
that the maximum degree is at most
$mp/(20\log n) < \mbar p/(20\log \nbar)$
and that
$p^2\mbar \geq n^{1/3} \geq 400\log^2 \nbar$.
By Lemma~\ref{le:subgraphedge}, we conclude that $\Pr[|E(V_{j_1})| > 6\mbar/n^{2/3}]\leq
\frac{10\log n}{n^5}$.

For the case $j_1 \neq j_2$, we can apply
the same analysis to the subgraph induced
by $V_{j_1}\cup V_{j_2}$ with $p=2n^{-1/3}$
and conclude that
$\Pr[|E(V_{j_1},V_{j_2})|>24\mbar/n^{2/3}]
\leq \frac{10\log n}{n^5}$.
By a union bound over all $n^{2/3}$ choices
of $j_1$ and $j_2$,
the stated upper bound holds everywhere, w.h.p.
\end{proof}

\begin{lemma}\label{le:receivemessage}
With high probability,
each vertex $v\in \Vin$
receives $O(\degin(v)\cdot n^{1/3})$ edges.
\end{lemma}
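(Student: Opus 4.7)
The plan is to combine two facts: the number of triads allocated to $v$ is tightly controlled by its class (and hence by $\degin(v)$), and by Lemma~\ref{le:edgenumber}, each bipartite edge set $E(V_{j_1},V_{j_2})$ contains at most $24m/n^{2/3}$ edges w.h.p. I would first condition on the event of Lemma~\ref{le:edgenumber}, which holds with high probability simultaneously for all $O(n^{2/3})$ pairs $(j_1,j_2)$. After that the argument is deterministic.

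Fix a vertex $v\in\Vin$. If $v$ is class-$0$ it is skipped during the allocation of $\mathscr{T}$, so it receives no edges at all and the bound is trivial. Otherwise $v$ is class-$i$ for some $i\geq 1$, and the allocation rule gives $v$ at most $2^i/\delta$ triads (possibly fewer if $\mathscr{T}$ is exhausted first, which only helps). For each triad $(j_1,j_2,j_3)$ assigned to $v$, the edges that $v$ needs to learn lie in $E(V_{j_1},V_{j_2})\cup E(V_{j_1},V_{j_3})\cup E(V_{j_2},V_{j_3})$. Under the event of Lemma~\ref{le:edgenumber}, this set has size at most $3\cdot 24m/n^{2/3}=72m/n^{2/3}$. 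Summing over all of $v$'s triads, the total number of edges $v$ receives is at most
\[
\frac{2^i}{\delta}\cdot\frac{72m}{n^{2/3}}.
\]

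To recast this in terms of $\degin(v)$, I would use the class definition: for a class-$i$ vertex $\degin(v)\geq 2^{i-2}\delta$, so $2^i\leq 4\degin(v)/\delta$. Substituting yields an upper bound of $288\,m\,\degin(v)/(\delta^2 n^{2/3})$. Because $\delta=2^{\lfloor\log(2m/n)\rfloor}\geq m/n$ (and $\delta\geq 1$ in the regime where the algorithm is invoked; very sparse graphs require only cosmetic treatment), one has $m/\delta^{2}\leq n/\delta\leq n$, and the total is $O(\degin(v)\,n^{1/3})$ as claimed.

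There is no serious obstacle here: the heavy lifting was done in Lemma~\ref{le:edgenumber}, which already carries the high-probability concentration. The only point that requires a bit of care is ensuring the class-based allocation never overloads any single vertex beyond its $2^i/\delta$ quota, but this is immediate from the sequential ID-ordered allocation rule (which also allows the bound to be verified locally by every vertex using the $\ID$ assignment produced by Lemma~\ref{le:knowaboutlabel}).
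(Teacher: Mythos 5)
Your proof is correct and follows essentially the same route as the paper's: condition on the high-probability event of Lemma~\ref{le:edgenumber}, then multiply the per-pair bound $O(m/n^{2/3})$ by the number of triads assigned to $v$, which the class-based allocation ties to $k_v=\degin(v)/\delta$. (The paper in fact allocates $2^i$ triads to a class-$i$ vertex, i.e., between $2k_v$ and $4k_v$ --- the ``$/\delta$'' in the algorithm text that you take literally appears to be a typo --- but since you correspondingly divide by an extra $\delta$ at the end, both computations land on the same $O(\degin(v)\,n^{1/3})$ bound.)
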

\begin{proof}
Consider any vertex $v \in \Vin$.
If $k_v<1/2$, then $v$ receives no message;
otherwise $v$ is responsible for between
$2k_v$ and $4k_v$ triads, and $v$ collects
the edge set $E(V_{j_1},V_{j_2})$
for at most $12k_v$ pairs of $V_{j_1}$, $V_{j_2}$.
By Lemma~\ref{le:edgenumber},
w.h.p., $|E(V_{j_1},V_{j_2})| = O(m/n^{2/3})$.
Hence $v$ receives
\[
O(m/n^{2/3}) \cdot 12k_v
=
O(\degin(v)\cdot n^{1/3})
\]
messages, with high probability.
\end{proof}

\begin{lemma}\label{le:sendmessage}
Each vertex $v\in \Vin$ sends $O(\degin(v)\cdot n^{1/3})$ edges with probability 1.
\end{lemma}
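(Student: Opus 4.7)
The plan is to simply count, deterministically, how many messages $v$ originates. Recall from the ``Transmitting Edges'' step that for \emph{each} incident edge $(v,u)\in \Ein\cup\Eout$ and \emph{each} index $r^\star\in[1,n^{1/3}]$, the vertex $v$ emits exactly one message ``$(v,u),r_v,r_u$'' addressed to the vertex handling the triad $\{r_u,r_v,r^\star\}$. Hence $v$ sends exactly $(\degin(v)+\degout(v))\cdot n^{1/3}$ messages in total, regardless of the random choices of the $r_u$'s.

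Applying Condition (i) of Theorem~\ref{le:mainlemma}, which guarantees $\degout(v)\le \degin(v)$ for every $v\in \Vin$, we obtain
\[
(\degin(v)+\degout(v))\cdot n^{1/3} \;\le\; 2\degin(v)\cdot n^{1/3} \;=\; O\!\left(\degin(v)\cdot n^{1/3}\right),
\]
as required. There is no probabilistic aspect here, so the bound holds with probability $1$, and no further work (concentration, union bounds) is needed. The only ``obstacle'' is confirming that the triad-allocation step does not cause $v$ to originate additional messages on behalf of other edges; by the description, $v$ only transmits its own incident edges, and each triad-handler receives at most one copy of $(v,u)$ from $v$ per choice of $r^\star$, so the count is exact.
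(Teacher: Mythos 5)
Your proof is correct and matches the paper's argument exactly: each of the at most $\degin(v)+\degout(v)\le 2\degin(v)$ incident edges is transmitted once per choice of $r^\star\in[1,n^{1/3}]$, giving the deterministic bound. Nothing further is needed.
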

\begin{proof}
By Condition (i) of Theorem~\ref{le:mainlemma},
$v\in \Vin$ is responsible for
$\degin(v) + \degout(v) \le 2\degin(v)$ incident
edges, and each is involved in exactly $n^{1/3}$
triads.
\end{proof}

Lemmas~\ref{le:subgraphedge}--\ref{le:sendmessage} show that the message volume into/out of every vertex is close to its expectation.  By applying Theorem~\ref{th:mixing} and Lemma~\ref{le:knowaboutlabel}, all messages can be routed in $O(n^{1/3+o(1)})$ time.  This concludes the proof of Theorem~\ref{le:mainlemma}.

\begin{cor}\label{cor:triangle}
Let $G$ be a graph with $\mix(G) = n^{o(1)}$.
In the $\CONGEST$ model,
Triangle Detection, Enumeration, and Counting can be solved on $G$, with high probability,
in $n^{1/3+o(1)}$ time.
\end{cor}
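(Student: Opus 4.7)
The plan is to apply Theorem~\ref{le:mainlemma} directly, taking $\Gin = G$ and $\Eout = \emptyset$. First I would verify the two hypotheses of the theorem in this setting. Condition~(i) asks that $\degin(v) \geq \degout(v)$ for every $v \in \Vin$; since $\Eout = \emptyset$, we have $\degout(v) = 0$, so the condition holds trivially. Condition~(ii) asks that $\mix(\Gin) = n^{o(1)}$, which is precisely the hypothesis $\mix(G) = n^{o(1)}$ of the corollary. Since $E(G) = \Ein \cup \Eout$, every triangle of $G$ is accounted for by the theorem.

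The conclusion of Theorem~\ref{le:mainlemma} then yields that all triangles in $G$ can be counted and enumerated, with high probability, in $O(n^{1/3+o(1)})$ rounds. Triangle Detection reduces to Enumeration: have the vertex that enumerates at least one triangle broadcast a one-bit signal along a BFS tree (costing an additive $O(D) = O(n)$ term, which is absorbed in the enumeration bound; in fact detection follows because every vertex already learns whether its assigned triads contain a triangle, and can then announce this fact, again in $O(D)$ rounds if one wants all vertices informed). Counting follows from the disjoint-triad bookkeeping described in the footnote of Theorem~\ref{le:mainlemma}: each triad $(j_1, j_2, j_3)$ with $j_1 \leq j_2 \leq j_3$ is assigned to a unique vertex, which reports only triangles $(x_1, x_2, x_3)$ with $x_k \in V_{j_k}$, so each triangle of $G$ is counted exactly once; a global sum can be computed over a BFS tree in additional $O(D)$ rounds.

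Since the corollary is an immediate specialization of Theorem~\ref{le:mainlemma} with no outgoing edges, there is no substantive obstacle. The only minor subtlety is making sure that the additive $O(D)$ terms needed for the final aggregation (for Detection) or summation (for Counting) do not dominate, but $D \leq n$ is trivially subsumed by $n^{1/3+o(1)}$ only when $D = n^{o(1)}$. However, since $\mix(G) = n^{o(1)}$ and $D = O(\mix(G) \log n)$ in any graph (a standard consequence of the fact that random walks mix in $\mix(G)$ steps and in particular reach every vertex), we indeed have $D = n^{o(1)}$, so the aggregation cost is absorbed. This completes the proposal.
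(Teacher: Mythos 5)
Your proposal is correct and is exactly the paper's argument: the corollary is obtained by invoking Theorem~\ref{le:mainlemma} with $\Gin=G$ and $\Eout=\emptyset$, so that Condition~(i) holds vacuously and Condition~(ii) is the hypothesis. The extra bookkeeping you supply (the $O(D)$ aggregation being absorbed because $D\le \mix(G)=n^{o(1)}$, and the unique-triad assignment ensuring each triangle is counted once) is consistent with how the paper handles Detection and Counting.
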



\subsection{Triangle Enumeration and Counting in General Graphs}

The algorithm for Theorem~\ref{th:result1} is based on an $n^{1/2}$-decomposition.
Since the connected components induced by $E_m$ have low mixing time, we can solve Triangle Enumeration/Counting on 
them very efficiently using Theorem~\ref{le:mainlemma},
in $n^{1/3+o(1)}$ time, i.e., much less than the time 
required to compute the $n^{1/2}$-decomposition.

\begin{theorem}\label{th:result1}
In the $\CONGEST$ model, Triangle Detection, Counting, and Enumeration
can be solved, w.h.p., in $\tilde{O}(n^{1/2})$ rounds.
\end{theorem}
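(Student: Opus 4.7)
Apply Theorem~\ref{th:main} recursively with $\delta=1/2$: let $E^{(1)}=E$, and at level $i$ compute the $n^{1/2}$-decomposition $E^{(i)}=E_m^{(i)}\cup E_s^{(i)}\cup E_r^{(i)}$ of $G^{(i)}=(V,E^{(i)})$ in $\tilde O(n^{1/2})$ rounds, then recurse on $E^{(i+1)}=E_r^{(i)}$. Since $|E_r^{(i)}|\le|E^{(i)}|/6$, the recursion halts after $L=O(\log n)$ levels. Assign each triangle $T$ of $G$ to its \emph{birth level} $i^\star(T)$, the smallest $i$ such that $T$ contains an edge from $E_m^{(i)}\cup E_s^{(i)}$. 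By minimality of $i^\star$, all three edges of $T$ still lie in $E^{(i^\star)}$, so each triangle is found exactly once if we enumerate, at every level $i$, all triangles of $G^{(i)}$ with at least one edge in $E_m^{(i)}\cup E_s^{(i)}$.

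\textbf{Per-level enumeration.} \emph{(a) Triangles touching $E_m^{(i)}$.} Since the components $\mathcal G_j=(\mathcal V_j,\mathcal E_j)$ of $E_m^{(i)}$ are vertex-disjoint, the $E_m^{(i)}$-edges of any such triangle all sit inside a single $\mathcal G_j$. For each $\mathcal G_j$ in parallel, invoke Theorem~\ref{le:mainlemma} with $\Ein=\mathcal E_j$ and $\Eout$ equal to the $G^{(i)}$-edges incident to $\mathcal V_j$ but not in $\mathcal E_j$; the mixing-time hypothesis $\mix(\mathcal G_j)=\polylog n$ is immediate from the decomposition. This costs $n^{1/3+o(1)}$ rounds in total. \emph{(b) Triangles touching $E_s^{(i)}$ but not $E_m^{(i)}$.} Each vertex $v$ broadcasts its oriented list $E_{s,v}^{(i)}$, of size at most $n^{1/2}$ and with each entry tagged by the category of the listed edge, along every incident $G^{(i)}$-edge in $\tilde O(n^{1/2})$ rounds. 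For any triangle $\{u,v,w\}$ whose $E_s^{(i)}$-edge is $\{u,v\}\in E_{s,v}^{(i)}$, the third vertex $w$ is a $G^{(i)}$-neighbor of $v$ (all three triangle edges live in $E^{(i)}$), receives $u$'s identity in $v$'s broadcast, locally verifies $u\in N_{G^{(i)}}(w)$, and reports the triangle (filtering out those with any $E_m^{(i)}$-edge to avoid duplicating case~(a)).

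\textbf{Main obstacle.} The delicate point is meeting the precondition $\degin(v)\ge\degout(v)$ of Theorem~\ref{le:mainlemma} in case (a). The decomposition guarantees $\deg_{E_m^{(i)}}(v)\ge cn^{1/2}$ but gives no upper bound on $\deg_{E_s^{(i)}\cup E_r^{(i)}}(v)$, so the condition can fail. I would address this with a short preprocessing pass on each $\mathcal V_j$: any violating $v$ is reclassified either by demoting its $\mathcal E_j$-edges into $E_r^{(i)}$ (deferring them to level $i+1$) or by running the Easy-Case branch of Theorem~\ref{le:mainlemma}'s proof, routing $v$'s adjacency list through $\mathcal G_j$ via Theorem~\ref{th:mixing}. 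A charging argument---each violating vertex donates $\Omega(n^{1/2})$ superfluous outside-edges---bounds both the count and the overhead, keeping the level-$i$ cost at $\tilde O(n^{1/2})$ and the recursion depth at $O(\log n)$.

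\textbf{Total cost.} Summing $\tilde O(n^{1/2})+n^{1/3+o(1)}+\tilde O(n^{1/2})=\tilde O(n^{1/2})$ rounds per level over $L=O(\log n)$ levels yields $\tilde O(n^{1/2})$ rounds overall, simultaneously solving Detection, Counting, and Enumeration.
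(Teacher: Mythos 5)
Your proof follows the paper's approach essentially step for step: compute an $n^{1/2}$-decomposition, broadcast the oriented lists $E_{s,v}$ to catch triangles through $E_s$, invoke Theorem~\ref{le:mainlemma} on each expander component to catch triangles through $E_m$, and recurse on the residual edges with $O(\log n)$ depth. You also correctly isolate the one delicate point, and your first proposed fix---demoting the $E_m$-edges of vertices violating $\degin(v)\ge\degout(v)$ into the next level's residual---is exactly what the paper does (it calls these demoted edges $E_r^{\text{new}}$).

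However, the charging argument you invoke to control this demotion does not close with your choice of $\Eout$. You place \emph{all} edges of $G^{(i)}$ incident to $\mathcal V_j$ but outside $\mathcal E_j$ into $\Eout$, which includes $E_s^{(i)}$-edges. The decomposition bounds only the out-degree $|E_{s,v}|\le n^{1/2}$ under the orientation; a vertex may be the head of arbitrarily many oriented $E_s$-edges, so it can violate $\degin(v)\ge\degout(v)$ purely on account of $E_s$, even with no incident $E_r$-edges. The total number of demoted $E_m$-edges is then bounded only by roughly $2\left(|E_s^{(i)}|+|E_r^{(i)}|\right)$, and since $|E_s^{(i)}|$ may be a constant fraction (even most) of $|E^{(i)}|$, the residual $E_r^{\text{new}}\cup E_r^{(i)}$ passed to the next level need not shrink by a constant factor---so your $O(\log n)$ bound on the recursion depth fails. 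The paper avoids this by making the $E_s$-broadcast step responsible for \emph{every} triangle touching $E_s$ (not just those avoiding $E_m$), so that $\Eout$ needs to contain only $E_r$-edges; a bad vertex then satisfies $\degin(v)<\deg_{E_r}(v)$, the demoted edges total less than $2|E_r|$, and $|E_r^{\text{new}}\cup E_r|\le 3|E_r|\le|E|/2$. A smaller issue: for Counting, your case (b) reports a triangle with two $E_s$-edges twice (once by each non-apex vertex); the paper's ID-based tie-breaking resolves this.
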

\begin{proof}
The underlying graph is $G=(V,E)$. We set the parameter $\delta=1/2$. By Theorem~\ref{th:main}, we compute an
$\ndel$-decomposition $E=E_m\cup E_s \cup E_r$ using $\tilde{O}(n^{1-\delta})$ rounds. 
We divide the task of enumerating triangles into three
cases.  By ensuring that every triangle is output 
by exactly one vertex, this algorithm also solves Triangle \emph{Counting}.

\paragraph{Case 1: All Triangles Intersecting $E_s$.}
We handle this case as follows. By Condition (b) of Definition~\ref{def:decomp}, 
$E_s=\bigcup_{v\in V} E_{s,v}$, 
where $\{E_{s,v}\}$ defines an acyclic $n^\delta$-orientation.
We let each $v$ announce $E_{s,v}$ to all its neighbors,
in $O(n^{\delta})$ time. 
For the Triangle Counting application
it is important that every triangle $\{x,y,z\}$ intersecting $E_s$
be reported by exactly \emph{one} vertex.
If $(x,y)$ and $(x,z)$ are oriented and $\ID(y)<\ID(z)$, 
then $y$ detects and reports the triangle.  
If $(x,z)$ is oriented, 
$\{x,y\}$ is unoriented, 
and $\{y,z\}$ is unoriented or
oriented as $(z,y)$, then $y$
detects and reports the triangle.  
If $(x,z), (y,z)$ are oriented but $\{x,y\}$ is not, 
and $\ID(y)<\ID(x)$, $y$ reports the triangle.\footnote{Most of these cases
do not occur in the parital orientations produced by our algorithm; 
nonetheless, they can occur in arbitrary partial acyclic orientations.}

\paragraph{Case 2: Some Triangles Intersecting $E_m$.}
Consider a single connected  component $\Gin=(\Vin,\Ein)$ 
induced by $E_m$, which has mixing time $n^{o(1)}$.
We classify vertices in $\Vin$ as \emph{good} or \emph{bad}
depending on whether they naturally satisfy Condition (i)
of Theorem~\ref{le:mainlemma}.  A vertex is good
if $\degin(v) \ge \deg_{E_r}(v)$.  Let $\Eout$
be the subset of $E_r$-edges incident to \emph{good} vertices in $\Vin$,
and let $E_r^{\text{new}}$ be
the subset of $E_m$-edges incident to \emph{bad} vertices in $\Vin$.
We now apply Theorem~\ref{le:mainlemma}
to enumerate/count all triangles in the edge set $\Ein\cup\Eout$.
(Because triangles contained in $E_r^{\text{new}}$ will also 
be found in Case 3, the Triangle \emph{Counting} algorithm 
should refrain from including these in the tally for Case 2.)

\paragraph{Case 3: Triangles Contained in $E_r^{\operatorname{new}}\cup E_r$.}
Let $E_r^{\text{new}}$ denote the union of all such sets, 
over all components in $E_m$.  Since each edge in $E_r^{\text{new}}$
can be charged to an endpoint of an edge in $E_r$,
$|E_r^{\text{new}} \cup E_r| \leq 3|E_r| = |E|/2$.
We apply the algorithm recursively to the graph 
induced by $E_r^{\text{new}} \cup E_r$.  The depth of the recursion
is obviously at most $\log_2 m$.

\paragraph{Round Complexity.}
Computing an $\ndel$-decomposition $E=E_m\cup E_s \cup E_r$ takes $\tilde{O}(n^{1-\delta})$ rounds. 
The algorithm for Case~1 takes $O(\ndel)$ rounds.
The algorithm for Case~2 takes $O(n^{1/3+o(1)})$ rounds.
The number of recursive calls (Case~3) is $\log m$.
Thus, the overall round complexity 
is 
\[
\log m \cdot  \left( {O}(\ndel)+ \tilde{O}(n^{1-\delta})+O(n^{1/3+o(1)}) \right) = \tilde{O}(n^{1/2}).\qedhere
\]
\end{proof}

\subsection{Subgraph Enumeration}
In this section we show that Corollary~\ref{cor:triangle} can be extended to enumerating  $s$-vertex subgraphs in $O(n^{(s-2)/s+o(1)})$ rounds. Note that the $\Omega(n^{1/3}/\log n)$ lower bound for triangle enumeration on \Erdos-\Renyi{} graphs
$\mathcal{G}(n, 1/2)$~\cite{IzumiL17} can be generalized to an $\Omega(n^{(s-2)/s}/\log n)$ lower bound for enumerating $s$-vertex subgraphs.  This implies that Theorem~\ref{th:subgraphlisting} is nearly optimal on 
$\mathcal{G}(n, 1/2)$.

\begin{theorem}\label{th:subgraphlisting}
Let $s = O(1)$ be any constant. Given a graph $G$ of $n$ vertices with $\mix(G) = n^{o(1)}$, we can list all $s$-vertex subgraphs of $G$ in $O(n^{(s-2)/s+o(1)})$ rounds, w.h.p., in the $\CONGEST$ model.
\end{theorem}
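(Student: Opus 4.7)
The plan is to generalize the triangle enumeration scheme of Theorem~\ref{le:mainlemma} from triads to $s$-tuples. Let $\delta = 2^{\lfloor\log(2m/n)\rfloor}$ be the (rounded) average degree and write $\deg(v) = k_v\delta$ as before. We first handle an easy case: if some vertex $v^\star$ has degree at least $\zeta = m/(C\, n^{(s-2)/s}\log n)$ for a suitable constant $C$, apply Theorem~\ref{th:mixing} to route all edges of $G$ to $v^\star$ in $\mix(G)\cdot 2^{O(\sqrt{\log n})}\cdot m/\zeta = n^{(s-2)/s+o(1)}$ rounds, after which $v^\star$ enumerates locally. Henceforth assume the maximum degree is at most $\zeta$.

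The main step is a randomized partition $V = V_1\cup\cdots\cup V_{n^{1/s}}$ where each vertex independently picks $r_v \in [1,n^{1/s}]$ uniformly and broadcasts $r_v$ to neighbors. The set of $s$-tuples
\[
\mathscr{T} = \bigl\{(j_1,\ldots,j_s)\ \big|\ 1\le j_1\le\cdots\le j_s\le n^{1/s}\bigr\}
\]
has size less than $n$. Using the ID reassignment of Lemma~\ref{le:knowaboutlabel}, we distribute $\mathscr{T}$ to the vertices: each vertex $v$ of class $i\ge 1$ is assigned $2^i/\delta$ consecutive tuples (so $\Theta(k_v)$ tuples in total), which can be computed locally since every vertex knows the ID-to-class mapping. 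A vertex assigned $(j_1,\ldots,j_s)$ is responsible for collecting all edges of $G$ with both endpoints in $V_{j_a}\cup V_{j_b}$ for each pair $a<b$, and then reporting every $s$-vertex subgraph $(x_1,\ldots,x_s)$ with $x_a\in V_{j_a}$; exactly-one-reporting is enforced via ordered tuples, as in the triangle case. Each vertex then sends each incident edge $(u,v)$, tagged with $(r_u,r_v)$, to every vertex handling an $s$-tuple that contains $\{r_u,r_v\}$; there are at most $\binom{s}{2}\cdot (n^{1/s})^{s-2} = O(n^{(s-2)/s})$ such tuples, so $v$ sends $O(\deg(v)\cdot n^{(s-2)/s})$ messages deterministically.

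The main obstacle is the concentration bound on received messages. I would generalize Lemma~\ref{le:subgraphedge} (applied to pairs $V_{j_a}\cup V_{j_b}$ formed by independent sampling with probability $p = 2n^{-1/s}$) to show that, w.h.p., $|E(V_{j_a},V_{j_b})| = O(mn^{-2/s})$ for every $1\le j_a\le j_b\le n^{1/s}$. The combinatorial moment calculation in Lemma~\ref{le:subgraphedge} goes through essentially verbatim: the hypothesis $\Delta \le \overline{m}p/(20\log\overline{n})$ is exactly guaranteed by the easy-case assumption with $p=n^{-1/s}$, and $p^2 \overline{m}\ge n^{(s-2)/s}$ dominates $400\log^2\overline{n}$. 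A union bound over the $O(n^{2/s})$ pairs of parts (which fits since $s=O(1)$) then gives the conclusion uniformly. Consequently, a vertex handling $O(k_v)$ tuples receives
\[
O(k_v)\cdot \binom{s}{2}\cdot O\bigl(mn^{-2/s}\bigr) \;=\; O\bigl(\deg(v)\cdot n^{(s-2)/s}\bigr)
\]
messages, matching the outgoing bound.

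Finally, because every vertex is the source and destination of at most $O(\deg(v)\cdot n^{(s-2)/s})$ messages, Theorem~\ref{th:mixing} routes everything in $\mix(G)\cdot 2^{O(\sqrt{\log n})}\cdot n^{(s-2)/s} = n^{(s-2)/s+o(1)}$ rounds, after which each responsible vertex enumerates $s$-vertex subgraphs inside the induced subgraph on its collected edges and reports them according to the tuple assignment. The dominant cost at every stage is $n^{(s-2)/s+o(1)}$, establishing the claimed bound.
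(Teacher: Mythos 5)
Your proposal is correct and follows essentially the same route as the paper: reduce to the bounded-degree case via the high-degree "easy case," partition $V$ into $n^{1/s}$ random parts, allocate the $s$-tuples to vertices in proportion to $k_v$ using the ID reassignment of Lemma~\ref{le:knowaboutlabel}, bound $|E(V_i,V_j)|$ by the moment calculation of Lemma~\ref{le:subgraphedge} with $p = \Theta(n^{-1/s})$ (this is exactly the paper's Lemma~\ref{le:subgraph_edgenumber}), and route with Theorem~\ref{th:mixing}. The send/receive bounds of $O(\deg(v)\cdot n^{(s-2)/s})$ match the paper's accounting, so no further changes are needed.
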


This theorem is proved using a variant of Lemma~\ref{le:mainlemma}, also with $\mathcal{G}=G$ and $|\NG(v)|=\degree(v)$. 
The proof of Theorem~\ref{th:subgraphlisting} is almost the same as that of Lemma~\ref{le:mainlemma}, and so in what follows we only highlight the difference.

Similarly, we assume the maximum degree is at most $m/\left(20n^{(s-2)/s}\log n\right)$, since otherwise we can apply Theorem~\ref{th:mixing} to let $v$ learn the entire set $E$ in $O(n^{(s-2)/s+o(1)})$ rounds, and we are done after that. 

We partition $V$ into $n^{1/s}$ subsets $V_1,\ldots,V_{n^{1/s}}$. Instead of considering triads, here we consider $s$-tuples: 
$\left\{(i_1, \ldots ,i_s) \mid 1 \le i_1 \le  \ldots \le i_s \le n^{1/s}\right\}$.
After a vertex $v$ learns the entire edge set $\cup_{j_1,j_2\in[1,s]}E(V_{i_{j_1}},V_{i_{j_2}})$, it has ability to list all $s$-vertex subgraphs in which the $j$th vertex is in $V_{i_j}$.
We prove a variant of Lemma~\ref{le:edgenumber}, as follows.

\begin{lemma}\label{le:subgraph_edgenumber}
W.h.p., $|E(V_i,V_j)|\leq 24m/n^{2/s}$ for all $i,j\in[1,n^{1/s}]$.
\end{lemma}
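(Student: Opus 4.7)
The plan is to mimic the proof of Lemma~\ref{le:edgenumber} almost verbatim, substituting the sampling probability $p=n^{-1/s}$ (instead of $n^{-1/3}$) and invoking Lemma~\ref{le:subgraphedge} on the subgraph of $\Gin$ induced by the appropriate sampled vertex set. Recall that $V_1,\ldots,V_{n^{1/s}}$ is generated by letting each vertex $v\in V$ independently pick its index uniformly from $[1,n^{1/s}]$, which is exactly the sampling model needed to apply Lemma~\ref{le:subgraphedge}.

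First I would handle the diagonal case $i=j$. Here $V_i$ is obtained by including each vertex independently with probability $p=n^{-1/s}$, so Lemma~\ref{le:subgraphedge} applies with $\bar{m}=m$ and $\bar{n}=n$. I need to verify the two hypotheses of Lemma~\ref{le:subgraphedge}. The degree assumption is immediate from the maximum-degree reduction at the start of the section (which forces the maximum degree to be at most $m/(20n^{(s-2)/s}\log n)=mp/(20\log n)$). The second hypothesis $p^2 m\ge 400\log^2 n$ holds whenever $m\ge 400 n^{2/s}\log^2 n$; if instead $m$ is below this threshold, then the target bound $24m/n^{2/s}$ is already $O(\log^2 n)$, a regime in which the lemma can be proved by an easier direct Chernoff argument (or, alternatively, handled trivially by the easy-case preprocessing that lets one vertex collect the whole graph). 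Applying Lemma~\ref{le:subgraphedge} yields $|E(V_i)|\le 6m/n^{2/s}$ with probability at least $1-10\log n/n^5$.

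Next I would handle the off-diagonal case $i\ne j$. Now $V_i\cup V_j$ is an independent sample with inclusion probability $2n^{-1/s}$, so Lemma~\ref{le:subgraphedge} applies with $p=2n^{-1/s}$ and gives $|E(V_i\cup V_j)|\le 6(2n^{-1/s})^2 m = 24m/n^{2/s}$ with the same failure probability. Since $E(V_i,V_j)\subseteq E(V_i\cup V_j)$, this yields the desired bound.

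Finally, I would take a union bound over the at most $\binom{n^{1/s}}{2}+n^{1/s}\le n^{2/s}$ pairs $(i,j)$. Since each pair fails with probability at most $10\log n/n^5$, the total failure probability is at most $10\log n\cdot n^{2/s-5}=o(1/n^3)$, which is negligible. The main (and only) obstacle worth checking is the low-$m$ regime where the hypothesis $p^2 m \ge 400\log^2 n$ of Lemma~\ref{le:subgraphedge} just barely fails; this is resolved by the aforementioned dichotomy, so the overall argument goes through cleanly and the rest of the proof of Theorem~\ref{th:subgraphlisting} continues in direct analogy with Lemmas~\ref{le:receivemessage} and~\ref{le:sendmessage}, now with $n^{1/3}$ replaced by $n^{(s-2)/s}\cdot\binom{s}{2}$-style factors absorbed into the $n^{o(1)}$.
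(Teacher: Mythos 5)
Your proof is correct and follows essentially the same route as the paper: apply Lemma~\ref{le:subgraphedge} with $p=n^{-1/s}$ (and $2n^{-1/s}$ off-diagonal) and repeat the analysis of Lemma~\ref{le:edgenumber}, then union bound. Your extra dichotomy for the low-$m$ regime is harmless but unnecessary, since $m\geq n-1$ gives $p^2m\geq n^{(s-2)/s}\geq 400\log^2 n$ directly, which is how the paper dispatches that hypothesis.
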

\begin{proof}
We set $p=n^{-1/s}$. The maximum degree is at most $m/\left(20n^{(s-2)/s}\log n\right)\leq  mp/20\log n$, and $p^2m\geq m/n^{2/s}\geq 400\log^2n$. By applying Lemma~\ref{le:subgraphedge} and use the same analysis in Lemma~\ref{le:edgenumber}, we  conclude this lemma.
\end{proof}

\begin{proof}[Proof of Theorem~\ref{th:subgraphlisting}]
Here we only consider the time complexity to delivery all messages.
Consider a vertex $v$. If $k_v<1/2$, then $v$ receives no message.  Otherwise $v$ is responsible for between $2k_v$ and $4k_v$ tuples, 
and $v$ collects $E(V_i,V_j)$ for at most $4s^2k_v$ pairs $(V_i, V_j)$. By Lemma~\ref{le:subgraph_edgenumber}, 
w.h.p., $|E(V_i,V_j)|\leq 24m/n^{2/s}$ for all $i,j$. Hence the number of edges $v$ received is at most $24m/n^{2/s}\cdot 4s^2k_v=O(\degree(v)\cdot n^{(s-2)/s})$. 

Note that each vertex $v$ sends at most $O(\degree(v)n^{(s-2)/s})$ messages since for each incident edge $e$ of $v$, there are at most $O(s^2n^{(s-2)/s})$ tuples involving $e$. 
By Theorem~\ref{th:mixing}, the delivery of all messages can be done in $O(n^{(s-2)/s+o(1)})$ rounds, w.h.p.
\end{proof}

\section{Conclusion}\label{sect:conclusion}

In this paper we have shown that all variants of Triangle Detection, Enumeration, and Counting can be solved in $\tilde{O}(n^{1/2})$ rounds in the $\CONGEST$ model.  Moreover, we have shown that with better distributed graph partitioning technology, our algorithm can be implemented in $O(n^{1/3+o(1)})$ rounds, nearly achieving the $\Omega(n^{1/3}/\log n)$ $\CONGEST$
lower bound of Izumi and LeGall~\cite{IzumiL17}.\footnote{Of course, their lower bound holds for dense graph with $\Theta(n^2)$ edges; it is unclear whether $\tilde{\Omega}(n^{1/3})$ is a valid lower bound for all possible $\Delta$ in the relevant range $n^{1/3} < \Delta < n$.
The lower bound clearly does not hold when $\Delta \ll n^{1/3}$.}
Thus, the main question left open by our work is whether such an ideal graph decomposition can be computed efficiently,
in $O(n^{1/3+o(1)})$ time or even $O(\polylog(n))$ time.
We conjecture that this is, in fact, possible.

\begin{conj}\label{conjecture}
In the $\CONGEST$ model, the edge set $E$ can be partitioned
into $E = E_m \cup E_r$ in $\polylog(n)$ time such that
$|E_r| < |E|/2$ and the components induced by $E_m$
have conductance $\Omega(1/\polylog(n))$ and hence
$O(\polylog(n))$ mixing time.
\end{conj}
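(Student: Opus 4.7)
The plan is to combine the cut-matching game framework of Khandekar--Rao--Vazirani with a distributed ``parallel local clustering'' primitive, then recurse. Recall that the cut-matching game reduces the problem of either finding a balanced sparse cut or certifying $\Omega(1/\polylog n)$ conductance to $O(\log^2 n)$ \emph{rounds}, where each round requires computing a single-commodity approximate max-flow (or a balanced sparse-cut oracle) on the current subgraph. If we can implement each round in $\polylog(n)$ CONGEST rounds, then one invocation of the game yields either a balanced cut (both sides have volume $\Omega(\vol(V)/\polylog n)$) or a conductance certificate, all in $\polylog(n)$ time.

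Given such a primitive, the overall decomposition algorithm would proceed recursively in a divide-and-conquer manner: at each level we invoke the cut-matching game on each current component in parallel; if it returns a balanced sparse cut, we add those edges to $E_r$ and recurse independently on the two sides; if it certifies expansion, that component is frozen and joins $E_m$. Since every cut is balanced, the recursion depth is $O(\log n)$, and the total ``removed'' volume telescopes to at most $|E|/2$ (by paying $1/\polylog(n)$ of the smaller side's volume per cut and summing a geometric series, as in the sequential expander decomposition). The fact that disjoint components can recurse in parallel without interference is ensured by the usual CONGEST convention of identifier-based scheduling, analogous to the $\mathcal{A}^\ast$ framework of Section~\ref{se:partition}, but with the key difference that each invocation now makes global progress (halving the component) rather than $n^\delta$-bounded progress.

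The main obstacle is the inner primitive: a polylogarithmic-time CONGEST algorithm that, given a subgraph, either returns a balanced sparse cut or executes one round of the cut-matching game (effectively an approximate max-flow between a random source partition and a random sink partition). General max-flow in CONGEST has an $\tilde{\Omega}(\sqrt{n}+D)$ lower bound, so we cannot hope for this primitive on arbitrary instances; instead one must exploit the fact that we only ever invoke it on graphs that are either already near-expanders (where Theorem~\ref{th:mixing}-style routing gives $\polylog(n)$-round all-to-all communication and hence cheap flows) or have a balanced sparse cut that short local random walks can detect. The natural strategy is a ``bootstrapped'' local-walk algorithm: launch short truncated random walks from $\Theta(\log n)$ random sources in parallel, using a careful congestion analysis akin to Lemma~\ref{le:haha} to ensure that no vertex is overloaded; each walk either finds a sparse low-volume cut (contributed to $E_r$ and peeled off) or contributes to the matching round of the cut-matching game via its endpoint distribution.

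The hard part will be controlling congestion in the combined flow/walk step when the host graph is only \emph{nearly} an expander rather than an actual expander, since the routing guarantee of Theorem~\ref{th:mixing} degrades gracefully with mixing time but not obviously gracefully enough for a recursive argument. A plausible way around this is to use a ``boundary-linked'' variant of the decomposition, where each component is required to expand even after accounting for its $E_r$-boundary; this is exactly the kind of robustness needed to iterate the cut-matching game log-many times without the certified conductance lower bound deteriorating by more than a $\polylog(n)$ factor between levels of the recursion. Proving that such a boundary-linked cut-matching game can be run in $\polylog(n)$ CONGEST rounds is, I believe, the crux of the conjecture.
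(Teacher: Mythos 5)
The statement you are addressing is Conjecture~\ref{conjecture}; the paper does not prove it, and explicitly leaves it as the main open problem. Your proposal is not a proof either: it is a research plan that, by your own admission, bottoms out at an unproven primitive (``Proving that such a boundary-linked cut-matching game can be run in $\polylog(n)$ CONGEST rounds is, I believe, the crux''). Reducing the conjecture to a different unproven polylog-round primitive does not discharge it, so there is a genuine gap --- indeed the gap is the entire content of the conjecture.

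Beyond that, two specific steps in the outline would fail as stated. First, the cut-matching game does \emph{not} guarantee a balanced cut: its standard guarantee is ``either a (possibly very unbalanced) sparse cut, or a certificate that a large-volume part is an expander.'' When the returned cut is unbalanced, your ``recursion depth $O(\log n)$, telescoping geometric series'' argument collapses; handling this requires a trimming/charging argument, and the difficulty of doing that in $\CONGEST$ is precisely why the paper settles for the three-way partition $E = E_m \cup E_s \cup E_r$ with the $n^\delta$ minimum-degree preprocessing and only achieves $\tilde{O}(n^{1-\delta})$ rounds (Theorem~\ref{th:main}): a component can have $\tilde{\Theta}(n)$ successive sparse cuts each of volume $\tilde{O}(1)$, and some mechanism other than peeling them one at a time is needed. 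Second, the congestion control you invoke ``akin to Lemma~\ref{le:haha}'' does not transfer: that lemma bounds overlap of \emph{truncated} walks whose truncation threshold is tied to the target cut volume $2^b$, and the resulting subroutine (Lemma~\ref{le:lowconductance}) only finds a single sparse cut in $\poly(\log n,1/\phi)$ time when one exists --- it says nothing about simultaneously extracting the up to $\Omega(n)$ disjoint sparse cuts one must remove per level of a polylog-depth recursion. Likewise, Theorem~\ref{th:mixing} only gives cheap routing \emph{after} a component is certified to mix, so it cannot be used to implement the flow steps on the intermediate, not-yet-certified subgraphs without a circularity. Your instinct about the right toolbox (cut-matching games, boundary-linked decompositions) is reasonable, but as written the proposal restates the conjecture rather than proving it.
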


If Conjecture~\ref{conjecture} is true, then the upper bound for triangle enumeration, detection, and counting is roughly
\[
\min\left\{\, \tilde{O}(\Delta), \;\, n^{1/3+o(1)}\, \right\},
\]
i.e., depending on the magnitude of $\Delta$,
we should execute one of two algorithms.
For which edge-densities and which problems (counting, detection, enumeration) can this upper bound be improved?
Is there a \emph{third} algorithm that is substantially
superior to these two in some regime?

\bibliographystyle{abbrv}
\bibliography{ref}

\end{document}